\documentclass[a4paper,11pt]{article}

\usepackage[T1]{fontenc}
\usepackage[utf8]{inputenc}
\usepackage{fullpage}
\usepackage{times}
\usepackage{microtype}
\usepackage{amsmath,amssymb,amsthm,mathtools}
\usepackage{booktabs}
\usepackage{float}
\usepackage{enumitem}
\usepackage[small]{caption}
\usepackage{authblk}
\usepackage{natbib}
\usepackage[hidelinks]{hyperref}
\usepackage[capitalize,nameinlink]{cleveref}
\usepackage{xcolor}

\newtheorem{theorem}{Theorem}[section]
\newtheorem{corollary}[theorem]{Corollary}
\newtheorem{proposition}[theorem]{Proposition}
\newtheorem{lemma}[theorem]{Lemma}

\theoremstyle{definition}
\newtheorem{definition}[theorem]{Definition}

\newcommand{\cA}{\mathcal{A}}

\newcommand{\IDO}{\mathrm{IDO}}

\DeclareMathOperator{\rank}{rank}
\allowdisplaybreaks
\title{\bf Online Fair Division with Frequency Predictions}

\author[1]{Tzeh Yuan Neoh}
\author[2]{Nicholas Teh}
\affil[1]{Harvard University, USA}
\affil[2]{University of Oxford, UK}

\date{}

\begin{document}

\maketitle

\begin{abstract}
    We study online fair division of indivisible goods among $n$ agents with frequency predictions: before any good arrives, the algorithm learns the multiset of each agent's values for the goods, but neither the arrival order nor which values of different agents belong to the same good. Neoh et al.\ (ICML 2026) showed that such predictions allow every share-based guarantee that is achievable offline, and left open whether a positive approximation of envy-freeness up to one good (EF1) is possible. We show that deterministic online algorithms can guarantee $1/2$-EF1 for three agents and $(3-\sqrt5)/2$-EF1 for four agents, whereas for five or more agents, no online algorithm can guarantee any positive approximation of EF1, even with $n+1$ goods and identical predictions. Together with the known result for two agents, a positive approximation of EF1 is therefore possible if and only if $n\leq4$. We also rule out any positive approximation of EFX for three or more agents, even when the valuations take only two distinct positive values. Next, we use the meta-algorithm of Neoh et al. (ICML 2026) to compute an epistemic EFX (EEFX) allocation in polynomial time for any $n$, and hence a $1/(n-1)$-EF$(n-2)$ allocation for $n\geq3$, whereas algorithms that fix a picking sequence in advance can guarantee a positive approximation of EF$k$ if and only if $n\leq k+2$. Finally, for noisy predictions, we give algorithms that return EEFX allocations when the predictions are exact, and guarantee $(1-2\varepsilon+O(\varepsilon^2))$-EEFX under underestimates and, for $n\geq3$, $(1-2(n-1)\varepsilon+O(\varepsilon^2))$-EEFX under overestimates; the linear terms are the best possible, and larger overestimates rule out any positive approximation for such algorithms.

\end{abstract}

\section{Introduction}

Fair division, the problem of allocating scarce resources among interested agents in a fair manner, has been studied for decades in economics and mathematics, and has received substantial attention in computer science \citep{Moulin2003,AmanatidisEtAl2023}. In many applications, the resources do not become available all at once, but arrive over time and must be allocated upon arrival. For example, a food bank must distribute donated food to charities as the donations come in \citep{AleksandrovEtAl2015}, and an online platform must assign advertising slots to advertisers as users visit the platform \citep{MehtaEtAl2007}. This has led to the study of \emph{online fair division}, where indivisible goods arrive one at a time and each good must be allocated irrevocably when it arrives \citep{AleksandrovWalsh2020}. As in most of the literature, we evaluate fairness on the final allocation, once all goods have been allocated.

In order to reason about fairness, we must define when an allocation is considered to be ``fair''. A natural notion is \emph{envy-freeness}, which requires every agent to like her own bundle at least as much as the bundle of any other agent. However, envy-freeness cannot always be satisfied: if a single valuable good is to be allocated between two agents, the agent who does not receive the good will envy the other agent. A widely studied relaxation is \emph{envy-freeness up to one good (EF1)}, which means that any envy that an agent may have toward another agent can be eliminated by removing some good in the latter agent's bundle \citep{LiptonEtAl2004,Budish2011}. In the offline setting, where all goods are known in advance, an EF1 allocation always exists and can be computed in polynomial time \citep{LiptonEtAl2004,CaragiannisEtAl2019}. In contrast, without any information about future goods, no deterministic online algorithm can guarantee a positive approximation of EF1 \citep{NeohPetersTeh2026}, and the same holds for proportionality up to any fixed number of goods, even when the number of goods is known in advance \citep{neoh2026closinggaps}.

In light of this, \citet{NeohPetersTeh2026} investigated which information about future goods allows online algorithms to achieve fairness. In particular, they considered \emph{frequency predictions}, where the algorithm knows in advance the multiset of values that each agent has for the goods, but neither the order in which the goods arrive nor which values of different agents belong to the same good. They showed that frequency predictions suffice for every guarantee based on \emph{shares}, i.e., thresholds that depend only on an agent's values and the number of agents, as long as the guarantee can be achieved offline. For instance, an agent's \emph{maximin share} \citep{Budish2011} is the best value that she can secure by dividing the goods into $n$ bundles and getting the worst bundle; with frequency predictions, a constant fraction of every agent's maximin share can be guaranteed for any number of agents \citep{AkramiGarg2024,NeohPetersTeh2026}. Envy-based notions such as EF1 are more demanding, since an agent compares her bundle with the bundles of other agents, and these comparisons depend on how the values of different agents are aligned. For example, if two agents each value one good at $10$ and another good at $1$, the predictions do not reveal whether both agents value the same good at $10$. \citet{NeohPetersTeh2026} showed that EFX, which requires any envy that an agent has toward another agent to disappear as soon as any good in the latter agent's bundle is removed, can be guaranteed for two agents, but that no positive approximation of EFX is possible for three or more agents. They left the existence of EF1 allocations for three or more agents as a main open problem: while they conjectured that exact EF1 cannot be guaranteed, they asked whether a positive approximation of EF1 is possible.

In this paper, we answer the latter question for every number of agents. Perhaps surprisingly, the answer depends on the number of agents: a positive approximation of EF1 is possible for up to four agents, but not for five or more agents.

\subsection{Our Results}

We follow \citet{NeohPetersTeh2026} and assume that the agents have additive valuations, and that the arrival order and the values of the goods are chosen by an adaptive adversary subject to the predictions. We describe the model formally in \cref{sec:preliminaries}. Our main result is that, given exact frequency predictions, there exist deterministic online algorithms that guarantee $1/2$-EF1 for three agents and $(3-\sqrt5)/2$-EF1 for four agents, whereas for every $n\geq5$ and every $\alpha>0$, no online algorithm guarantees $\alpha$-EF1 for $n$ agents. Since EFX can be guaranteed for two agents \citep{NeohPetersTeh2026}, a positive approximation of EF1 is therefore possible if and only if $n\leq4$. We establish the result for three agents in \cref{sec:three}, the impossibility for five or more agents in \cref{sec:lower}, and the result for four agents in \cref{sec:four}.

In \cref{sec:three}, we investigate the meta-algorithm of \citet{NeohPetersTeh2026}, which computes an allocation of an offline instance in which all agents rank the goods in the same order, and implements the corresponding picking sequence online. They showed that each agent receives at least her value in the offline allocation. We strengthen this by showing that each good in her offline bundle can be matched to a different good in her online bundle that she values at least as much (\cref{lem:picking-transfer}). As a consequence, if the offline allocation is EFX, then the online allocation is \emph{epistemic EFX (EEFX)} \citep{CaragiannisEtAl2023}, which means that for each agent, the goods outside her bundle can be partitioned into $n-1$ bundles so that she satisfies EFX toward each of them. Hence, an EEFX allocation can be computed online in polynomial time for any number of agents (\cref{thm:eefx}), even though no positive approximation of EFX is possible for three or more agents \citep{NeohPetersTeh2026}. For $n\geq3$, the same algorithm guarantees $\frac1{n-1}$-EF$(n-2)$, where EF$k$ allows up to $k$ goods to be removed from the envied agent's bundle; in particular, this gives $1/2$-EF1 for three agents (\cref{thm:three}). On the other hand, if the picking sequence is chosen from the predictions before any good arrives, the meta-algorithm can guarantee a positive approximation of EF$k$ if and only if $n\leq k+2$ (\cref{thm:fixed-efk}). In particular, this approach cannot give a positive approximation of EF1 for four agents.

Next, in \cref{sec:lower}, we prove the impossibility for five or more agents (\cref{thm:zero}). Our construction uses only $n+1$ goods, the same predictions for all agents, and normalized valuations in which every good has positive value; since every allocation that gives each agent at most one good is EF1, the number of goods cannot be reduced. Moreover, every allocation in our construction is proportional up to one good, so the impossibility holds even though a natural share-based guarantee is satisfied. We also strengthen the impossibility result of \citet{NeohPetersTeh2026} for EFX with three or more agents: no positive approximation is possible even when each agent values one good at a larger value and all other goods at the same smaller value, and even when the algorithm knows that the good of larger value is one of the last two goods (\cref{thm:two-value-efx}).

In \cref{sec:four}, we return to four agents and design an algorithm that does not fix a picking sequence in advance (\cref{thm:four}). The algorithm selects a few goods of large value as they arrive and allocates them so that, for any two agents $i$ and $j$, at most one good in $j$'s bundle is worth more to $i$ than the most valuable selected good that $i$ receives. Each of the remaining goods is allocated upon arrival using concave potential functions computed from the predictions. We do not know whether the factor $(3-\sqrt5)/2\approx0.382$ is the best possible for four agents.

Finally, in \cref{sec:inaccurate}, we consider noisy frequency predictions. Following \citet{NeohPetersTeh2026}, we assign an unused predicted value to each arriving good and run an algorithm for exact predictions on these values. We call an algorithm \emph{consistent} if it returns an EEFX allocation whenever the predictions are exact. For \emph{underestimates}, where each agent's true values are between her predicted values and $1+\varepsilon$ times these values when both are sorted in nonincreasing order, we obtain a consistent algorithm that guarantees $(1-2\varepsilon+O(\varepsilon^2))$-EEFX. For \emph{overestimates}, where the roles of the true and predicted values are reversed, we guarantee $(1-2(n-1)\varepsilon+O(\varepsilon^2))$-EEFX for $n\geq3$. In both cases, the linear term is the best possible for consistent algorithms (\cref{thm:eefx-relative}); hence, the loss from small errors does not depend on $n$ for underestimates, but grows linearly with $n$ for overestimates. Interestingly, while a positive approximation of EEFX is possible for underestimates of any size, overestimates by a factor of $3/2$ already rule out any positive approximation for $n\geq3$ (\cref{thm:eefx-over-impossible}). We also give additive guarantees for EF$k$ in terms of the difference between the true and predicted values (\cref{prop:inaccurate}).
\subsection{Further Related Work}

\paragraph{Online fair division.}
\citet{NeohPetersTeh2026} introduced frequency predictions for online fair division and obtained share-based guarantees; their open problem on EF1 motivates our work. Without any information about the future, the best possible approximation of proportionality up to one good decreases with the number of goods \citep{neoh2026closinggaps,DuSun2026}. \citet{ZhouBaiWu2023} studied the setting in which the total value of each agent is known in advance, while \citet{choo2026approxproponline} considered predictions of the maximum values of the goods, and \citet{neoh2026closinggaps} showed that such predictions allow a constant approximation of proportionality up to one good for any number of agents. Other work has considered restricted classes of valuations \citep{AmanatidisEtAl2025,WangWei2026}, allowed changes to past allocation decisions \citep{HeEtAl2019,HuangTeh2026} or a buffer that delays some of them \citep{AmanatidisEtAl2026Buffers}, imposed budget constraints \citep{CohenTehGoldbergWooldridge2026}, studied randomized algorithms against an oblivious adversary \citep{CohenTehWooldridge2026,ChenLong2026}, and bounded the envy in terms of the number of goods \citep{BenadeEtAl2018,HalpernEtAl2025}. In house allocation, where each agent receives exactly one good, subsidies can be used to eliminate envy both offline \citep{ChooEtAl2024Subsidy} and online \citep{TehEtAl2026House}. Fairness over time has also been studied in temporal fair division, where the allocation should be fair after every round \citep{CooksonEbadianShah2025,ElkindEtAl2025,ChoiLiTeh2026} or the envy is accumulated over all rounds \citep{GoldbergRobinsonTeh2026}, as well as in repeated matching \citep{LimNeohTeh2026Matching}, temporal slot assignment \citep{ElkindKraiczyTeh2022}, and temporal voting \citep{Lackner2020,BulteauEtAl2021,ElkindNeohTeh2024,ZechEtAl2024,ElkindEtAl2025Verifying,ElkindNeohTeh2025,PhillipsEtAl2026,Teh2026Price,Teh2026FJR,HazonSchmerlerTeh2026}; see \citet{ElkindObraztsovaTeh2024} and \citet{Teh2024} for overviews. In our setting, goods are allocated irrevocably, and the fairness of the final allocation is evaluated against an adaptive adversary.

\paragraph{Offline fair division.}
In the offline setting, an EF1 allocation can be computed by envy-cycle elimination or by the round-robin algorithm \citep{LiptonEtAl2004,CaragiannisEtAl2019}; for additive valuations, an allocation with maximum Nash welfare is moreover EF1 and Pareto optimal \citep{CaragiannisEtAl2019}, whereas these two properties are not always compatible for more general valuations \citep{Teh2026Strict}. The maximum Nash welfare rule has also been characterized by the fairness guarantees that it provides \citep{Suksompong2023,YuenSuksompong2023,CelineSuksompongTeh2026}. The reduction to instances in which all agents rank the goods in the same order was introduced by \citet{BouveretLemaitre2016}, and has since been used to approximate the maximin share \citep{BarmanKrishnamurthy2020,GargTaki2021,AkramiGarg2024}. While maximin share allocations do not always exist \citep{KurokawaProcacciaWang2018}, they do exist for a broad class of valuations that generalizes several domains with few distinct values \citep{Teh2026MMS}. Whether EFX allocations always exist is a major open problem; they are known to exist for three agents \citep{ChaudhuryGargMehlhorn2024} and when there are at most three more goods than agents \citep{Mahara2024}, in which case their number \citep{NeohTeh2025} and their welfare \citep{LimNeohTeh2026EFX} have also been studied. \citet{CaragiannisEtAl2023} introduced EEFX as a relaxation of EFX, and computed offline EEFX allocations by finding an EFX allocation of an ordered instance \citep{PlautRoughgarden2020} and matching its goods to the original goods. We prove the corresponding matching property for the online meta-algorithm. The strong EEFX share argument of \citet{AkramiEtAl2026}, together with the share-based result of \citet{NeohPetersTeh2026}, also implies that online EEFX is possible, but does not give a polynomial-time algorithm. Our construction computes both the allocation and the EEFX partitions in polynomial time. Our impossibility result for EFX also separates the online and offline settings: offline, when every value is one of two positive numbers, every allocation with maximum Nash welfare is EFX \citep{AmanatidisEtAl2021}, and an allocation that is both EFX and Pareto optimal can be computed in polynomial time \citep{GargMurhekar2023}. Offline fair division with binary valuations is also well understood: a version of maximum (weighted) Nash welfare is group-strategyproof \citep{HalpernEtAl2020,SuksompongTeh2022,SuksompongTeh2023}, and is the only rule that satisfies EF1, strategyproofness, and a few other natural axioms \citep{BrandlSuksompongTeh2026}.

\paragraph{Frequency predictions and learning-augmented algorithms.}
Predictions of the frequencies of values or types, without the arrival order, have been studied for online knapsack \citep{ImEtAl2021}, online bin packing \citep{AngelopoulosEtAl2023} and bin covering \citep{BergKamali2024}, online bipartite matching \citep{ChooEtAl2024}, and revenue management \citep{BalseiroKroerKumar2026}. In our setting, each agent has her own predicted multiset, so the algorithm does not know which values of different agents belong to the same good; our impossibility results rely on this uncertainty. Similarly, in online fair division with budget constraints, separate predictions of the values and the sizes of the goods do not suffice for strong fairness guarantees \citep{CohenTehGoldbergWooldridge2026}. More broadly, our work belongs to the literature on learning-augmented algorithms, which seeks guarantees that improve with the accuracy of the predictions \citep{PurohitSvitkinaKumar2018,LykourisVassilvitskii2021,MitzenmacherVassilvitskii2022}. \citet{BanerjeeEtAl2022} studied online Nash welfare maximization for divisible goods given predictions of each agent's total value, and \citet{BanerjeeEtAl2023} used such predictions for the online allocation of public goods. For indivisible goods, \citet{MelissourgosProtopapas2026} considered approximate EFX given a predicted value for each arrival position and a known bound on the error, and \citet{TehEtAl2026House} gave learning-augmented algorithms for online house allocation with subsidies. In contrast, our predictions do not reveal the arrival order, and we treat underestimates and overestimates separately; one-sided errors have also been considered for predictions of the maximum values of the goods \citep{neoh2026closinggaps}. We also extend the additive error analysis of \citet{NeohPetersTeh2026} to EF$k$. In a model where past allocation decisions may be changed as long as no affected agent is made worse off, \citet{HuangTeh2026} use predictions of each agent's values in nonincreasing order to carry over offline maximin share guarantees, with an additive loss bounded by the total difference between the sorted true and predicted values.

\section{Preliminaries}\label{sec:preliminaries}

We consider the model of \citet{NeohPetersTeh2026}. Let $N=[n]$ be the set of agents, where $n\geq2$ and we use $[z]$ to denote $\{1,\ldots,z\}$ for each positive integer $z$, and let $G=\{g_1,\ldots,g_m\}$ be the set of $m\geq1$ indivisible goods, indexed in order of arrival. Each agent $i$ has a nonnegative additive valuation $v_i$, i.e., $v_i(S)=\sum_{g\in S}v_i(g)$ for every $S\subseteq G$; the valuations are \emph{normalized} if $v_i(G)=1$ for every agent $i$. An \emph{allocation} $\cA=(A_1,\ldots,A_n)$ is an ordered partition of $G$ into $n$ bundles such that bundle $A_i$ is allocated to agent $i$.

\paragraph{Online setting and frequency predictions.}
Before any good arrives, the algorithm learns, for each agent $i$, the multiset $\{v_i(g):g\in G\}$ of her values for the goods; we call it her \emph{frequency multiset} and write its elements in nonincreasing order as $a_{i1}\geq\cdots\geq a_{im}$. When good $g_t$ arrives, the algorithm observes its value $v_i(g_t)$ for every agent $i$ and must allocate it to an agent irrevocably. The values of the goods are chosen by an adaptive adversary, who observes the previous allocation decisions but must respect the frequency multisets. An algorithm guarantees a fairness property if its final allocation satisfies the property against every such adversary; for randomized algorithms, we require this for every final allocation that can occur.

\subsection{Fairness Notions}

An allocation $\cA$ is said to be \emph{envy-free up to one good (EF1)} if for all distinct agents $i,j$ with $A_j\neq\varnothing$, there exists a good $g\in A_j$ such that $v_i(A_i)\geq v_i(A_j\setminus\{g\})$, and \emph{envy-free up to any good (EFX)} if $v_i(A_i)\geq v_i(A_j\setminus\{g\})$ for all distinct agents $i,j$ and every good $g\in A_j$. We include goods of zero value in these comparisons.\footnote{\citet{CaragiannisEtAl2019} required the EFX comparison only for goods of positive value. Since all values in our impossibility results are positive, these results hold under either definition.}

More generally, for a positive integer $k$ and a bundle $S\subseteq G$, let $D_i^{(k)}(S)$ denote agent $i$'s value for $S$ after removing her $k$ most valuable goods in $S$, where $D_i^{(k)}(S)=0$ if $|S|\leq k$; we write $D_i(S)=D_i^{(1)}(S)$. For $\alpha\in[0,1]$, an allocation is \emph{$\alpha$-EF$k$} if $v_i(A_i)\geq\alpha D_i^{(k)}(A_j)$ for all distinct agents $i,j$, and \emph{$\alpha$-EFX} if $v_i(A_i)\geq\alpha v_i(A_j\setminus\{g\})$ for all distinct agents $i,j$ and every $g\in A_j$. In particular, $1$-EF$1$ is equivalent to EF1. An algorithm guarantees a \emph{positive approximation} of EF$k$ if it guarantees $\alpha$-EF$k$ for some $\alpha>0$ that may depend on $n$ and $k$, but not on $m$ or the valuations.

\begin{definition}[\citealp{CaragiannisEtAl2023}]
Let $\alpha\in[0,1]$. A bundle $S\subseteq G$ is \emph{$\alpha$-EEFX for agent $i$} if there exists a partition $(Y_1,\ldots,Y_{n-1})$ of $G\setminus S$ such that $v_i(S)\geq\alpha\, v_i(Y_p\setminus\{g\})$ for every $p\in[n-1]$ and every $g\in Y_p$. An allocation $\cA$ is \emph{$\alpha$-EEFX} if $A_i$ is $\alpha$-EEFX for agent $i$ for every $i\in N$. We write EEFX for $1$-EEFX.
\end{definition}

In other words, an agent's bundle is EEFX for her if she would satisfy EFX toward every other agent after a suitable redistribution of the goods that she does not receive. The partition may be different for each agent, and some of its parts may be empty. Whether a bundle is EEFX for an agent depends only on the bundle and her valuation, and not on how the remaining goods are allocated. For two agents, EEFX is equivalent to EFX.

We also compare bundles via \emph{stochastic dominance} \citep[Definition~2]{BarmanEtAl2025}. For $S,S'\subseteq G$, we write $S'\succeq_i S$ if for every $t\geq0$, the number of goods in $S'$ that agent $i$ values at least $t$ is at least the number of such goods in $S$. Equivalently, $|S'|\geq|S|$, and for every $r\leq|S|$, agent $i$'s $r$th largest value for a good in $S'$ is at least her $r$th largest value for a good in $S$. In other words, the goods in $S$ can be matched to different goods in $S'$ without decreasing their values to agent $i$.

\subsection{The Meta-Algorithm}\label{sec:transfer}

We now describe the meta-algorithm of \citet[Algorithm~2]{NeohPetersTeh2026}. An instance is said to be an \emph{identical ordering (IDO)} instance if all agents rank the goods in the same order. Given the frequency predictions, consider the IDO instance in which agent $i$ values each good $g_r$ at $a_{ir}$; we denote this valuation by $v_i^{\IDO}$. A \emph{picking sequence} \citep{KohlerChandrasekaran1971,BouveretLang2011} $\pi=(\pi_1,\ldots,\pi_m)\in N^m$ specifies that in turn $r$, agent $\pi_r$ picks a most valuable remaining good. In the IDO instance, if ties are broken in favor of smaller indices, the good picked in turn $r$ is $g_r$, so the picking sequence results in the allocation $(B_1,\ldots,B_n)$ with $B_i=\{g_r:\pi_r=i\}$. We refer to $B_i$ as agent $i$'s \emph{IDO bundle}.

The meta-algorithm implements a given picking sequence online as follows. When a good $g$ arrives, it removes $v_i(g)$ from the remaining predicted values of each agent $i$. Let $s$ be the number of remaining turns. The algorithm simulates the remaining picking sequence on $g$ together with $s-1$ hypothetical goods, where each agent values the hypothetical goods at her remaining predicted values in nonincreasing order. In the simulation, each turn picks a most valuable remaining good, breaking ties in favor of $g$ and then in favor of hypothetical goods with smaller indices. The algorithm allocates $g$ to the agent whose turn picks $g$ in the simulation, and deletes this turn from the picking sequence.

\citet{NeohPetersTeh2026} showed that the final allocation satisfies $|A_i|=|B_i|$ and $v_i(A_i)\geq v_i^{\IDO}(B_i)$ for every agent $i$. We strengthen this guarantee as follows: each good in $B_i$ can be matched to a different good in $A_i$ that agent $i$ values at least as much, and conversely, the goods outside $A_i$ are worth no more to her than the goods outside $B_i$ in the IDO instance.

\begin{lemma}\label[lemma]{lem:picking-transfer}
Given any picking sequence, the meta-algorithm returns an allocation $\cA$ such that the following hold for every agent $i$.
\begin{enumerate}[label=\textup{(\roman*)},leftmargin=2em]
\item There exists a bijection $\varphi_i:A_i\to B_i$ such that $v_i(g)\geq v_i^{\IDO}(\varphi_i(g))$ for every $g\in A_i$. In particular, $|A_i|=|B_i|$ and $v_i(A_i)\geq v_i^{\IDO}(B_i)$.
\item For every $r$, agent $i$'s $r$th largest value for a good in $G\setminus A_i$ is at most her $r$th largest value under $v_i^{\IDO}$ for a good in $G\setminus B_i$.
\end{enumerate}
\end{lemma}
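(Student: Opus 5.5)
The plan is to prove both parts at once by induction over the arrivals, keeping an invariant that compares agent $i$'s current situation with her IDO bundle. Fix an agent $i$. After $t$ arrivals, let $R_i^t$ be her multiset of remaining predicted values and $\pi^t$ the remaining picking sequence. Define $C_i^t$ (resp.\ $\bar C_i^t$) as the multiset of values she would obtain (resp.\ the values of goods obtained by others, from her viewpoint) if $\pi^t$ were run on the IDO instance given by $R_i^t$ sorted nonincreasingly, with the values received so far added in. Explicitly, $C_i^t$ consists of $\{v_i(g):g\text{ already in }A_i\}$ together with the $q$th largest element of $R_i^t$ for each position $q$ with $\pi^t_q=i$, and $\bar C_i^t$ is defined analogously. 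Initially $C_i^0$ is the value multiset of $B_i$ and $\bar C_i^0$ that of $G\setminus B_i$. At the end, $C_i^m$ and $\bar C_i^m$ are the value multisets of $A_i$ and $G\setminus A_i$. So it suffices to show that each arrival changes $C_i$ into a multiset that dominates it elementwise via a bijection, and changes $\bar C_i$ into one that is dominated by it. Both relations are transitive, which yields the bijection $\varphi_i$ in (i) and the rank-wise inequality in (ii).

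For one step, let $g$ arrive with value $x=v_i(g)$, and let the hypothetical goods be $h_1,\dots,h_{s-1}$. These are ordered identically for all agents, each agent ranking them by her own remaining values. Let $p_j$ denote the position of $g$ in agent $j$'s ranking, with ties favoring $g$. At simulated turn $q$, if $g$ has not yet been taken, the agent $\pi_q$ sees $h_q$ as her best hypothetical good and takes $g$ if and only if $p_{\pi_q}\le q$. Hence $g$ is taken at turn $r=\min\{q:p_{\pi_q}\le q\}$. Turns $q<r$ get $h_q$ and turns $q>r$ get $h_{q-1}$. This outcome, with $g$ given to $\pi_r$, is exactly the new $C_i^{t+1}$ and $\bar C_i^{t+1}$: after deleting turn $r$ and the value $x$, the remaining turns pick $h_r,h_{r+1},\dots$ in order. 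On the other hand, $C_i^t$ is the outcome of the same sequence in agent $i$'s own IDO picture, where turn $q<p_i$ gets $h_q$, turn $p_i$ gets $g$, and turns $q>p_i$ get $h_{q-1}$. So the step reduces to comparing two assignments of the same multiset $\{g,h_1,\dots,h_{s-1}\}$ to the turns.

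The comparison, which is the main point, splits into two cases; if $r=p_i$ the two assignments coincide.
\begin{itemize}[leftmargin=2em]
\item If $r<p_i$, then $\pi_r\neq i$, since $\pi_r=i$ would force $p_i\le r$. The assignments differ only on turns $q\in[r,p_i]$. For $q\in(r,p_i]$, the simulation gives $h_{q-1}$, which agent $i$ values at least as much as the IDO good ($h_q$ or $g$). Turn $r$, which is not hers, gets $g$ instead of $h_r$. So every turn of $i$ weakly improves.
\item If $r>p_i$, then no turn $q\in[p_i,r)$ belongs to $i$, since for $\pi_q=i$ we would have $p_i\le q$ and hence $r\le q$. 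If $\pi_r=i$, she gets $g$ instead of $h_{r-1}$, and $v_i(g)\ge v_i(h_{r-1})$ because $g$ precedes position $r$ in her order. Her other turns are unchanged.
\end{itemize}
In both cases the identity bijection on $i$'s turns gives the required domination for $C_i$. For $\bar C_i$, both assignments partition the same multiset. When one agent's part dominates another's via a bijection, the complementary parts satisfy the reverse rank-wise inequality, by a standard counting argument on the number of values $\ge\tau$ for each threshold $\tau$. This gives (ii).

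I expect the main obstacle to be the bookkeeping in this step. One must verify that deleting the picked turn and the value $x$ really makes the simulation's remaining assignment coincide with the next state's IDO picture; this relies on the tie-breaking in favor of $g$ and of hypothetical goods with smaller indices. One must also handle ties in agent $i$'s ranking consistently, which I would do by fixing the tie-breaking for $g$ so that $p_i$ is well defined.
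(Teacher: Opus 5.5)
Your proposal is correct and takes essentially the same approach as the paper. Your multiset $C_i^t$ is exactly the paper's collection of values ``associated'' with agent $i$'s labeled turns (the $q$th largest remaining predicted value for a turn in position $q$), and your case split on $r$ versus $p_i$ establishes the same two facts: these values never decrease, and the arriving good is worth at least the associated value of the turn that takes it. The only cosmetic difference is that you derive (ii) step by step through complements of the full value multiset, whereas the paper derives it once from (i) using the same threshold-counting argument.
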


Intuitively, if the goods yet to arrive formed an IDO instance with the remaining predicted values, then a turn in position $\ell$ of the remaining picking sequence would give its agent her $\ell$th largest remaining predicted value. We prove part~(i) in \cref{app:picking-transfer} by showing that this value never decreases until the turn is deleted, and that the good allocated at that time is worth at least this value to the agent. Part~(ii) follows from part~(i), since agent $i$'s true values and her values under $v_i^{\IDO}$ form the same multiset.

\section{Picking Sequences, EF\texorpdfstring{$k$}{k}, and Three Agents}\label{sec:three}

In this section, we investigate the guarantees that can be obtained via the meta-algorithm. We first establish two properties of EEFX, and then use them to show that the meta-algorithm yields EEFX for any number of agents and $1/2$-EF1 for three agents. We also show that algorithms that fix a picking sequence in advance cannot guarantee a positive approximation of EF$k$ when $n\geq k+3$.

\subsection{Two Properties of EEFX}

Our first lemma shows that EEFX is preserved under stochastic dominance.

\begin{lemma}\label[lemma]{lem:eefx-fosd}
Let $i$ be an agent and $S,S'\subseteq G$. If $S$ is EEFX for agent $i$ and $S'\succeq_i S$, then $S'$ is EEFX for agent $i$. Moreover, if $|S'|=|S|$, then for every partition showing that $S$ is EEFX for agent $i$, there is a partition with the same part sizes showing that $S'$ is EEFX for agent $i$.
\end{lemma}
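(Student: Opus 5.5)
The plan is to first reduce to the case $|S'|=|S|$, and then transport the witnessing partition for $S$ along a value-nonincreasing bijection between the complements $G\setminus S$ and $G\setminus S'$.

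\emph{Reduction to equal sizes.} Suppose $|S'|>|S|$. Let $T\subseteq S'$ consist of $|S|$ goods of $S'$ that agent $i$ values most, with ties broken arbitrarily. By the second characterization of $\succeq_i$, the $r$th largest value in $T$ equals that in $S'$, which is at least that in $S$. Hence $T\succeq_i S$ and $|T|=|S|$, so the equal-size case will give a partition $(Y_1,\ldots,Y_{n-1})$ of $G\setminus T$ witnessing that $T$ is EEFX for agent $i$. Set $Y'_p=Y_p\setminus(S'\setminus T)$; these sets partition $G\setminus S'$. For $g\in Y'_p$ we have $Y'_p\setminus\{g\}\subseteq Y_p\setminus\{g\}$, and valuations are nonnegative, so
\[
v_i(Y'_p\setminus\{g\})\le v_i(Y_p\setminus\{g\})\le v_i(T)\le v_i(S').
\]
Thus $S'$ is EEFX for agent $i$.

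\emph{Equal sizes.} Assume $|S'|=|S|$ and fix a partition $(Y_1,\ldots,Y_{n-1})$ of $G\setminus S$ witnessing that $S$ is EEFX. The key claim is that $G\setminus S\succeq_i G\setminus S'$. For every $t\ge0$, let $c_X(t)$ denote the number of goods in $X$ that agent $i$ values at least $t$. Then
\[
c_{G\setminus S'}(t)=c_G(t)-c_{S'}(t)\le c_G(t)-c_S(t)=c_{G\setminus S}(t).
\]
The two complements also have equal size. So, by the equivalence stated in the preliminaries, the $r$th largest value in $G\setminus S'$ is at most the $r$th largest value in $G\setminus S$ for every $r$. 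Pairing the goods of the two complements in sorted order therefore gives a bijection $\tau:G\setminus S\to G\setminus S'$ with $v_i(\tau(h))\le v_i(h)$ for all $h$.

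Define $Y'_p=\tau(Y_p)$. These sets partition $G\setminus S'$ and satisfy $|Y'_p|=|Y_p|$, which gives the "moreover" statement about part sizes. Every good of $Y'_p$ has the form $\tau(g)$ with $g\in Y_p$. Because $\tau$ is a bijection, $Y'_p\setminus\{\tau(g)\}=\tau(Y_p\setminus\{g\})$. Hence
\[
v_i(Y'_p\setminus\{\tau(g)\})=\sum_{h\in Y_p\setminus\{g\}}v_i(\tau(h))\le v_i(Y_p\setminus\{g\})\le v_i(S)\le v_i(S'),
\]
where the last inequality uses the bijection from $S$ to $S'$ given by $S'\succeq_i S$ with equal sizes.

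The only nontrivial step is the complement-dominance claim. It rests on two facts: the counting form of $\succeq_i$ is subtracted from the fixed count $c_G(t)$, and for sets of equal size the counting form and the sorted-pointwise form coincide.
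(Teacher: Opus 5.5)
Your proof is correct and follows essentially the same route as the paper: compare the complements via the counting form of $\succeq_i$, match them in sorted order, and transport the witnessing partition along this matching. The only difference is how you handle $|S'|>|S|$. The paper does it in one step with an injection $\psi:G\setminus S'\to G\setminus S$ and pulls the partition back via preimages, whereas you first truncate $S'$ to its top $|S|$ goods $T$ and then delete $S'\setminus T$ from the parts, which amounts to the same map.
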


\begin{proof}
Let $(Y_1,\ldots,Y_{n-1})$ be a partition of $G\setminus S$ showing that $S$ is EEFX for agent $i$. Since $S'\succeq_i S$, for every $t\geq0$, the set $G\setminus S'$ contains at most as many goods that agent $i$ values at least $t$ as $G\setminus S$ does. Hence, if we list the goods in $G\setminus S'$ and in $G\setminus S$ in nonincreasing order of their values to agent $i$ and match them in this order, we obtain an injection $\psi:G\setminus S'\to G\setminus S$ such that $v_i(g)\leq v_i(\psi(g))$ for every $g\in G\setminus S'$. For each $p\in[n-1]$, let $Y'_p=\{g\in G\setminus S':\psi(g)\in Y_p\}$; these sets form a partition of $G\setminus S'$. For every $p$ and every $g\in Y'_p$, since $\psi$ is injective and values are nonnegative, we have
\[
v_i(Y'_p\setminus\{g\})\leq v_i(Y_p\setminus\{\psi(g)\})\leq v_i(S)\leq v_i(S'),
\]
where the last inequality holds because $S'\succeq_i S$. Hence $S'$ is EEFX for agent $i$. If $|S'|=|S|$, then $\psi$ is a bijection, so $|Y'_p|=|Y_p|$ for every $p$.
\end{proof}

Our second lemma shows that EEFX implies an approximation of EF$(n-2)$ when every agent receives at least one good.

\begin{lemma}\label[lemma]{lem:eefx-efk}
Let $n\geq3$, and let $\cA$ be an allocation in which every bundle is nonempty. If $A_i$ is EEFX for agent $i$, then $v_i(A_i)\geq\frac1{n-1}D_i^{(n-2)}(A_j)$ for every agent $j\neq i$. In particular, if $\cA$ is EEFX, then it is $\frac1{n-1}$-EF$(n-2)$.
\end{lemma}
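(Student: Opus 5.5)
The plan is to fix a partition $(Y_1,\ldots,Y_{n-1})$ of $G\setminus A_i$ witnessing that $A_i$ is EEFX for agent $i$. I will then bound agent $i$'s value for $A_j$ part by part, spending the allowed $n-2$ removals only on the parts that lie entirely inside $A_j$. Since $A_j\subseteq G\setminus A_i$, the sets $A_j\cap Y_p$ for $p\in[n-1]$ partition $A_j$.

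First, consider a part $Y_p$ that contains some good $g\notin A_j$. Then $A_j\cap Y_p\subseteq Y_p\setminus\{g\}$. Since values are nonnegative, the EEFX condition gives $v_i(A_j\cap Y_p)\leq v_i(Y_p\setminus\{g\})\leq v_i(A_i)$, so no removal is needed for this part. Parts disjoint from $A_j$ contribute nothing.

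Second, consider a nonempty part $Y_p\subseteq A_j$. Here I remove one good $h_p\in Y_p$ and obtain $v_i(Y_p\setminus\{h_p\})\leq v_i(A_i)$.

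The key step, and the only place where the hypotheses $n\geq3$ and nonempty bundles enter, is to show that at most $n-2$ parts are of this second kind. Since $n\geq3$, there is an agent $k\notin\{i,j\}$. Her bundle $A_k$ is nonempty and contained in $G\setminus(A_i\cup A_j)$. So some good of $G\setminus A_i$ lies outside $A_j$, and the part containing it is of the first kind. Hence at most $n-2$ parts are contained in $A_j$.

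Let $H$ be the set of the removed goods $h_p$, so $|H|\leq n-2$. Summing over the $n-1$ parts gives
\[
v_i(A_j\setminus H)\leq (n-1)\,v_i(A_i).
\]
Finally, $D_i^{(n-2)}(A_j)\leq v_i(A_j\setminus H)$. If $|A_j|\leq n-2$, the left side is $0$. Otherwise, removing agent $i$'s $n-2$ most valuable goods of $A_j$ removes at least as much value as removing $H$, possibly together with further goods up to $n-2$ in total. This yields $v_i(A_i)\geq\frac1{n-1}D_i^{(n-2)}(A_j)$. The ``in particular'' statement follows by applying this to every ordered pair of distinct agents $i,j$. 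I expect no real obstacle beyond noticing that a third agent's nonempty bundle saves one removal.
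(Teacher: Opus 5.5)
Your proposal is correct and follows essentially the same argument as the paper: it fixes an EEFX partition, uses a third agent's nonempty bundle to show that at most $n-2$ nonempty parts lie inside $A_j$, removes one good from each of these parts, and sums the $n-1$ per-part bounds. Your final comparison $D_i^{(n-2)}(A_j)\leq v_i(A_j\setminus H)$ is exactly the paper's first inequality.
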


\begin{proof}
Fix an agent $j\neq i$, and let $(Y_1,\ldots,Y_{n-1})$ be a partition of $G\setminus A_i$ showing that $A_i$ is EEFX for agent $i$. Since $n\geq3$ and every bundle is nonempty, some good lies outside $A_i\cup A_j$, and the part containing this good is not contained in $A_j$. Hence at most $n-2$ nonempty parts are contained in $A_j$. Let $X$ consist of one good from each of these parts, so that $|X|\leq n-2$. For every $p$, the set $(A_j\setminus X)\cap Y_p$ is either empty or contained in $Y_p\setminus\{g\}$ for some $g\in Y_p$, so its value to agent $i$ is at most $v_i(A_i)$. Therefore,
\begin{align*}
D_i^{(n-2)}(A_j)\leq v_i(A_j\setminus X)
&=\sum_{p=1}^{n-1}v_i((A_j\setminus X)\cap Y_p) \leq(n-1)\,v_i(A_i). \qedhere
\end{align*}
\end{proof}

The assumption $n\geq3$ in \cref{lem:eefx-efk} cannot be dropped, since for two agents, EEFX is equivalent to EFX, which does not imply envy-freeness. The assumption that every bundle is nonempty ensures that some good lies outside the two bundles being compared.

\subsection{Guarantees of the Meta-Algorithm}

For an IDO instance, an EFX allocation can be computed in polynomial time by envy-cycle elimination, allocating the goods in their common order \citep{LiptonEtAl2004,PlautRoughgarden2020,BarmanKrishnamurthy2020}. When $m\geq n$, we first give the goods $g_1,\ldots,g_n$ to different agents, which results in an EFX allocation, and then continue with envy-cycle elimination; this preserves EFX and ensures that every agent receives at least one good. Our algorithm computes such an allocation $(B_1,\ldots,B_n)$ of the IDO instance given by the predictions, breaking ties deterministically, and runs the meta-algorithm with the picking sequence $\pi$ given by $\pi_r=i$ for $g_r\in B_i$.

\begin{theorem}\label{thm:eefx}
For any number of agents, there exists a deterministic polynomial-time online algorithm that, given exact frequency predictions, returns an EEFX allocation. Moreover, if $m\geq n$, then every bundle in this allocation is nonempty, and for every agent $i$, the partition showing that $A_i$ is EEFX for agent $i$ can be chosen to have parts of sizes $|A_j|$ for $j\neq i$.
\end{theorem}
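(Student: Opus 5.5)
The plan is to transfer the offline EFX guarantee of the IDO allocation $(B_1,\ldots,B_n)$ to the online allocation $\cA$ using \cref{lem:picking-transfer} and \cref{lem:eefx-fosd}. The argument is carried out inside the IDO instance, viewed as its own goods set $G$ with valuation $v_i^{\IDO}$, and then moved to the true instance.

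First, I would check that the offline allocation $(B_1,\ldots,B_n)$ is EFX for the IDO instance. Envy-cycle elimination on an IDO instance, with goods taken in their common order, is EFX by the cited results. When $m\geq n$, the initial step gives $g_1,\ldots,g_n$ to distinct agents, which is trivially EFX, and subsequent steps preserve EFX. So every $B_j$ is nonempty. Since the allocation is EFX under $v_i^{\IDO}$, the partition $(B_j)_{j\neq i}$ of $G\setminus B_i$ shows that $B_i$ is EEFX for agent $i$ under $v_i^{\IDO}$, with parts of sizes $|B_j|$.

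The main obstacle is that $B_i$ is EEFX with respect to $v_i^{\IDO}$, while $A_i$ must be EEFX with respect to $v_i$. I would not apply \cref{lem:eefx-fosd} directly across the two valuations. Instead, I would redo its matching argument using the two parts of \cref{lem:picking-transfer}.
\begin{itemize}
\item Part~(ii) says that the $r$th largest true value of agent $i$ outside $A_i$ is at most the $r$th largest IDO value outside $B_i$. Since $|A_i|=|B_i|$, both complements have the same size. Matching them in sorted order therefore gives a bijection $\psi:G\setminus A_i\to G\setminus B_i$ with $v_i(g)\leq v_i^{\IDO}(\psi(g))$.
\item Set $Y'_j=\psi^{-1}(B_j)$ for $j\neq i$. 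Then $|Y'_j|=|B_j|$, and for $g\in Y'_j$,
\[
v_i(Y'_j\setminus\{g\})\leq v_i^{\IDO}(B_j\setminus\{\psi(g)\})\leq v_i^{\IDO}(B_i)\leq v_i(A_i).
\]
The middle inequality is the EFX property of the IDO allocation, and the last is part~(i) of \cref{lem:picking-transfer}.
\end{itemize}
This shows that $A_i$ is EEFX for agent $i$. Since $|A_j|=|B_j|$, the parts have sizes $|A_j|$ for $j\neq i$. Nonemptiness of every $A_j$ follows from $|A_j|=|B_j|\geq1$ when $m\geq n$.

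Finally, the algorithm is deterministic and runs in polynomial time. Envy-cycle elimination is polynomial, and each arrival in the meta-algorithm simulates a picking sequence on at most $m$ goods, which takes polynomial time. The ties are broken deterministically throughout.
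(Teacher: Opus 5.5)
Your proposal is correct and follows essentially the paper's argument: the EFX partition of the IDO allocation is pulled back along a sorted matching of the complements, and \cref{lem:picking-transfer}(i) gives $v_i^{\IDO}(B_i)\le v_i(A_i)$. The only difference is bookkeeping. The paper relabels goods via a bijection $\sigma$ with $v_i(\sigma(g_r))=a_{ir}$ so that it can apply \cref{lem:eefx-fosd} within the single valuation $v_i$, whereas you use \cref{lem:picking-transfer}(ii) to build the cross-valuation matching $\psi$ directly, which amounts to inlining that lemma's proof.
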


\begin{proof}
If $m<n$, we give the goods to different agents; the resulting allocation is EFX, and therefore EEFX. Suppose that $m\geq n$, and let $(B_1,\ldots,B_n)$ and $\cA$ be the allocations described above. Fix an agent $i$. Since her true values are $a_{i1},\ldots,a_{im}$, there exists a bijection $\sigma:G\to G$ such that $v_i(\sigma(g_r))=a_{ir}=v_i^{\IDO}(g_r)$ for every $r\in[m]$. Since $(B_1,\ldots,B_n)$ is EFX in the IDO instance, the sets $\sigma(B_j)$ for $j\neq i$ form a partition of $G\setminus\sigma(B_i)$ showing that $\sigma(B_i)$ is EEFX for agent $i$. By \cref{lem:picking-transfer}(i), we have $A_i\succeq_i\sigma(B_i)$ and $|A_i|=|\sigma(B_i)|$. Hence, by \cref{lem:eefx-fosd}, $A_i$ is EEFX for agent $i$, and the corresponding partition can be chosen to have parts of sizes $|B_j|=|A_j|$ for $j\neq i$. Finally, every bundle is nonempty, since $|A_j|=|B_j|\geq1$ for every $j$.
\end{proof}

\Cref{thm:eefx} stands in contrast to EFX, for which no positive approximation can be guaranteed when $n\geq3$ \citep{NeohPetersTeh2026}. The difference is that \cref{lem:picking-transfer} gives each agent a bundle at least as good as her IDO bundle, but says nothing about how the goods outside her bundle are divided among the other agents; EEFX, unlike EFX, does not depend on this division.

Combining \cref{thm:eefx} with \cref{lem:eefx-efk} yields the following corollary.

\begin{corollary}\label{thm:three}
For any $n\geq3$, there exists a deterministic polynomial-time online algorithm that, given exact frequency predictions, returns a $\frac1{n-1}$-EF$(n-2)$ allocation. In particular, $1/2$-EF1 can be guaranteed for three agents.
\end{corollary}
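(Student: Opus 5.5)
The plan is to run the algorithm of \cref{thm:eefx} unchanged and to check that its output satisfies the hypotheses of \cref{lem:eefx-efk}. The argument splits on whether $m<n$ or $m\geq n$.

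\emph{Case $m<n$.} The algorithm gives the goods to different agents, so every bundle has at most one good. Then $D_i^{(n-2)}(A_j)=0$ for all $j$, because $|A_j|\leq1\leq n-2$ when $n\geq3$. Hence the inequality $v_i(A_i)\geq\frac1{n-1}D_i^{(n-2)}(A_j)$ holds trivially, and the allocation is $\frac1{n-1}$-EF$(n-2)$ for any $n\geq3$.

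\emph{Case $m\geq n$.} Here \cref{thm:eefx} gives two facts. First, the allocation is EEFX, i.e., $A_i$ is EEFX for agent $i$ for every $i$. Second, every bundle is nonempty. These are exactly the hypotheses of \cref{lem:eefx-efk} for $n\geq3$. That lemma then yields $v_i(A_i)\geq\frac1{n-1}D_i^{(n-2)}(A_j)$ for all distinct $i,j$, which is the definition of $\frac1{n-1}$-EF$(n-2)$.

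The algorithm is deterministic and runs in polynomial time by \cref{thm:eefx}, so the same holds here. For the special case, set $n=3$: then $n-2=1$ and $\frac1{n-1}=\frac12$, which gives $1/2$-EF1.

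The only point needing care is that \cref{lem:eefx-efk} requires nonempty bundles. That requirement is supplied by the $m\geq n$ clause of \cref{thm:eefx}, and the $m<n$ case is handled by the trivial observation above rather than by the lemma.
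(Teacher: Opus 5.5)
Your proposal is correct and follows essentially the same route as the paper: for $m\geq n$, combine the EEFX and nonempty-bundle guarantees of \cref{thm:eefx} with \cref{lem:eefx-efk}, and for $m<n$, observe that singleton bundles make the allocation trivially EF$(n-2)$. Your explicit check that $D_i^{(n-2)}(A_j)=0$ when $|A_j|\leq1\leq n-2$ is a slightly more direct version of the paper's remark that the allocation is EF1 and hence EF$(n-2)$.
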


\begin{proof}
If $m\geq n$, the allocation in \cref{thm:eefx} is EEFX and every bundle is nonempty, so the claim follows from \cref{lem:eefx-efk}. If $m<n$, giving the goods to different agents results in an EF1 allocation, which is also EF$(n-2)$.
\end{proof}

\subsection{Limits of Fixed Picking Sequences}

\Cref{thm:eefx,thm:three} give a positive approximation of EF$k$ whenever $n\leq k+2$. We now show that this is the best possible for algorithms that implement a picking sequence chosen in advance.

\begin{theorem}\label{thm:fixed-efk}
Let $n\geq2$ and $k\geq1$, and consider algorithms that choose a picking sequence based on the frequency predictions before any good arrives, possibly at random, and implement it using the meta-algorithm. Such an algorithm can guarantee a positive approximation of EF$k$ if and only if $n\leq k+2$. Moreover, for $n\geq k+3$, the impossibility holds even with $nk+1$ goods, identical predictions for all agents, and normalized valuations in which each agent values all goods at distinct positive values.
\end{theorem}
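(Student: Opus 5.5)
The ``if'' direction comes from results already established. For $3\leq n\leq k+2$, \cref{thm:three} gives an algorithm that fixes the picking sequence $\pi$ in advance, derived from the EFX allocation $(B_1,\ldots,B_n)$ of the IDO instance. That algorithm guarantees $\frac1{n-1}$-EF$(n-2)$. Removing more goods can only decrease $D_i^{(k)}$, so $D_i^{(k)}(S)\leq D_i^{(n-2)}(S)$ for $k\geq n-2$, and the same algorithm is $\frac1{n-1}$-EF$k$. For $n=2$, \cref{thm:eefx} gives EEFX, which for two agents coincides with EFX. EFX implies EF1 and hence EF$k$. In both cases the picking sequence depends only on the predictions, so the algorithm is of the required form.

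For the ``only if'' direction, fix $n\geq k+3$, $m=nk+1$, and identical predictions with distinct positive values normalized to sum $1$. Since a randomized algorithm must succeed on every realized allocation, it suffices to beat each fixed sequence $\pi\in N^m$. By pigeonhole, some agent $i$ has at most $k$ turns in $\pi$, and the remaining $n-1\geq k+2$ agents share at least $(n-1)k+1$ turns.

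The values will be a block of $k+1$ (or slightly more) ``large'' values that are nearly equal, followed by ``tiny'' values of order $\delta$, all perturbed to be distinct. The adversary aims for two outcomes. First, every good agent $i$ receives is worth $O(\delta)$ to her. Second, some agent $j$ receives at least $k+1$ goods of large value to $i$, so that $D_i^{(k)}(A_j)$ is bounded below by a constant independent of $\delta$. Letting $\delta\to0$ then rules out every $\alpha>0$.

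The adversary exploits the fact that agents' values are not aligned across agents. When a good arrives, it is chosen so that $i$ values it at one of her remaining large values, while the agents whose turns come earlier in the remaining sequence value it at their largest remaining values. The meta-algorithm's simulation then hands it to one of those agents rather than to $i$. Conversely, the goods that reach $i$'s turns are those she values at tiny amounts but some other agent values highly.

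The main obstacle is the bookkeeping of the simulation in the meta-algorithm. It must ensure two things: that each arriving good goes to an intended agent other than $i$ rather than to $i$, and that at least $k+1$ of $i$'s large goods are concentrated in a single bundle. Spreading them one per agent would not create envy up to $k$ goods. The slack $n-1\geq k+2$ should be exactly what makes this possible. There are enough other agents that, at each step, some agent other than $i$ with an upcoming turn has a large value still unused, while the designated agent $j$ (one with at least $k+1$ turns, again by pigeonhole) can absorb $i$'s large goods. I would first try to process the remaining turns of $\pi$ in order, and at each arrival choose values so that the first remaining turn's agent picks the arriving good. I would then verify, by the invariant used to prove \cref{lem:picking-transfer}, that the good indeed goes to that turn. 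The case analysis of when that turn belongs to $i$ versus $j$ is where I expect the condition $n\geq k+3$ to be used.
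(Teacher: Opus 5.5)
Your ``if'' direction is correct and matches the paper. The ``only if'' direction has a genuine gap, and it starts with your choice of agent $i$. Having at most $k$ turns is the wrong criterion. If $i$ has a turn in position $r\le k+1$ of $\pi$, then by \cref{lem:picking-transfer}(i) she receives a good worth at least $a_{ir}$, which is one of your large values, so your first goal (every good of $i$ is worth $O(\delta)$) cannot be achieved. In fact such an agent is protected outright. At most $r-1\le k$ goods are worth more than $a_{ir}$ to her, so $D_i^{(k)}(A_j)\le(m-k)a_{ir}\le(m-k)v_i(A_i)$ for every $j$, which is a positive factor depending only on $n,k$ when $m=nk+1$.

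Your pigeonhole choice can be forced onto such an agent. For $n=5$, $k=2$, $\pi=(1,2,3,4,4,4,5,5,5,1,2)$, the agents with at most two turns are exactly $1,2,3$, and all of them pick within the first three positions. The right criterion is that $i$ has \emph{no} turn among the first $k+1$ positions; she may have more than $k$ turns. This is where $n\ge k+3$ is used: it gives at least two such agents. Taking $i$ to be one with the fewest turns then ensures two things: some $j\ne i$ has at least $k+1$ turns, and $i$ has at most $\lfloor k(n-1)/2\rfloor$ turns, which leaves enough distinct small values.

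The second gap is the concentration mechanism, which you flag as the main obstacle but leave open, and which your in-order plan cannot supply. If each good goes to the first remaining turn, then under round robin all goods allocated before $i$'s first turn go to distinct agents, so her large values are spread at most one per bundle. The paper works in the reverse order to your plan. Every agent's multiset is $k+1$ copies of $H$ together with $1,\dots,\ell$, where $\ell=k(n-1)$.

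\begin{enumerate}[label=\textup{(\arabic*)},leftmargin=2em]
\item The adversary first sends $|B_i|$ goods that $i$ values at her largest remaining small value and every other agent values at her smallest remaining value. In the simulation every earlier turn then picks a hypothetical good instead. At least $k+1$ hypothetical goods are picked before $i$'s first remaining turn, so her $H$-valued hypothetical goods are gone and she takes the arriving good. This uses up all of $i$'s turns on small goods.
\item Only afterwards does the adversary send $k+1$ goods valued $H$ by both $i$ and $j$ and at their smallest remaining value by everyone else. Each of these skips the other agents' earlier turns, and $j$ takes it because ties favor the arriving good.
\end{enumerate}

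This yields $v_i(A_i)\le\ell(\ell+1)/2<\alpha H$ while $D_i^{(k)}(A_j)\ge H$. The idea you are missing is this skipping effect: a good that other agents value at their minimum bypasses their turns. That lets the adversary first exhaust $i$'s turns and then route all of her valuable goods into one bundle, rather than trying to divert her large goods to others while she still has turns.
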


The positive direction follows from \cref{thm:eefx} for $n=2$, since EEFX is equivalent to EFX for two agents, and from \cref{thm:three} for $3\leq n\leq k+2$, since $D_i^{(k)}(S)\leq D_i^{(n-2)}(S)$ for every bundle $S$ whenever $k\geq n-2$. For the impossibility, we force an agent to receive goods of small value until all of her turns have been deleted, and then give the next $k+1$ goods, which she values highly, to another agent. The proof is given in \cref{app:fixed-efk}. The number of goods in the impossibility cannot be reduced: if $m\leq nk$, a picking sequence that gives each agent at most $k$ turns results in an EF$k$ allocation.

For four agents and $k=1$, the impossibility in \cref{thm:fixed-efk} holds already with five goods. Hence, a positive approximation of EF1 for four agents requires an algorithm that does not fix its picking sequence in advance. We present such an algorithm in \cref{sec:four}.

\section{Impossibility Results}\label{sec:lower}

In this section, we prove impossibility results that apply to all online algorithms given frequency predictions. We first show that no positive approximation of EF1 is possible for five or more agents, and then show that no positive approximation of EFX is possible for three or more agents, even when the valuations take only two distinct positive values.

\subsection{EF1 for Five or More Agents}

Our construction for EF1 uses $n+1$ goods and gives all agents the same frequency multiset. After the first two goods have been allocated, the adversary chooses the values of all remaining goods. At a high level, if every agent receives a good, then exactly one agent receives two goods, and every other agent must value her own good at least $\alpha$ times as much as the less valuable of these two goods. If the first two goods go to the same agent, the adversary makes the last good worth little to all other agents. Otherwise, the recipients of the first two goods each hold a good worth much less to her than her largest value; the adversary then makes each later good highly valuable to at least one of them, and the last good worth little to all other agents.

\begin{theorem}\label{thm:zero}\label{cor:zero-factor}
For every $n\geq5$ and every $\alpha\in(0,1]$, no online algorithm guarantees $\alpha$-EF1 for $n$ agents given exact frequency predictions. This holds even with $n+1$ goods, identical predictions for all agents, and normalized valuations in which every good has positive value.
\end{theorem}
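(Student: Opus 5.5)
The plan is an adaptive adversary built on a single frequency multiset shared by all agents. The values will be $x_1>x_2>\cdots>x_{n+1}>0$, normalized to sum to $1$, with consecutive ratios $x_r/x_{r+1}>1/\alpha$ (a geometric scale with ratio $\lambda>1/\alpha$). This way, any value that sits strictly higher in the ranking than another beats it by more than the factor $1/\alpha$.

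First I would record the structural fact that drives everything. With $n+1$ goods and all values positive, any agent who ends with an empty bundle violates $\alpha$-EF1 toward an agent holding two or more goods. Some agent must hold two or more goods, and removing one still leaves positive value. Hence the algorithm is forced to give every agent at least one good. Then exactly one agent $d$ receives exactly two goods $h,h'$, and the requirement reduces to
\[
v_i(A_i)\ \geq\ \alpha\,\min\{v_i(h),v_i(h')\}\quad\text{for every } i\neq d,
\]
where $A_i$ is a single good. So the adversary wins if some agent other than $d$ holds a good that is lower in her own ranking than both of $d$'s goods.

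The first two goods $g_1,g_2$ get values that are middle-ranked for every agent, for instance $x_2$ and $x_3$ (or a similar pair), chosen identically across agents. Then I split into cases on the algorithm's response.

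\emph{Case 1: $g_1,g_2$ go to the same agent $d$.} By the forcing fact, the remaining $n-1$ goods go to the $n-1$ other agents, one each. The adversary makes the last good $g_{n+1}$ worth the smallest remaining value to every agent (e.g., $x_{n+1}$). Its recipient then values her own good far below $\min\{x_2,x_3\}$, contradicting $\alpha$-EF1. The multisets stay consistent because every agent's remaining values are handed out in some order.

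\emph{Case 2: $g_1,g_2$ go to different agents $i$ and $j$.} Each of them holds a good worth at most $x_2$ to her, so her value $x_1$, and the other unused high values, remain available. The adversary now presents each later good so that it is highly valuable to at least one of $i$ and $j$, compared with what she already holds. Concretely, agent $i$ sees her remaining top values on some goods and agent $j$ on others, while all remaining agents see values the adversary can reassign freely. The last good $g_{n+1}$ is again worth the minimum remaining value to everyone except possibly $i$ and $j$.

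The aim is to show that whichever agent $d$ ends up with two goods, some agent envies her beyond factor $\alpha$:
\begin{itemize}[leftmargin=2em]
\item If $d\in\{i,j\}$, the recipient of $g_{n+1}$ envies $d$.
\item If $d\notin\{i,j\}$, then one of $d$'s goods was highly valuable to $i$ or $j$, and I need $d$'s other good to be as well. Then $i$ or $j$, holding only a mid-ranked good, envies $d$.
\item Alternatively, if the algorithm gives a later good to $i$ or $j$, that agent becomes $d$ and we are back in the first subcase.
\end{itemize}

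The main obstacle is Case 2, subcase $d\notin\{i,j\}$. The adversary must arrange that both goods of $d$ beat, for a single agent among $i$ and $j$, the good she holds, while only finitely many high values of $i$ and $j$ exist and the multisets must be respected exactly. This is where $n\geq5$ should enter. Among the $n-1\geq4$ later goods, I would interleave so that $i$'s top values and $j$'s top values each cover enough goods that every pair of later goods contains two goods both high for $i$ or both high for $j$. By pigeonhole, having two "colors" and at least four goods is what makes every pair, or at least every pair the algorithm can realize, monochromatic in the needed sense.

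I would try first to do this adaptively: each arriving good is made high for whichever of $i,j$ has not yet "lost" such a good to a third agent. I expect to need a careful count showing that $n-1\geq4$ remaining goods give enough room, and that it fails for $n=4$, consistent with the positive result there.
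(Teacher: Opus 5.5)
Your reduction (every bundle must be nonempty, so exactly one agent $d$ holds two goods and every other agent must value her single good at least $\alpha$ times the smaller of $d$'s goods) and your Case~1 are correct and match the paper. Case~2, however, does not just need a careful count: with the values you chose, it cannot work.

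\textbf{Why Case~2 fails.} All values are distinct, and $g_1,g_2$ are worth $x_2,x_3$ to everyone. So agent $i$ has only one remaining value ($x_1$) above $x_2$. Likewise, agent $j$ values only one good other than $g_1$ above $x_3$. Hence neither of them can ever value both goods of a third agent $d$ above her own good, and your two-color pigeonhole has nothing to work with.

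Your bullet for $d\in\{i,j\}$ also fails when $d$ herself receives $g_{n+1}$. Concretely, the algorithm can give $g_3,\ldots,g_n$ to the $n-2$ other agents one each, and then give $g_{n+1}$ to $j$. Agent $i$ has $D_i(A_j)\le v_i(g_2)=x_3<x_2$. Every other agent $k$ has $D_k(A_j)=x_{n+1}\le v_k(A_k)$, since your adversary makes $g_{n+1}$ worth the smallest value to her. So the final allocation is exactly EF1.

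More generally, if $g_1$ and $g_2$ have the same values for all agents, the holder of the more valuable one never envies the holder of the other, even after the latter also receives $g_{n+1}$. So in one of the subcases $d=i$ or $d=j$, the envy would have to come from a third agent. Your design rules this out, because third agents value $g_{n+1}$ at their minimum.

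\textbf{The missing ingredients.} The paper uses two ideas you lack: a repeated top value, and values of $g_2$ that depend on who received $g_1$.
\begin{enumerate}[label=\textup{(\arabic*)},leftmargin=2em]
\item Each frequency multiset consists of $n-2$ copies of $K^3$ together with $K^2,K,1$, where $K>1/\alpha$.
\item $g_1$ is worth $K^2$ to everyone. $g_2$ is worth $K^3$ to the recipient of $g_1$ (agent~$1$) and $K$ to everyone else.
\end{enumerate}
If another agent~$2$ takes $g_2$, agents $1$ and $2$ hold goods worth $K^2$ and $K$ to them, respectively, yet each still has several $K^3$ values left.

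You also do not need every pair of later goods to be ``monochromatic''. Since $g_{n+1}$ is worth $1$ to every third agent, and every other good is worth at least $K$ to her, $g_{n+1}$ must lie in the doubled bundle. So only pairs $\{g_t,g_{n+1}\}$ matter. It suffices to make $g_{n+1}$ worth $K^3$ to both agents $1$ and $2$, and each $g_t$ with $3\le t\le n$ worth $K^3$ to at least one of them. This is done by placing agent~$1$'s leftover values $K$ and $1$ on $g_3,g_4$, and agent~$2$'s leftover $1$ on $g_5$; this is exactly where $n\ge5$ enters. Then:
\begin{enumerate}[label=\textup{(\alph*)},leftmargin=2em]
\item $d=1$ is envied by agent~$2$, who holds $K$ against $K^2$;
\item $d=2$ is envied by agent~$1$, who holds $K^2$ against $K^3$; this uses $v_1(g_2)=K^3$;
\item a third agent $d$ is envied by whichever of agents $1,2$ values $d$'s other good at $K^3$.
\end{enumerate}
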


\begin{proof}
Fix an online algorithm, and let $K>\max\{2,1/\alpha\}$. Every agent's frequency multiset consists of $n-2$ copies of $K^3$ together with the values $K^2$, $K$, and $1$. We show that the final allocation contains distinct agents $i,j$ with $D_i(A_j)>0$ and $v_i(A_i)\leq D_i(A_j)/K$; since $1/K<\alpha$, the allocation is then not $\alpha$-EF1.

Since all values are positive, if some bundle $A_i$ is empty, then some other bundle $A_j$ contains at least two goods, so $D_i(A_j)>0=v_i(A_i)$, as desired. Hence, we may assume that every bundle is nonempty. As there are $n+1$ goods, exactly one agent receives two goods, and every other agent receives one good.

The first good $g_1$ has value $K^2$ to every agent; we relabel its recipient as agent~$1$. The second good $g_2$ has value $K^3$ to agent~$1$ and value $K$ to every other agent. The values of the remaining goods depend on the recipient of $g_2$.

\paragraph{Case 1: $g_2$ is allocated to agent $1$.}
Every agent $i\neq1$ values $g_3,\ldots,g_n$ at $K^3$ and $g_{n+1}$ at $1$, whereas agent~$1$ values these goods at her remaining predicted values in an arbitrary order. Since $A_1=\{g_1,g_2\}$ is the bundle with two goods, $g_{n+1}$ is the only good of some agent $i\neq1$. Then $v_i(A_i)=1$ and $D_i(A_1)=\min\{K^2,K\}=K$, as desired.

\paragraph{Case 2: $g_2$ is allocated to another agent.}
We relabel this agent as agent~$2$. The values of $g_3,\ldots,g_{n+1}$ are given in \cref{tab:impossibility-values}; each row is consistent with the predictions.

\begin{table}[t]
\caption{Values in Case~2 of the proof of \cref{thm:zero}, with the goods listed in order of arrival. Each of the goods $g_6,\ldots,g_n$ has the value in the corresponding column; this column is absent when $n=5$.}
\label{tab:impossibility-values}
\begin{center}
\setlength{\tabcolsep}{4.5pt}
\begin{tabular}{@{}lccccccc@{}}
\toprule
Agent & $g_1$ & $g_2$ & $g_3$ & $g_4$ & $g_5$ & $g_6,\ldots,g_n$ & $g_{n+1}$ \\
\midrule
$1$ & $K^2$ & $K^3$ & $K$ & $1$ & $K^3$ & $K^3$ & $K^3$ \\
$2$ & $K^2$ & $K$ & $K^3$ & $K^3$ & $1$ & $K^3$ & $K^3$ \\
$i\notin\{1,2\}$ & $K^2$ & $K$ & $K^3$ & $K^3$ & $K^3$ & $K^3$ & $1$ \\
\bottomrule
\end{tabular}
\end{center}
\end{table}

Each of the goods $g_3,\ldots,g_n$ is worth $K^3$ to at least one of agents~$1$ and~$2$, and $g_{n+1}$ is worth $K^3$ to both of them. This is where we use the assumption $n\geq5$: the three goods $g_3,g_4,g_5$ allow agent~$1$'s remaining small values $K$ and $1$ and agent~$2$'s remaining small value $1$ to be placed on different goods.

Let $j$ be the agent who receives two goods. Suppose first that $g_{n+1}\notin A_j$. Since agents~$1$ and~$2$ already hold $g_1$ and $g_2$, respectively, $g_{n+1}$ is the only good of some agent $i\in N\setminus\{1,2\}$. This agent values $g_{n+1}$ at $1$ and every other good at least $K$, so $v_i(A_i)=1$ and $D_i(A_j)\geq K$, as desired.

Suppose now that $g_{n+1}\in A_j$. If $j=1$, then $A_1=\{g_1,g_{n+1}\}$ and $A_2=\{g_2\}$, so $v_2(A_2)=K$ and $D_2(A_1)=\min\{K^2,K^3\}=K^2$. If $j=2$, then $A_2=\{g_2,g_{n+1}\}$ and $A_1=\{g_1\}$, so $v_1(A_1)=K^2$ and $D_1(A_2)=K^3$. Finally, if $j\notin\{1,2\}$, then the other good of agent $j$ is among $g_3,\ldots,g_n$, so by \cref{tab:impossibility-values}, some agent $i\in\{1,2\}$ values both goods in $A_j$ at $K^3$. Since agents~$1$ and~$2$ receive only $g_1$ and $g_2$, respectively, we have $v_i(A_i)\leq K^2$ and $D_i(A_j)=K^3$. In all cases, the desired agents exist.

To obtain normalized valuations, we divide all predicted and true values by $(n-2)K^3+K^2+K+1$ from the start; this does not affect any of the comparisons. Finally, since the values of future goods depend only on the allocation decisions observed so far, the argument applies to every possible outcome of a randomized algorithm.
\end{proof}

\paragraph{Comparison with share-based fairness.}
Every allocation of the instances in the proof of \cref{thm:zero} is \emph{proportional up to one good} \citep{ConitzerFreemanShah2017}, meaning that each agent can reach $1/n$ of her total value by adding at most one good from outside her bundle. Indeed, the total value $(n-2)K^3+K^2+K+1$ of each agent is less than $nK^3$, and each agent either already has a good of value $K^3$ or can add one. This illustrates the difference between guarantees on an agent's own value and envy-based guarantees; the former, and in particular approximations of proportionality up to one good, have been the focus of much recent work on online fair division \citep{choo2026approxproponline,neoh2026closinggaps,CohenTehWooldridge2026,DuSun2026}.

\subsection{EFX with Two Positive Values}\label{sec:two-value-efx}

We now turn to EFX. \citet[Section~5]{NeohPetersTeh2026} showed that no positive approximation of EFX is possible for $n\geq3$ given frequency predictions. We strengthen their result to a restricted class of valuations, in which each agent values one good at a larger value and all other goods at the same smaller value; moreover, the good of larger value is one of the last two goods.

\begin{theorem}\label{thm:two-value-efx}
For every $n\geq3$ and every $\alpha\in(0,1]$, no online algorithm guarantees $\alpha$-EFX for $n$ agents given exact frequency predictions. This holds even with identical predictions for all agents and normalized valuations that take only two positive values, and even if the algorithm knows in advance that every agent values each of the first $m-2$ goods at the smaller value and exactly one of the last two goods at the larger value.
\end{theorem}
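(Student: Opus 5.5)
The plan is an adaptive adversary on $m=n+1$ goods. Every agent's frequency multiset has $m-1=n$ copies of a small value $1$ and one copy of a large value $L$, where $L>(m+1)/\alpha$. The first $m-2=n-1$ goods are worth $1$ to everyone. At the end we divide all values by $L+n$ to normalize, which changes no comparison. The structure is exactly as promised in the statement, so the algorithm may know it in advance. The only adaptive choice is the set $S\subseteq N$ of agents who value $g_{m-1}$ at $L$; the complement $N\setminus S$ values $g_m$ at $L$, and each remaining value is $1$. This respects the predictions, since each agent has exactly one large value.

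\emph{Step 1: the first $n-1$ goods.} Suppose some agent $h$ receives at least two of these goods. Then some agent $e$ receives none of them.
\begin{itemize}[leftmargin=2em]
\item We set $S=N\setminus\{e\}$, although any choice works here.
\item If $e$ receives neither $g_{m-1}$ nor $g_m$, then $v_e(A_e)=0$, while $v_e(A_h\setminus\{g\})\geq1$ for every $g\in A_h$. This violates $\alpha$-EFX.
\end{itemize}
So it remains to handle the case where $e$ also receives one of the last two goods. I would fold this into Step~2 rather than treat it separately. Alternatively, one can note that with $n-1$ goods and two agents left empty, even after the last two goods some problem remains, and check it there.

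The cleanest route is to choose $e$ as an agent empty after $n-1$ goods and set $S=N\setminus\{e\}$ in all cases. By pigeonhole such an $e$ always exists, because $n-1<n$.

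\emph{Step 2: the case analysis on $g_{m-1}$ and $g_m$.} Let $p$ be the recipient of $g_{m-1}$ and $q$ the recipient of $g_m$. Every $i\in S$ values $g_{m-1}$ at $L$. If $i$ does not get it, then $v_i(A_i)\leq m<\alpha L$.
\begin{itemize}[leftmargin=2em]
\item Suppose $|A_p|\geq2$. Any agent $i\in S\setminus\{p\}$ has $v_i(A_p\setminus\{g\})\geq L-\,$(nothing lost)$\,\geq \min(L,1)$. This is not yet a violation, because removing $g_{m-1}$ may leave value only $1$. So the sharp case is when $A_p\setminus\{g_{m-1}\}$ contains at least two goods, or when comparing against an empty agent.
\item Because of this subtlety, the real violations I would aim for are the following two:
\begin{itemize}[leftmargin=2em]
\item an agent with value $0$ facing a bundle of at least two positive goods;
\item the situation $A_e\supseteq\{g_{m-1},g_m\}$. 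Then each $i\in S$ has $v_i(A_e\setminus\{g_m\})\geq L>v_i(A_i)/\alpha$, using $n\geq3$ so that $S$ contains an agent other than $e$ who lacks $g_{m-1}$.
\end{itemize}
\end{itemize}

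Symmetrically, $e$ values $g_m$ at $L$. If $q\neq e$ and $A_q$ contains $g_m$ plus at least one other good of value $1$ to $e$, then the EFX comparison removing the small good gives $v_e(A_q\setminus\{g\})\geq L$, while $v_e(A_e)\leq2$. This violates $\alpha$-EFX. Removing $g_m$ itself leaves a positive value, so the comparison with a zero bundle also matters.

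\emph{Main obstacle.} The crux is the final exhaustive check that every combination of $(p,q)$ and of the counts after the first $n-1$ goods produces one of these violations. The key tools are: the pigeonhole argument that some agent is empty; the fact that an empty agent facing any bundle of size at least two is a violation; and $n\geq3$, which supplies a third agent. The potentially problematic configuration is when the first $n-1$ goods go to distinct agents and $e$ is the unique empty agent. In that case:
\begin{itemize}[leftmargin=2em]
\item $q$ must equal $e$ or be an agent with no other goods, and the latter is only possible if $q=e$. Otherwise $e$'s large good sits in a bundle of size at least two.
\item If $q=e$, then $p$ must avoid making $|A_p|\geq2$ relative to $S$ in the problematic way.
\item If also $p=e$, we get the double-bundle violation above.
\item If $p\neq e$, then $p$ already holds a small good. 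An agent $i\in S\setminus\{p\}$, of which there is one since $n\geq3$, then sees $A_p$ containing $g_{m-1}$ worth $L$ plus another good. Removing that other good leaves $L$, which violates $\alpha$-EFX.
\end{itemize}
I expect this to close all cases. If it does not, the fallback is to lengthen the prefix to $n$ or $2n-1$ small goods and redo the same pigeonhole analysis.

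Finally, since each value choice depends only on the observed allocations, the argument covers every outcome of a randomized algorithm.
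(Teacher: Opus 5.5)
Your construction has a fatal gap in exactly the case you deferred at the end of Step~1. With only $n-1$ small goods, an algorithm can guarantee \emph{exact} EFX. Give $g_1,g_2$ to agent~$1$, give $g_3,\ldots,g_{n-1}$ to agents $2,\ldots,n-2$ (one each), and give $g_n$ and $g_{n+1}$ to the two agents who are still empty. Whatever set $S$ the adversary picks, every good is worth at least $1$ to every agent, so each agent values her own nonempty bundle at least $1$. The only two-good bundle is $\{g_1,g_2\}$, and removing either good leaves value exactly $1$; all other bundles are singletons. Hence the allocation is EFX.

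Your ``main obstacle'' paragraph only checks the configuration where the prefix goes to distinct agents, which is actually the easy case. The hard case is when two agents receive no prefix good and then take one of the last two goods each. Your fallback of lengthening the prefix to $n$ or $2n-1$ goods does not help. If the $s$ prefix goods are spread evenly over $n-2$ agents and the last two goods go to the remaining two agents, the allocation is $1/\lceil s/(n-2)\rceil$-EFX. That is a positive constant independent of $\alpha$. Your adaptive choice $S=N\setminus\{e\}$ also fails in this configuration. The other empty agent $e'$ lies in $S$, so she can take $g_{m-1}$ at value $L$ while $e$ takes $g_m$ at value $L$.

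The missing idea is that the prefix must grow with $1/\alpha$, and the adversary must condition on who received prefix goods. The paper sets things up as follows:
\begin{itemize}
\item It uses $m-2=(n-2)(K+1)$ small goods for an integer $K>1/\alpha$.
\item The small value is $\ell=1/(2Km)$, so that $m\ell<\alpha$.
\item Exactly the agents in $P$ (those who received at least one prefix good) value $g_{m-1}$ at the large value; agents outside $P$ value $g_m$ at the large value.
\end{itemize}
In the critical case, $g_{m-1}$ and $g_m$ go to two different agents outside $P$. Then $|P|\leq n-2$, so by pigeonhole some $j\in P$ holds at least $K+1$ prefix goods. Meanwhile the recipient $c$ of $g_{m-1}$ values her bundle at only $\ell$, yet values $A_j$ minus any good at least $K\ell$. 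The remaining cases follow from comparisons of the kind you already wrote down: $g_{m-1}$ to an agent in $P$ (which forces $|P|\geq n-1\geq2$); both last goods to the same agent outside $P$; and $g_m$ to an agent in $P$.
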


We prove \cref{thm:two-value-efx} in \cref{app:two-value-efx}. In our construction, each agent values one good at $1$ and all other goods at a small value $\ell$, and each of the first $m-2$ goods has value $\ell$ to every agent. Every agent who receives one of these goods then values $g_{m-1}$ at $1$, and every other agent values $g_m$ at $1$. If $g_{m-1}$ is allocated to an agent who received one of the first $m-2$ goods, then any other such agent has a small value for her own bundle, but values the bundle containing $g_{m-1}$ at least $1$ after removing a good of value $\ell$. The remaining cases are similar, except when $g_{m-1}$ and $g_m$ are allocated to two different agents who received none of the first $m-2$ goods. In this case, some agent receives many of the first $m-2$ goods, whereas the agent who receives $g_{m-1}$ values her bundle at only $\ell$.

\section{Four Agents: \texorpdfstring{$(3-\sqrt5)/2$}{0.3820}-EF1 Guarantee}\label{sec:four}

We now present our algorithm for four agents. Let $C=(3+\sqrt5)/2$, so that $1/C=(3-\sqrt5)/2$; we show that the algorithm guarantees $1/C$-EF1. The algorithm selects and allocates a few goods of large value as they arrive so that, for any two agents $i$ and $j$, at most one good in $A_j$ is worth more to agent $i$ than the most valuable selected good that she receives. Each good that is not selected is allocated upon arrival using concave potential functions computed from the predictions.

If $m<4$, giving the goods to different agents results in an EF1 allocation, so assume that $m\geq4$. We first cap each agent's values at her second largest predicted value, i.e., we replace $v_i(g)$ by $\min\{v_i(g),a_{i2}\}$ both in the predictions and for the arriving goods. This does not change $D_i(S)$ for any bundle $S$, since only a good of largest value in $S$ can be affected and this good is removed when computing $D_i(S)$; moreover, it can only decrease the value of an agent's own bundle. Hence, a guarantee for the capped values also holds for the original values. From now on, $v_i$ and $a_{ik}$ refer to the capped values, so $a_{i1}=a_{i2}$. For each agent $i$, we associate each arriving good $g$ with an index $k$ such that $v_i(g)=a_{ik}$, where goods of equal value are associated with increasing indices in order of arrival; we then call $g$ agent $i$'s \emph{$k$th occurrence}. Thus, her first four occurrences correspond to her four largest capped values, with ties resolved by arrival order.

\begin{lemma}\label[lemma]{lem:four-selection}
There exists a deterministic online rule that selects some of the arriving goods and allocates each selected good upon its arrival, such that each agent receives one or two selected goods, including at least one of her first four occurrences. Moreover, let $P_i$ be the set of selected goods that agent $i$ receives, and let $h_i=\max_{g\in P_i}v_i(g)$. Then, for all distinct agents $i$ and $j$, at most one good in $A_j$ has value larger than $h_i$ to agent $i$, regardless of how the goods that are not selected are allocated.
\end{lemma}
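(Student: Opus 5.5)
The plan is to use the simplest selection rule that can work. Each agent receives exactly one selected good, namely one of her first four occurrences, and all goods that agent $i$ values above $h_i$ must be selected and placed in distinct bundles.

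\emph{The rule.} Call an agent \emph{unsatisfied} if she has not yet received a selected good. When a good $g$ arrives, let $U(g)$ be the set of unsatisfied agents for whom $g$ is one of their first four occurrences. If $U(g)=\varnothing$, then $g$ is not selected. Otherwise $g$ is selected and given to an agent $i\in U(g)$. As a tie-breaking priority, we choose an agent for whom $g$ has the largest occurrence index, so that an agent at her fourth occurrence is served first. Any deterministic tie-breaking among the remaining candidates will do.

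\emph{First consequence: selected goods go to distinct agents.} Selected goods are only given to unsatisfied agents, and an agent becomes satisfied upon receiving one. Hence every agent receives at most one selected good, and distinct selected goods lie in distinct bundles.

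\emph{Second consequence: every agent is satisfied.} Suppose agent $i$ is still unsatisfied when her fourth occurrence arrives; this occurrence exists because $m\geq4$. Each of her first three occurrences was selected, since $i$ was in $U(\cdot)$ for each of them. Each was given to some other agent, and these three agents are distinct and now satisfied. With $n=4$, agent $i$ is therefore the only unsatisfied agent, so $U(g)=\{i\}$ and she receives $g$. Thus each agent receives exactly one selected good, and it is one of her first four occurrences. This satisfies the ``one or two'' requirement.

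\emph{The envy condition.} Let agent $i$ receive her $k$th occurrence, with $k\leq4$, so that $h_i=a_{ik}$. By the definition of occurrences, with equal values indexed in order of arrival, any good that $i$ values strictly more than $a_{ik}$ is one of her occurrences $1,\ldots,k-1$. These occurrences arrived before her $k$th occurrence, while she was still unsatisfied, and each is among her first four occurrences. So each of them was selected and given to an agent other than $i$. By the first consequence, they lie in distinct bundles. Hence every $A_j$ with $j\neq i$ contains at most one good worth more than $h_i$ to agent $i$. This holds however the non-selected goods are allocated, since no non-selected good exceeds $h_i$ for $i$.

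\emph{Main obstacle.} The step I expect to be delicate is the second consequence. We must make sure that no agent is left without a selected good while her early occurrences are spent on others. The counting argument with $n=4$ and four occurrences is what handles this. We should also check carefully that ties in capped values, in particular $a_{i1}=a_{i2}$, are correctly handled by the arrival-order indexing. The point to verify is that a later occurrence never has strictly larger value than an earlier one, which holds because $a_{i1}\geq\cdots\geq a_{im}$.
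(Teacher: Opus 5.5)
There is a genuine gap in your envy step. You claim that agent $i$'s occurrences $1,\ldots,k-1$ arrive before her $k$th occurrence, but they need not. The occurrence index is the rank of the value, arrival order only breaks ties among equal values, and the adversary chooses the order. So an agent can be satisfied early by a low-ranked occurrence, and her strictly more valuable occurrences can arrive later, when nobody in $U(\cdot)$ claims them.

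Here is a concrete counterexample to your rule with $m=8$. Let agent $1$ value $g_1$ at $1$, $g_5,g_6,g_7$ at $10$, and every other good at $0$. For $j\in\{2,3,4\}$, let agent $j$ value $g_j,g_5,g_6,g_7$ at $5$ and every other good at $0$. The run goes as follows:
\begin{itemize}
\item $g_1$ is agent $1$'s fourth occurrence and only the fifth occurrence for the others, so $U(g_1)=\{1\}$ and $h_1=1$.
\item $g_2,g_3,g_4$ satisfy agents $2,3,4$.
\item $g_5,g_6,g_7$ have $U=\varnothing$, so they are not selected.
\end{itemize}
If two of $g_5,g_6,g_7$ go to the same agent $j$, then $A_j$ contains two goods worth $10>h_1$ to agent $1$, which violates the lemma.

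Your satisfaction step has the same ordering slip, but it is easy to fix. If $i$ were never satisfied, each of her first four occurrences would be selected and given to a distinct other agent, which is impossible with only three other agents. The envy step cannot be fixed within a rule that stops protecting an agent once she holds one selected good.

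The missing idea is that the rule must keep protecting an agent against her higher-ranked occurrences that arrive after she has been served. Each such good must be either given to her as an upgrade, or selected and placed in another agent's bundle that contains no other such good. That other agent may receive it even if it is not among her own first four occurrences.

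This is why the lemma allows two selected goods per agent. It is also why the paper's rule:
\begin{itemize}
\item assigns ranks $3,3,2,1$ to the first four occurrences;
\item tracks each agent's current maximum rank $u_i$ among her selected goods;
\item selects every good whose rank for some agent exceeds her current $u_i$.
\end{itemize}
It then needs the target/marked-good rule and the case analysis on the first good of rank at least $2$. These keep every agent at two or fewer selected goods while spreading such goods across distinct bundles.
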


We prove \cref{lem:four-selection} in \cref{app:four-selection}. The lemma allows us to also cap each agent's values at her fourth largest predicted value. Let $w_i(g)=\min\{v_i(g),a_{i4}\}$ for each good $g$, extended additively, and for a bundle $S$, let $D_i^w(S)$ denote the value of $S$ under $w_i$ after removing a most valuable good under $w_i$, where $D_i^w(\varnothing)=0$. We claim that $D_i(A_j)\leq D_i^w(A_j)+h_i-a_{i4}$ and $v_i(A_i)\geq w_i(A_i)+h_i-a_{i4}$ for all distinct agents $i,j$. The second inequality holds because $A_i$ contains a selected good of value $h_i\geq a_{i4}$ to agent $i$. For the first inequality, remove from $A_j$ a good of largest value to agent $i$; this good also has the largest value under $w_i$. By \cref{lem:four-selection}, each remaining good in $A_j$ has value at most $h_i$ to agent $i$. If $h_i=a_{i4}$, no remaining good in $A_j$ has value larger than $a_{i4}$ to agent $i$. Otherwise, at most three goods have value larger than $a_{i4}$ to agent $i$, and one of them belongs to $A_i$, so at most one remaining good in $A_j$ has value larger than $a_{i4}$, and this value is at most $h_i$. This proves the claim. Consequently, for any $C\geq1$, if $D_i^w(A_j)\leq Cw_i(A_i)$, then $D_i(A_j)\leq C(w_i(A_i)+h_i-a_{i4})\leq Cv_i(A_i)$.

\paragraph{Allocating the unselected goods.}
For each agent $i$, let $\ell_i=w_i(A_i)-\max_{g\in P_i}w_i(g)$ be the value of her current bundle under $w_i$ without a most valuable selected good, where the maximum is zero if $P_i$ is empty. This quantity never decreases: a good that is not selected increases it by its value under $w_i$, and a selected good increases it by the smaller of its value under $w_i$ and the previous maximum. At the end, $w_i(A_i)=a_{i4}+\ell_i$, since $P_i$ contains a good of value at least $a_{i4}$ to agent $i$. We compute from the predictions
\begin{equation*}
F_i=\sum_{k=8}^m a_{ik},\quad
x_i=\frac{\max\{0,F_i-3(C-1)a_{i4}\}}{3C+1}, \quad
T_i=2a_{i4}+\left(3+\frac1C\right)x_i,
\end{equation*}
where $F_i=0$ if $m<8$. The sum starts from the eighth occurrence because we will remove at most seven selected goods. We will show that $\ell_i\geq x_i$ at the end; otherwise, the goods that are not selected and are allocated to the other three agents could not account for the value $F_i$. If $\ell_i\geq T_i$, we compare agent $i$'s value for her bundle with her total value. If $\ell_i\leq T_i$, we use the following potential functions.

\begin{lemma}\label[lemma]{lem:four-potentials}
For every agent $i$ with $a_{i4}>0$, there exists a nondecreasing, concave, piecewise-linear function $\Phi_i$ that depends only on the predictions, such that $\Phi_i(0)=0$, $\Phi_i(u)\leq1$ for all $u\geq0$, and $\Phi_i(u+a_{i4})-\Phi_i(u)\geq a_{i4}/((C-1)a_{i4}+C\max\{x_i,u\})$ for all $u\in[0,T_i]$.
\end{lemma}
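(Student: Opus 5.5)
We normalize by $a=a_{i4}>0$: write $u=a\,s$, $x=a\,\xi$, $T=a\,\tau$, so $\tau=2+(3+1/C)\xi$. The target inequality becomes
\[
\Phi(s+1)-\Phi(s)\geq g(s):=\frac{1}{(C-1)+C\max\{\xi,s\}}\qquad\text{for } s\in[0,\tau].
\]
The function $g$ is constant on $[0,\xi]$. On $[\xi,\infty)$ it equals $1/((C-1)+Cs)$, which is decreasing and convex.

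We build $\Phi$ on a grid aligned with $\xi$ rather than with $0$.
\begin{enumerate}[label=\textup{(\arabic*)},leftmargin=2em]
\item On $[0,\xi+1]$, let $\Phi$ be linear with slope $g(\xi)$.
\item For $k\geq1$, let $\Phi(\xi+k+1)=\Phi(\xi+k)+g(\xi+k)$, and interpolate linearly between consecutive grid points.
\item Continue up to the first grid point beyond $\tau+1$, and make $\Phi$ constant afterwards.
\end{enumerate}
The successive slopes $g(\xi)\geq g(\xi+1)\geq\cdots\geq0$ are nonincreasing and nonnegative. Hence $\Phi$ is nondecreasing, concave and piecewise linear, with $\Phi(0)=0$. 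Since $\Phi$ is built from $\xi=x/a$ and $a$, and both come from the predictions, $\Phi$ depends only on the predictions.

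Next we check the increment inequality.
\begin{itemize}
\item If $s\leq\xi$, then $[s,s+1]\subseteq[0,\xi+1]$, so the increment is exactly $g(\xi)=g(s)$.
\item If $s=\xi+k+t$ with $k\geq0$ and $t\in[0,1)$, the increment over $[s,s+1]$ of the interpolated function is the convex combination $(1-t)\,g(\xi+k)+t\,g(\xi+k+1)$. This needs one check at $k=0$: on $[\xi,\xi+1]$ the slope is $g(\xi)$, and on $[\xi+1,\xi+2]$ the increment is $g(\xi+1)$, so the formula still holds.
\item By convexity of $g$ on $[\xi,\infty)$, this convex combination is at least $g(\xi+k+t)=g(s)$.
\end{itemize}
This settles the increment condition on all of $[0,\tau]$.

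The remaining and main step is the bound $\Phi\leq1$. This amounts to
\[
(\xi+1)\,g(\xi)+\sum_{k=1}^{K}g(\xi+k)\leq1,
\]
where $K$ is the number of unit steps needed to cover up to $\tau$, roughly $\lceil\tau-\xi\rceil=\lceil 2+(2+1/C)\xi\rceil$.

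At $\xi=0$ only two terms beyond the first segment are needed ($s=1,2$). This gives
\[
\frac1{C-1}+\frac1{2C-1}+\frac1{3C-1}\approx0.618+0.236+0.146\approx1.000.
\]
So the choice $C=(3+\sqrt5)/2$ is presumably what makes this tight. Here we must handle the endpoint carefully: only increments starting at $s\leq\tau$ are needed, so the last segment may be cut off exactly at $\tau+1$ rather than at the next grid point.

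For general $\xi$, we would compare the sum with the first term plus an integral using the convexity of $g$. We then need to show that the resulting expression
\[
\frac{\xi+1}{(C-1)+C\xi}+\frac1C\ln\frac{(C-1)+C\tau}{(C-1)+C\xi}
\]
(or a sharper trapezoid version of it) is at most $1$. As $\xi\to\infty$, the first term tends to $1/C$ and the logarithmic term tends to $\frac1C\ln(3+1/C)$, whose sum is below $1$.

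If the crude integral bound fails for intermediate $\xi$, we would fall back to placing the grid breakpoints at $\xi,\xi+1,\ldots$ and truncating at $\tau+1$ exactly, and then verify the resulting finite sum directly using the identity $C^2=3C-1$. This verification is the part we expect to be delicate.
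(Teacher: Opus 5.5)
Your construction of $\Phi$ matches the paper's once the last segment is cut off exactly at $\tau+1$, as you note yourself. Your checks of monotonicity, concavity, $\Phi(0)=0$, and the increment inequality via convexity of $s\mapsto1/((C-1)+Cs)$ are correct. The exact truncation is mandatory, though, not a fallback. With step~(3) as written, for small $\xi>0$ the construction runs up to the grid point $\xi+4$, and $\Phi(\infty)\to1+1/(4C-1)>1$ as $\xi\to0^+$.

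The genuine gap is the bound $\Phi\le1$, which is the substance of the lemma, and the route you propose for it fails.
\begin{itemize}
\item Your integral bound equals $\frac1{C-1}+\frac1C\ln(2+\sqrt5)\approx1.17$ at $\xi=0$, and numerically it stays above $1$ for all $\xi$ up to roughly $0.7$.
\item A sharper trapezoid or midpoint estimate cannot repair this. The exact total $S(\xi):=\Phi(\tau+1)$ equals $1$ at $\xi=0$ and tends to $1$ as $\xi\to0^+$. Hence any bound that is continuous in $\xi$ and strict at $\xi=0$ exceeds $1$ on a neighbourhood of $0$.
\item ``Verifying the finite sum directly'' is not yet an argument. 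For each value of $r=\lfloor2+(2+1/C)\xi\rfloor$ there is a whole interval of $\xi$ to cover, including the partial last segment of length $\theta=2+(2+1/C)\xi-r$.
\end{itemize}

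The missing idea is monotonicity in $\xi$. The paper writes $S(\xi)=\frac{\xi+1}d+\sum_{k=1}^r\frac1{d+Ck}+\frac\theta E$ with $d=C-1+C\xi$ and $E=d+C(r+1)$, and observes that $S$ is continuous at the breakpoints. It then shows $S'<0$ between them in three steps. First, it bounds $C\sum_{k=1}^r(d+Ck)^{-2}$ from below by the trapezoid inequality for the convex map $t\mapsto(d+Ct)^{-2}$. Second, it notes that the remaining $\theta$-dependent terms are largest at $\theta=1$. Third, it finishes with AM--GM and $C^2=3C-1$. This gives $S(\xi)\le S(0)=\frac1{C-1}+\frac1{2C-1}+\frac1{3C-1}=1$, which turns your observation at $\xi=0$ into a proof for all $\xi\ge0$. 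Without such an argument, or an equally complete case analysis over all $\xi\ge0$, the proposal does not establish the lemma.
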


We construct these functions in \cref{app:four-potentials}. For an agent $i$ with $a_{i4}=0$, let $\Phi_i\equiv0$. The algorithm applies the rule of \cref{lem:four-selection} to each arriving good $g$. If $g$ is selected, it allocates $g$ to the agent chosen by this rule and adds $g$ to her set of selected goods. Otherwise, it allocates $g$ to an agent $i$ maximizing $\Phi_i(\ell_i+w_i(g))-\Phi_i(\ell_i)$, breaking ties in a fixed order. After either allocation, it updates the recipient's value of $\ell_i$. Intuitively, since each $\Phi_i$ is concave, the algorithm tends to favor agents whose bundles are currently worth little to them. Moreover, since the potentials are bounded, the other agents can collect only a limited amount of value according to agent $i$ while her own value is small; once her own value is large, her total value yields the required bound.

\begin{theorem}\label{thm:four}
For four agents, there exists a deterministic polynomial-time online algorithm that, given exact frequency predictions, returns a $(3-\sqrt5)/2$-EF1 allocation.
\end{theorem}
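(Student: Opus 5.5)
By the reduction already given in the text, it suffices to show $D_i^w(A_j)\leq C\,w_i(A_i)$ for all distinct agents $i,j$ with $a_{i4}>0$. If $a_{i4}=0$, then $w_i\equiv0$ and the inequality is trivial. Throughout, fix such an agent $i$, and recall that at the end $w_i(A_i)=a_{i4}+\ell_i$. The plan has three steps: first prove $\ell_i\geq x_i$, then treat $\ell_i\geq T_i$ by a total-value argument, and finally treat $\ell_i\leq T_i$ with the potentials from \cref{lem:four-potentials}. The running time is clearly polynomial, since the selection rule, the potentials and the updates are all explicit.

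\textbf{Step 1: $\ell_i\geq x_i$.} The set of goods of value at most $a_{ik}$ for $k\geq8$ contains occurrences worth $F_i$ in total, and at most seven of them are selected goods. Consider any unselected good $g$ of positive $w_i$-value that goes to another agent $j$. If at that time $\ell_i$ was small, then the greedy rule preferred $j$, so
\[
\Phi_j(\ell_j+w_j(g))-\Phi_j(\ell_j)\geq\Phi_i(\ell_i+w_i(g))-\Phi_i(\ell_i).
\]
By concavity and the slope bound of \cref{lem:four-potentials}, the right-hand side is at least roughly $w_i(g)/((C-1)a_{i4}+C\max\{x_i,\ell_i\})$. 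Summing over the three other agents, each of whose potentials increases by at most $1$ in total, bounds the $w_i$-value that the other agents collect from unselected goods while $\ell_i<x_i$ by $3((C-1)a_{i4}+Cx_i)$. Together with $\ell_i<x_i$, this would account for less than $F_i$ whenever $(3C+1)x_i+3(C-1)a_{i4}\leq F_i$, which is exactly how $x_i$ is defined. This contradiction gives $\ell_i\geq x_i$.

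\textbf{Step 2: $\ell_i\geq T_i$.} Agent $i$'s total $w_i$-value is at most $7a_{i4}+F_i$. Hence, after removing her own bundle and one good from $A_j$, the bundle $A_j$ is worth at most
\[
6a_{i4}+F_i-\ell_i\leq 6a_{i4}+(3C+1)x_i+3(C-1)a_{i4}-\ell_i
\]
to her. Using $\ell_i\geq T_i=2a_{i4}+(3+1/C)x_i$, it remains to check that this is at most $C(a_{i4}+\ell_i)$. This reduces to a linear inequality in $a_{i4}$ and $x_i$ that should hold for $C=(3+\sqrt5)/2$. I expect this inequality to be tight, and to be the place where $C^2-3C+1=0$ is used.

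\textbf{Step 3: $\ell_i\leq T_i$.} Consider each unselected good $g\in A_j$ in arrival order, and split $A_j$'s goods by their value $w_i(g)\leq a_{i4}$. At each such arrival, $\ell_i\leq T_i$, since $\ell_i$ never decreases. Greedy optimality, concavity and the slope bound then give
\[
\Phi_j\text{-gain}\geq\frac{w_i(g)}{(C-1)a_{i4}+C\max\{x_i,\ell_i\}}\geq\frac{w_i(g)}{(C-1)a_{i4}+C\ell_i^{\mathrm{final}}},
\]
where the second inequality uses $\ell_i\geq x_i$ at the end. Since $\Phi_j$ gains at most $1$ in total, the unselected goods in $A_j$ are worth at most $(C-1)a_{i4}+C\ell_i$ to agent $i$. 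The selected goods of $A_j$, which number at most two, need separate handling: one of them is removed in $D_i^w$, and the other is worth at most $a_{i4}$. This yields
\[
D_i^w(A_j)\leq a_{i4}+(C-1)a_{i4}+C\ell_i=C(a_{i4}+\ell_i),
\]
as required.

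\textbf{Main obstacle.} The delicate point is the bookkeeping in Steps 1 and 3. The potential argument compares increments of $\Phi_i$ over an interval of length $w_i(g)\leq a_{i4}$, whereas the slope bound in \cref{lem:four-potentials} is stated for increments of length exactly $a_{i4}$. Concavity of $\Phi_i$ should convert one into the other, but I would verify this carefully. I would also check that the removed good and the selected goods in $A_j$ are counted correctly.
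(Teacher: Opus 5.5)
\textbf{Steps 1 and 3.} These follow the paper's argument. Greedy optimality, concavity and \cref{lem:four-potentials} give $w_i(A_j\setminus P_j)\le(C-1)a_{i4}+C\max\{x_i,\ell_i\}$ when $\ell_i\le T_i$. This yields $\ell_i\ge x_i$ by contradiction, and then $D_i^w(A_j)\le a_{i4}+w_i(A_j\setminus P_j)\le C(a_{i4}+\ell_i)$. One bookkeeping fix is needed in Step 1. The set whose $w_i$-value is at least $F_i$ should be $G$ minus the at most six selected goods of the other agents and one good of $P_i$ among agent $i$'s first four occurrences. Restricting to her occurrences $8,\ldots,m$ does not work, because selected goods of other agents among them are not controlled by the potential bound.

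\textbf{Step 2 has a genuine gap.} Your bound $D_i^w(A_j)\le 6a_{i4}+F_i-\ell_i$ never subtracts the good removed from $A_j$. The inequality you then need, $6a_{i4}+F_i-\ell_i\le C(a_{i4}+\ell_i)$, is false. Take the worst case $F_i=3(C-1)a_{i4}+(3C+1)x_i$ and $\ell_i=T_i$; the inequality then reduces to $a_{i4}\le(3+1/C)x_i$. For instance, with $x_i=0$, $F_i=3(C-1)a_{i4}$ and $\ell_i=T_i=2a_{i4}$, your bound is $(3C+1)a_{i4}$ while the target is $3Ca_{i4}$. The coefficients of $a_{i4}$ differ by exactly $1$ for every $C$, so no choice of $C$ rescues it, and in particular $C^2-3C+1=0$ does not.

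\textbf{The missing idea} is a case split on whether $A_j$ contains one of agent $i$'s first four occurrences. Write $W_i=\sum_{k\ge5}a_{ik}$, so that $w_i(G)=4a_{i4}+W_i$.
\begin{itemize}
\item If $A_j$ contains such an occurrence, removing it subtracts $a_{i4}$. Together with $w_i(A_i)=a_{i4}+\ell_i$, this gives $D_i^w(A_j)+\ell_i\le 2a_{i4}+W_i$. Since $\ell_i\ge T_i\ge 2a_{i4}$, it follows that $D_i^w(A_j)\le W_i$.
\item If $A_j$ contains none, then $D_i^w(A_j)\le w_i(A_j)\le W_i$ directly.
\end{itemize}
Either way you finish with $W_i\le 3a_{i4}+F_i\le 3Ca_{i4}+(3C+1)x_i=C(a_{i4}+T_i)\le C(a_{i4}+\ell_i)$.

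This last chain holds for every $C$, purely by the definitions of $x_i$ and $T_i$. So the identity $C^2=3C-1$ is not used in this case, contrary to your expectation. It enters only in \cref{lem:four-potentials}, where it gives $\frac{1}{C-1}+\frac{1}{2C-1}+\frac{1}{3C-1}=1$ and hence $\Phi_i\le 1$.
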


We prove \cref{thm:four} in \cref{app:four-proof}, and outline the argument here. By the discussion above, it suffices to show that $D_i^w(A_j)\leq Cw_i(A_i)$ for all distinct agents $i,j$ with $a_{i4}>0$, where $\ell_i$ now denotes its final value. Consider a good $g\in A_j\setminus P_j$. When $g$ arrives, the increase in agent $j$'s potential is at least the increase in agent $i$'s potential that $g$ would cause. Since $\ell_i$ never decreases, $\Phi_i$ is concave, and agent $j$'s potential is at most $1$, \cref{lem:four-potentials} implies that if $\ell_i\leq T_i$, then
\[
w_i(A_j\setminus P_j)\leq(C-1)a_{i4}+C\max\{x_i,\ell_i\}.
\]
Applying this bound to each of the other three agents gives $\ell_i\geq x_i$, as mentioned above. If $\ell_i\leq T_i$, then since $A_j$ contains at most two selected goods, we obtain
\[
D_i^w(A_j)\leq a_{i4}+w_i(A_j\setminus P_j)\leq C(a_{i4}+\ell_i)=Cw_i(A_i).
\]
If $\ell_i\geq T_i$, we show that $D_i^w(A_j)$ is at most agent $i$'s total value for the goods other than her first four occurrences, and the choice of $T_i$ ensures that this total value is at most $C(a_{i4}+T_i)\leq Cw_i(A_i)$.

The value $C=(3+\sqrt5)/2$ is the smallest possible for the potential requirements in \cref{lem:four-potentials}. When $x_i=0$, we have $T_i=2a_{i4}$, so summing the inequalities at $u=0,a_{i4},2a_{i4}$ gives
\[
\frac1{C-1}+\frac1{2C-1}+\frac1{3C-1}\leq\Phi_i(3a_{i4})\leq1,
\]
with equality at the chosen value of $C$. We do not know whether the factor $(3-\sqrt5)/2$ is the best possible for four agents.

\section{Learning-Augmented Guarantees under Noisy Frequency Predictions}\label{sec:inaccurate}

We now consider noisy frequency predictions, as in \citet[Section~5.1]{NeohPetersTeh2026}. For each agent $i$, the algorithm is given a multiset of $m$ predicted values, which we write in nonincreasing order as $\widehat a_{i1}\geq\cdots\geq\widehat a_{im}$; as before, $a_{i1}\geq\cdots\geq a_{im}$ denote her true values in nonincreasing order. In particular, the algorithm knows the number of goods, but not the total value of each agent. We say that an algorithm is \emph{consistent} if it returns an EEFX allocation whenever the predictions are exact, i.e., $\widehat a_{ir}=a_{ir}$ for all $i\in N$ and $r\in[m]$.

Following \citet{NeohPetersTeh2026}, we use an online \emph{instantiation rule}, which assigns to each arriving good $g$ an unused predicted value $\widetilde v_i(g)$ of each agent $i$. The resulting \emph{predicted valuation} $\widetilde v_i$ uses each predicted value of agent $i$ exactly once. We run an algorithm for exact predictions on the predicted valuations, and allocate each good to the agent chosen by this algorithm. We first give multiplicative guarantees for EEFX, then show that larger overestimates rule out any positive approximation, and finally give additive guarantees for EF$k$.

\subsection{Relative Prediction Error}
\label{sec:eefx-relative}

We say that agent $i$'s predictions are \emph{underestimates with error factor $\rho_i\geq1$} if $\widehat a_{ir}\leq a_{ir}\leq\rho_i\widehat a_{ir}$ for every $r\in[m]$, and \emph{overestimates with error factor $\rho_i$} if $a_{ir}\leq\widehat a_{ir}\leq\rho_i a_{ir}$ for every $r\in[m]$. If the total value of each agent were known and equal to her total predicted value, either condition would imply that the predictions are exact.

For underestimates, the instantiation rule assigns to an arriving good the largest unused predicted value that is at most its true value, or the smallest unused predicted value if no such value exists. For overestimates, it assigns the smallest unused predicted value that is at least the true value, or the largest unused predicted value if no such value exists. In both cases, ties are broken deterministically, and we run the algorithm of \cref{thm:eefx} on the predicted valuations. Note that neither rule uses the error factors.

\begin{theorem}\label{thm:eefx-relative}\label{thm:eefx-under}\label{thm:eefx-over}\label{thm:eefx-smooth-optimal}
For any number of agents, the two instantiation rules above yield deterministic polynomial-time consistent algorithms with the following guarantees for every agent $i$, regardless of the predictions for the other agents.
\begin{enumerate}[label=\textup{(\roman*)},leftmargin=2em]
\item If agent $i$'s predictions are underestimates with error factor $\rho_i$, then $A_i$ is $\frac{1}{2\rho_i-1}$-EEFX for her.
\item If agent $i$'s predictions are overestimates with error factor $\rho_i$, then $A_i$ is $\alpha$-EEFX for her with $\alpha=\max\{0,(2(n-1)-(2n-3)\rho_i)/\rho_i^2\}$. If $n\geq3$, this also holds with $\alpha=\max\{0,1-2(n-1)(\rho_i-1)-(n-1)(2n-1)(\rho_i-1)^2\}$.
\end{enumerate}
On the other hand, for every $\rho>1$, no consistent algorithm guarantees $\beta$-EEFX for any $\beta>1/(\rho+1-1/\rho)$ when all agents' predictions are underestimates with error factor $\rho$. Likewise, for every $1<\rho<2^{1/(2n-2)}$, no consistent algorithm guarantees $\beta$-EEFX for any $\beta>\rho^{-2(n-1)}$ when all agents' predictions are overestimates with error factor $\rho$. These impossibility results hold even with identical predictions and identical true valuations in which every good has positive value.
\end{theorem}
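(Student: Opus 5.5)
Consistency is immediate: with exact predictions, both instantiation rules assign to each good its true value (inductively, the unused predicted multiset equals the multiset of true values of the goods yet to arrive, so a predicted value equal to the true value is always available). Hence $\widetilde v_i=v_i$, and \cref{thm:eefx} gives EEFX. For the guarantees, I fix agent $i$ and work with the predicted valuation $\widetilde v_i$. \Cref{thm:eefx} applied to $\widetilde v_i$ (whose multiset is exactly $\{\widehat a_{ir}\}$) gives that $A_i$ is EEFX for $i$ under $\widetilde v_i$, witnessed by a partition $(Y_1,\dots,Y_{n-1})$ of $G\setminus A_i$ with part sizes $|A_j|$. It then remains to compare $\widetilde v_i$ with $v_i$ good by good. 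The idea is that the rule keeps $\widetilde v_i(g)$ close to $v_i(g)$ except on a few ``misassigned'' goods. I will try to show the following. Under underestimates, every good $g$ satisfies $\widetilde v_i(g)\le v_i(g)$. Moreover, $v_i(g)\le\rho_i\widetilde v_i(g)$ fails only in a controlled way, which I will bound via the sorted-rank comparison: at the time $g$ arrives, the rank of $v_i(g)$ among the remaining true values matches the rank of an available predicted value. Under overestimates the inequalities are reversed.

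For (i), I aim for the inequality $v_i(Y_p\setminus\{g\})\le(2\rho_i-1)v_i(A_i)$. I would write $v_i(Y_p\setminus\{g\})\le\rho_i\widetilde v_i(Y_p\setminus\{g\})+E$, where $E$ accounts for goods whose predicted value is below the true value by more than a factor $\rho_i$. The plan is to show that such excess can be charged against one extra copy of the largest element of $A_i$: $E\le(\rho_i-1)\widetilde v_i(A_i)$. Then $\widetilde v_i(A_i)\le v_i(A_i)$ gives $(\rho_i+\rho_i-1)=2\rho_i-1$. For (ii), I use the analogous argument. However, the charging now has to be done against all $n-1$ parts simultaneously, since $v_i(A_i)\ge\widetilde v_i(A_i)/\rho_i$ and each part may absorb misassigned mass. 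This yields $(2(n-1)-(2n-3)\rho_i)/\rho_i^2$. The second form for $n\ge3$ is then a Taylor-type lower bound, verified by elementary algebra. Alternatively, it might come from a refined argument using \cref{lem:eefx-efk}-style counting; I would check which one actually holds.

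For the impossibility results, I construct small instances with identical predictions and identical true valuations. In the underestimate case, take predicted values such that a consistent algorithm, which must behave as EEFX-producing on the exact instance it cannot yet rule out, commits to allocating early goods as in the exact case. The adversary then scales the remaining true values up by $\rho$, so that agent $i$'s bundle stays small while a part gains $\rho+1-1/\rho$ relative value. In the overestimate case, I iterate the scaling $2(n-1)$ times across the $n-1$ other agents; the constraint $\rho<2^{1/(2n-2)}$ ensures that the sorted order is preserved. The main obstacle is the upper-bound charging argument in (ii). There, controlling how misassigned predicted values spread across the $n-1$ parts requires a careful invariant on the instantiation rule, and I would first try to prove a rank-matching invariant (the $r$th largest remaining predicted value versus the $r$th largest remaining true value) by induction over arrivals.
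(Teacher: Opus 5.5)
Your consistency argument and your use of the EEFX witness partition $(Z_j)$ for $\widetilde v_i$ match the paper. However, part~(i) has a genuine gap, because you keep that same partition. The rule only guarantees $\widetilde v_i(g)\le v_i(g)$. A single good can have $v_i(g)>\rho_i\widetilde v_i(g)$ once earlier goods have taken the larger predicted values, and the two-sided rank invariant you want to prove by induction can fail after one arrival; only the one-sided comparison survives, via an exchange argument.

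Here is a concrete instance. Let $n=3$, $\rho=\rho_i>1$ and $0<b<1/\rho$. Agent $i$ has predictions $\{1,1/\rho,1/\rho,b\}$, and goods $d,a,x,y$ arrive in this order with true values $1,1/\rho,\rho,b$; this is a valid underestimate. The rule sets $\widetilde v_i(d)=1$, $\widetilde v_i(a)=\widetilde v_i(x)=1/\rho$ and $\widetilde v_i(y)=b$. Suppose the other agents' values yield the picking sequence $(k,i,l,l)$ (e.g., $i$ gets $g_2$ in the first IDO round and the holder of $g_3$ strictly prefers $g_2$), and $d$ is $k$'s top good. Then the meta-algorithm returns $A_i=\{a\}$, $A_k=\{d\}$ and $A_l=\{x,y\}$.

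With $V=1/\rho$, the only partition witnessing EEFX under $\widetilde v_i$ is $\{d\},\{x,y\}$. Under true values, removing $y$ leaves $\rho$, so this partition certifies only $1/\rho^2<1/(2\rho-1)$, and your excess is $E=\rho-1>(\rho-1)V$. The missing idea is to re-partition. The paper takes the bijection $\psi$ of $G\setminus A_i$ that sends goods in nonincreasing true value to goods in nonincreasing predicted value, and uses $Y_j=\psi^{-1}(Z_j)$ (here $\{x\},\{d,y\}$). It then proves $\sum_{g\notin A_i}(v_i(g)-\rho\widetilde v_i(\psi(g)))_+\le(\rho-1)V$ by threshold counting. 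For each $t$, outside $A_i$ the number of true values above $t$ exceeds the number of values $\rho\widetilde v_i$ above $t$ by at most the reverse surplus inside $A_i$. That surplus integrates to at most $(\rho-1)\widetilde v_i(A_i)$ because $v_i\ge\widetilde v_i$ pointwise.

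For part~(ii), the same partition does suffice, since $v_i\le\widetilde v_i$ pointwise. The problem is your premise $v_i(A_i)\ge\widetilde v_i(A_i)/\rho_i$, which is false. Take $n=2$, identical predictions $\{1,1/\rho,1/\rho^2\}$, and goods with common true values $1/\rho,1/\rho^2,1/\rho^3$ arriving in this order. The last good is instantiated at $1$, and the algorithm gives it alone to one agent. In fact, the premise would yield $1/\rho_i$-EEFX, contradicting the $\rho^{-2(n-1)}$ impossibility in the same theorem.

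The paper instead lower-bounds $s=v_i(A_i)$ globally. It raises outside true values to their predicted values and discards singleton parts with predicted value above $V$. Summing the sorted lower bounds then gives $\rho(s+W)\ge V+W$ with $W\le(2q-1)V+b$, where $q\le n-1$ is the number of remaining parts and $b$ the smallest remaining outside value. Separately, $\rho s\ge b$ (or else $s\ge V/\rho$).

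The second bound is also not a Taylor consequence of the first. At $\rho=1+\varepsilon$, the first bound is $1-(2n-1)\varepsilon+O(\varepsilon^2)$ and the second is $1-2(n-1)\varepsilon+O(\varepsilon^2)$. So the second is stronger near $\rho=1$ and needs the paper's separate case analysis on the largest predicted value $h$ in $A_i$, including parts consisting of two large goods.

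Finally, your impossibility sketches do not yet specify instances. For underestimates, the paper keeps every good except the last one exact; consistency forces the many tiny goods of value $b$ onto one agent, and only the last good is raised from $1/\rho$ to $\rho$. For overestimates, the condition $\rho<2^{1/(2n-2)}$ does not serve to preserve sorted order. Its role is to make every value of $1,\rho^{-1},\ldots,\rho^{-(2n-2)}$ exceed $1/2$, which is what lets consistency force an agent to stay empty until the last good.
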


For fixed $n\geq3$ and $\rho_i=\rho=1+\varepsilon$, the guarantees in \cref{thm:eefx-relative} are $1-2\varepsilon+O(\varepsilon^2)$ for underestimates and $1-2(n-1)\varepsilon+O(\varepsilon^2)$ for overestimates, and the impossibility results show that the linear terms cannot be improved. For overestimates, this uses the second bound in part~(ii). Hence, the best possible first-order loss is independent of $n$ for underestimates, but grows linearly with $n$ for overestimates. For underestimates, the linear term is also the best possible for $n=2$.

We prove \cref{thm:eefx-relative} in \cref{app:eefx-inaccurate}. For underestimates, agent $i$ values her bundle at least as much as its predicted value, and comparing the number of true and predicted values above each threshold shows that the true values of the goods outside her bundle exceed $\rho_i$ times their predicted values by at most $\rho_i-1$ times the predicted value of her bundle in total. For overestimates, her bundle may be worth less than its predicted value, and we use the EEFX partition for the predicted values to bound how much it can lose. For the impossibility results, consistency forces a particular allocation of all goods except the last one, and changing the value of the last good makes this allocation unfair.

For three agents, these algorithms also give each agent an EF1 approximation with half her EEFX factor in \cref{thm:eefx-relative}. Indeed, the proof of \cref{lem:eefx-efk} gives $v_i(A_i)\geq(\alpha/2)D_i(A_j)$ for every $j\neq i$ whenever $A_i$ is $\alpha$-EEFX for agent $i$ and every bundle is nonempty. The algorithms ensure nonempty bundles when $m\geq3$; when $m<3$, their allocations are EF1.

\subsection{Impossibility under Larger Overestimates}

While \cref{thm:eefx-relative}(i) gives a positive approximation of EEFX for underestimates with any error factor, we now show that this is not the case for overestimates, even if the algorithm knows the error factor.

\begin{theorem}\label{thm:eefx-over-impossible}
For two agents whose predictions are overestimates with a common error factor $\rho$, a consistent online algorithm can guarantee a positive approximation of EEFX if and only if $\rho<2$. For every $n\geq3$, no consistent online algorithm guarantees a positive approximation of EEFX when all agents' predictions are overestimates with error factor $3/2$. These impossibility results hold even with identical predictions and identical true valuations in which every good has positive value, and even if the algorithm knows $\rho$.
\end{theorem}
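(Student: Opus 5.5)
The positive direction for two agents should follow directly from \cref{thm:eefx-relative}(ii). For $n=2$ the first bound there reads $\alpha=\max\{0,(2-\rho)/\rho^2\}$, which is positive exactly when $\rho<2$. Since that algorithm ignores $\rho$, it also works when $\rho$ is known. All the remaining work is in the two impossibility constructions, and I would build both on the same template, sketched in the discussion after \cref{thm:eefx-relative}.

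\textbf{Template.} Fix a consistent algorithm and a large parameter $K$. Choose identical predictions $\widehat a_{1}\geq\cdots\geq\widehat a_{m}$ so that two true sorted value sequences both satisfy the overestimate condition $a_r\leq\widehat a_r\leq\rho a_r$:
\begin{itemize}
\item the exact one, $a_r=\widehat a_r$;
\item a \emph{shifted} one, in which every true value is a predicted value moved down by one rank (or several ranks).
\end{itemize}
With ratio $2$ between consecutive predicted values, $a_r=\widehat a_{r+1}=\widehat a_r/2$ is allowed, which is why $\rho=2$ is the threshold. The adversary reveals all goods except the last with values taken from the prediction multiset. Up to that point the history is indistinguishable from one in which the predictions are exact and a specified last good is still to come.

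\textbf{Forcing the allocation.} I would next show, by a case analysis on each decision, that consistency pins down the allocation of these first $m-1$ goods. If the algorithm deviates, the adversary completes the instance exactly, with the last good chosen so that the completed allocation is not EEFX. This contradicts consistency. The forced allocation should leave some agent $i$ whose bundle is worth only about $1/K$ of what the remaining goods are worth to her.

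\textbf{Reversing the last good.} The adversary then instead reveals the last good with the shifted value, which the overestimate condition allows. The shifted value is small (for example, the predicted top value $K$ is never realized and the last good is worth about $1$). Agent $i$'s bundle is then tiny. Meanwhile the goods she does not hold cannot be split into $n-1$ parts, each of which, after removing any one good, is worth at most $\beta K^{-1}$-small multiples of her bundle. Letting $K\to\infty$ rules out every $\beta>0$.

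\textbf{Two agents, $\rho\geq2$.} Here EEFX is just EFX, so the forcing step only has to compare two bundles. I would use a geometric chain of predicted values with ratio $2$, e.g.\ $\widehat a_r=2^{-r}$ scaled by $K$ in a few positions, together with the true sequence $a_r=\widehat a_{r+1}$.

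\textbf{$n\geq3$, $\rho=3/2$.} The extra agents let the adversary use a one-rank shift with ratio $3/2$ instead of $2$. The partition into $n-1$ parts for agent $i$ gives the adversary room: it suffices that the goods outside $A_i$ contain two goods she values highly that must share a part. I would pad with $n-3$ agents who are forced to take one dummy good each, reducing this to a three-agent core.

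The main obstacle is the forcing step. Consistency only constrains exact instances, so every alternative allocation decision needs an exact completion that refutes it, and this must be arranged while every revealed value stays inside the prediction multiset. I would first try to find small instances ($m=3$ or $4$ for two agents, $m=n+1$ or $n+2$ otherwise) by exhaustive reasoning, and then add the scaling by $K$.
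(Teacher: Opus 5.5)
Your positive direction for two agents is correct and matches the paper. Your padding with $n-3$ large goods, and your closing observation that two highly valued goods must share a part, also match the paper. The gap is in the impossibility mechanism itself.

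In your template, the exact and the inaccurate instances differ only in the last good, whose value drops from $y$ to some $x<y$. The overestimate condition then forces the predicted values between the ranks of $y$ and $x$ to have consecutive ratios at most $\rho$, so $x\geq\rho^{-m}y$. In particular, no factor-$K$ jump can lie inside the shifted segment.

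Moreover, any consistent algorithm can be changed at the last step to give the last good to the agent who would receive it in the exact instance. Her bundle goes from $S\cup\{y\}$ to $S\cup\{x\}$ against the same outside goods and the same partition, and every other agent only sees an outside good become cheaper. So the allocation is $(x/y)$-EEFX. Hence an instance of your type with $m$ goods can never rule out factors below $\rho^{-m}$, however large $K$ is. Your plan to find instances with $m\in\{3,4\}$ or $m\in\{n+1,n+2\}$ and then rescale cannot succeed.

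The number of goods must grow. For two agents, the paper uses the chain $2^0,\ldots,2^h$. The exact last good $2^h$ exceeds the total of the first $h$ goods, so EFX forces all of them to one agent, and replacing the last good by $1/2$ gives a factor of at most $(1/2)/(2^h-2)$. For $n\geq3$, it uses $Q=\{q^{-1},\ldots,q^{-h}\}$ with $q=3/2$, giving a factor of at most $q^{2-h}$. In both cases $h\to\infty$.

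For $n\geq3$, the forcing step that you call the main obstacle is the actual content of the proof, and your proposal gives no way to carry it out. EEFX is permissive enough that consistency pins down very little, so the paper needs several specific ingredients:
\begin{enumerate}
\item The ratio $q=3/2$ makes the total of $Q$ equal to $T=2-2q^{-h}<2$, which the argument uses repeatedly.
\item The rational root theorem shows that disjoint nonempty subsets of $Q$ have different sums.
\item A lemma shows that no three disjoint subsets of $Q$ can each be EFX toward some two-part partition of the remaining goods of $\{1\}\cup Q$, using $T<2$ and $T-(q+2)q^{-h}>0$ for $h\geq4$. This forces the goods of value $J$ to different agents, and the good of value $1$ (the exact last good) to one of the three agents without such a good.
\item The distinct-sums property then shows that the other two of these agents receive all of $Q$.
\end{enumerate}
Consequently, the agent who is to receive the good of value $1$ holds nothing before it arrives, and must then take the tiny inaccurate last good. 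Without an argument of this kind, the case $n\geq3$ remains unproved.
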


For two agents and $\rho<2$, \cref{thm:eefx-relative}(ii) gives the factor $(2-\rho)/\rho^2$. We prove the impossibility results in \cref{app:eefx-over-impossible}. In both constructions, consistency forces the algorithm to leave an agent without any good until the last good arrives, and the allowed error makes this good much less valuable than predicted. For $n\geq3$, \cref{thm:eefx-relative}(ii) gives a positive approximation whenever $\rho<1+1/(2n-3)$, and determining the exact threshold is left open.

\subsection{Additive Guarantees in Terms of Instantiation Error}\label{sec:eefx-additive}

Finally, we consider predictions that may both underestimate and overestimate the true values. As in \citet[Section~5.1]{NeohPetersTeh2026}, a natural instantiation rule assigns to each arriving good a closest unused predicted value. For any instantiation rule, we define the \emph{instantiation error} of agent $i$ as $\eta_i=\sum_{g\in G}|v_i(g)-\widetilde v_i(g)|$. For the rule above, this error is zero when the predictions are exact, but may be larger than the minimum cost of matching agent $i$'s true values to her predicted values. We give the following EF$k$ guarantee.

\begin{proposition}\label[proposition]{prop:inaccurate}
Let $k\geq1$ and $\alpha\in[0,1]$, and consider an online algorithm that guarantees $\alpha$-EF$k$ given exact frequency predictions. If this algorithm is run on the predicted valuations obtained by any instantiation rule, then the resulting allocation satisfies $v_i(A_i)\geq\alpha D_i^{(k)}(A_j)-\eta_i$ for all distinct agents $i,j$.
\end{proposition}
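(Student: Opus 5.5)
The plan is to compare the true valuation $v_i$ with the predicted valuation $\widetilde v_i$ on both sides of the inequality. First I would check that the run on the predicted valuations is a legitimate exact-prediction instance. The predicted valuations $\widetilde v_1,\ldots,\widetilde v_n$ use each predicted value of each agent exactly once, so the predicted multisets are exactly the frequency multisets of the instance $(\widetilde v_1,\ldots,\widetilde v_n)$. Each $\widetilde v_i(g)$ is fixed when $g$ arrives, and the allocation decisions depend only on these values. Hence the whole run coincides with a run of the algorithm against some adaptive adversary that respects exact predictions, namely the adversary that reveals $\widetilde v_i(g)$ for each arriving $g$. Therefore the guarantee applies, and the final allocation satisfies $\widetilde v_i(A_i)\geq\alpha\widetilde D_i^{(k)}(A_j)$ for all distinct $i,j$, where $\widetilde D_i^{(k)}$ is computed with respect to $\widetilde v_i$.

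Next I would split the error $\eta_i$ between the two bundles. Writing $\eta_i(S)=\sum_{g\in S}|v_i(g)-\widetilde v_i(g)|$, we have $v_i(A_i)\geq\widetilde v_i(A_i)-\eta_i(A_i)$. The main step is to show
\[
D_i^{(k)}(A_j)\leq\widetilde D_i^{(k)}(A_j)+\eta_i(A_j).
\]
To prove it, let $X$ be the set of $k$ most valuable goods in $A_j$ under $\widetilde v_i$, or all of $A_j$ if $|A_j|\leq k$. Since $D_i^{(k)}(A_j)$ removes the $k$ goods that are best under $v_i$, it is the minimum of $v_i(A_j\setminus Y)$ over all $Y\subseteq A_j$ with $|Y|=\min\{k,|A_j|\}$. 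Taking $Y=X$ gives
\[
D_i^{(k)}(A_j)\leq v_i(A_j\setminus X)\leq\widetilde v_i(A_j\setminus X)+\eta_i(A_j\setminus X)\leq\widetilde D_i^{(k)}(A_j)+\eta_i(A_j).
\]
This minimum-over-subsets characterization is the only point that needs care, and it holds because values are nonnegative.

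Combining the pieces with $\alpha\leq1$ gives
\[
v_i(A_i)\geq\widetilde v_i(A_i)-\eta_i(A_i)\geq\alpha\widetilde D_i^{(k)}(A_j)-\eta_i(A_i)\geq\alpha D_i^{(k)}(A_j)-\alpha\eta_i(A_j)-\eta_i(A_i).
\]
The right-hand side is at least $\alpha D_i^{(k)}(A_j)-\eta_i$, since $A_i$ and $A_j$ are disjoint and so $\eta_i(A_i)+\eta_i(A_j)\leq\eta_i$. For a randomized algorithm, the same argument applies to every final allocation that can occur.
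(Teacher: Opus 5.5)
Your proof is correct and follows essentially the same route as the paper: apply the $\alpha$-EF$k$ guarantee to the predicted valuations, take a witness set $X\subseteq A_j$ of at most $k$ goods (you fix it as the $k$ most valuable goods under $\widetilde v_i$), and charge the errors on $A_i$ and on $A_j\setminus X$ separately, using that $A_i$ and $A_j$ are disjoint and that $\alpha\leq1$. Your explicit check that the run is a legitimate exact-prediction run against an adaptive adversary is slightly more careful than the paper's one-line remark, but it is not a different argument.
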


\begin{proof}
Since each predicted value is used exactly once, the predicted valuations form an instance with exact predictions, so the allocation is $\alpha$-EF$k$ with respect to the predicted valuations. Fix distinct agents $i,j$, and let $X\subseteq A_j$ with $|X|\leq k$ be such that $\widetilde v_i(A_i)\geq\alpha\widetilde v_i(A_j\setminus X)$. Then
\begin{align*}
v_i(A_i)
&\geq\alpha v_i(A_j\setminus X)
-\sum_{g\in A_i}|v_i(g)-\widetilde v_i(g)| -\alpha\sum_{g\in A_j\setminus X}|v_i(g)-\widetilde v_i(g)|\\*
&\geq\alpha D_i^{(k)}(A_j)-\eta_i,
\end{align*}
where the last inequality holds because $A_i$ and $A_j$ are disjoint and $\alpha\leq1$.
\end{proof}

The same argument applies to each comparison in an EEFX partition. Hence, running the algorithm of \cref{thm:eefx} on the predicted valuations gives every agent $i$ a partition of the goods outside her bundle such that $v_i(A_i)\geq v_i(Y\setminus\{g\})-\eta_i$ for every part $Y$ and every $g\in Y$. With the rule above, $\eta_i=0$ when the predictions are exact, so applying \cref{prop:inaccurate} to the algorithms of \cref{thm:three,thm:four} recovers their guarantees.

For each agent, online instantiation is an instance of online minimum-cost matching on the line, where the predicted values are fixed servers and the true values arrive as requests \citep{KalyanasundaramPruhs1993}. Since \cref{prop:inaccurate} holds for any instantiation rule, any online matching algorithm can be used. For example, the deterministic algorithm of \citet{Raghvendra2018} ensures that $\eta_i$ is at most $O(\log m)$ times the minimum matching cost, whereas no online matching algorithm can guarantee a factor of $o(\sqrt{\log m})$ \citep{PesericoScquizzato2023}.

\section{Conclusion}

In this paper, we have studied envy-based fairness in online fair division with frequency predictions. We showed that a positive approximation of EF1 is possible if and only if there are at most four agents, thereby answering the question of \citet{NeohPetersTeh2026}, and that no positive approximation of EFX is possible for three or more agents, even when the valuations take only two distinct positive values. On the positive side, the meta-algorithm of \citet{NeohPetersTeh2026} yields EEFX for any number of agents; on the negative side, algorithms that fix a picking sequence in advance guarantee a positive approximation of EF$k$ only when $n\leq k+2$. For noisy predictions, we obtained consistent algorithms whose first-order losses are the best possible. Overall, our results indicate that frequency predictions suffice for guarantees that depend only on an agent's own bundle, such as shares and EEFX, but not for EF1 when there are five or more agents.

Besides determining the best possible factor for four agents and the exact threshold for overestimates when $n\geq3$, it would be interesting to study frequency predictions for agents with different entitlements \citep{ChakrabortyEtAl2021,SuksompongTeh2022,SuksompongTeh2023,MontanariEtAl2025,Teh2026Weighted}, for groups of agents \citep{KyropoulouSuksompongVoudouris2020,ScarlettTehZick2023}, and for chores or mixed manna \citep{AzizEtAl2022,ElkindIgarashiTeh2024,Teh2026Chores,Teh2026Weighted}, which have also been studied in online and temporal settings \citep{WangWei2026,AmanatidisEtAl2026Buffers,ChoiLiTeh2026}.

\clearpage
\appendix

\bibliographystyle{plainnat}
\bibliography{bib}

\section{Proof of \texorpdfstring{\cref{lem:picking-transfer}}{Lemma 2.2}}\label{app:picking-transfer}

\begin{proof}[Proof of \cref{lem:picking-transfer}]
We give each turn of the picking sequence its original position as a label, which it keeps when other turns are deleted. For a turn of agent $i$ that is currently in position $\ell$ of the remaining sequence, we call agent $i$'s $\ell$th largest remaining predicted value the value \emph{associated} with this turn. We show that the value associated with each turn never decreases until the turn is deleted, and that the good allocated when a turn is deleted is worth at least the value associated with it at that time.

Suppose that a good $g$ arrives and the remaining picking sequence has $s$ turns. For each agent $i$, let $b_{i1}\geq\cdots\geq b_{is}$ be her remaining predicted values, including $v_i(g)$, and let $c_{i1}\geq\cdots\geq c_{i,s-1}$ be these values after removing $v_i(g)$. The simulation runs the remaining picking sequence on $g$ and $s-1$ hypothetical goods, where agent $i$ values the $\ell$th hypothetical good at $c_{i\ell}$. Suppose that the turn in position $r$ picks $g$.

Consider a turn in position $\ell<r$, and let $i$ be its agent. Before $g$ is picked, the simulation picks the hypothetical goods in order of their indices, since every agent values them in this order and ties are broken in favor of smaller indices; hence, this turn picks the $\ell$th hypothetical good. Since ties are broken in favor of $g$, we have $c_{i\ell}>v_i(g)$, and hence $b_{i\ell}=c_{i\ell}$; that is, the value associated with this turn does not change. Next, let $i$ be the agent of the turn in position $r$. If $r<s$, then since this turn picks $g$ rather than the $r$th hypothetical good, we have $v_i(g)\geq c_{ir}$, so at most $r-1$ of her remaining predicted values are larger than $v_i(g)$; hence $v_i(g)\geq b_{ir}$. If $r=s$, then $v_i(g)\geq b_{is}$, since $b_{is}$ is her smallest remaining predicted value. Finally, a turn of agent $i$ in position $\ell>r$ moves to position $\ell-1$, and its associated value changes from $b_{i\ell}$ to $c_{i,\ell-1}\geq b_{i\ell}$.

Initially, the turn labeled $r$ is in position $r$, and the value associated with it is $a_{ir}=v_i^{\IDO}(g_r)$, where $i$ is its agent. Mapping each good in $A_i$ to $g_r$, where $r$ is the label of the turn deleted upon the arrival of this good, therefore gives a bijection $\varphi_i:A_i\to B_i$ with $v_i(g)\geq v_i^{\IDO}(\varphi_i(g))$ for every $g\in A_i$. This proves part~(i).

For part~(ii), fix an agent $i$ and a threshold $t$. By part~(i), the number of goods in $A_i$ that agent $i$ values at least $t$ is at least the number of goods in $B_i$ with value at least $t$ under $v_i^{\IDO}$. Since agent $i$'s true values and her values under $v_i^{\IDO}$ form the same multiset, the number of goods in $G\setminus A_i$ that agent $i$ values at least $t$ is at most the number of goods in $G\setminus B_i$ with value at least $t$ under $v_i^{\IDO}$. If agent $i$'s $r$th largest value for a good in $G\setminus A_i$ were larger than her $r$th largest value under $v_i^{\IDO}$ for a good in $G\setminus B_i$, then taking $t$ to be the former value would give at least $r$ such goods in $G\setminus A_i$ and at most $r-1$ in $G\setminus B_i$, a contradiction.
\end{proof}

\section{Proof of \texorpdfstring{\cref{thm:fixed-efk}}{Theorem 3.5}}\label{app:fixed-efk}

We prove the impossibility part of \cref{thm:fixed-efk}. In our construction, every agent has $k+1$ large values. We first use up all turns of one agent on goods of small value, and then give the goods that she values highly to another agent. The first step is possible because at least two agents have no turn among the first $k+1$ positions.

\begin{proof}
Fix $n\geq k+3$, $\alpha\in(0,1]$, and an algorithm of the stated form. Let $m=nk+1$ and $\ell=k(n-1)$, and choose
\[
H>\max\left\{\ell,\frac{\ell(\ell+1)}{2\alpha}\right\}.
\]
Every agent's frequency multiset consists of $k+1$ copies of $H$ and the values $1,2,\ldots,\ell$. We first describe the construction without normalization. Let $\pi$ be the picking sequence chosen by the algorithm, and let $B_1,\ldots,B_n$ be the IDO bundles as in \cref{sec:transfer}. If some $B_i$ is empty, then $A_i$ is empty by \cref{lem:picking-transfer}, and since $m>nk$, some other bundle contains at least $k+1$ goods; as all values are positive, the allocation is not $\alpha$-EF$k$. Hence, we may assume that every agent has at least one turn.

Since $n\geq k+3$, at least two agents have no turn among the first $k+1$ positions. Among these agents, let $i$ be one with the fewest turns, and let $q=|B_i|$. All turns of these agents lie among the last $\ell$ positions, so $q\leq\lfloor\ell/2\rfloor$. Since $\ell=k(n-1)\geq2k+1$, it follows that $q+k+1\leq\ell$. Moreover, since $m>nk$, some agent has at least $k+1$ turns, and such an agent $j$ can be chosen to be different from $i$. Indeed, otherwise, another agent with no turn among the first $k+1$ positions would have at most $k<|B_i|$ turns, contradicting the choice of $i$.

\paragraph{Allocating the goods of small value.}
For $t=1,\ldots,q$, the good $g_t$ has value $\ell-t+1$ to agent $i$ and value $t$ to every other agent. We claim that all of these goods are allocated to agent $i$. Suppose that this holds for $g_1,\ldots,g_{t-1}$, so only turns of agent $i$ have been deleted. Then her first remaining turn still comes after the first $k+1$ positions. Every other agent values $g_t$ at her unique smallest remaining predicted value, so in the simulation, every turn before agent $i$'s first remaining turn picks a hypothetical good rather than $g_t$, and these hypothetical goods are picked in order of their indices. When agent $i$'s turn comes, at least $k+1$ hypothetical goods have been picked, so none of her hypothetical goods of value $H$ remains. Since $g_t$ has her largest remaining value among the values $1,\ldots,\ell$, she picks $g_t$. This proves the claim, and all $q$ turns of agent $i$ are used.

\paragraph{Allocating the goods of large value.}
For $t=1,\ldots,k+1$, the good $g_{q+t}$ has value $H$ to agents $i$ and $j$, and value $q+t$ to every other agent; these values are available since $q+k+1\leq\ell$. Agent $i$ has no remaining turn. Every agent other than $i$ and $j$ again values the current good at her unique smallest remaining predicted value, so her turns before the first remaining turn of agent $j$ pick hypothetical goods. Agent $j$ picks the current good, since it has her largest value $H$ and ties are broken in its favor. Since agent $j$ has at least $k+1$ turns and received none of the first $q$ goods, all of the goods $g_{q+1},\ldots,g_{q+k+1}$ are allocated to her.

The remaining goods have values given by each agent's remaining predicted values in an arbitrary order. Agent $i$ receives no further good, whereas $A_j$ contains $k+1$ goods that she values at $H$. Therefore,
\[
v_i(A_i)=\sum_{t=1}^{q}(\ell-t+1)\leq\frac{\ell(\ell+1)}2<\alpha H
\]
and $D_i^{(k)}(A_j)\geq H$, so the allocation is not $\alpha$-EF$k$.

To make each agent's values distinct, we replace the $k+1$ copies of $H$ by $H,H+1,\ldots,H+k$, and let $g_{q+t}$ have value $H+k+1-t$ to agents $i$ and $j$. These are their largest remaining values, and all of them are at least $H$, so the argument still applies. To obtain normalized valuations, we divide all predicted values by their sum from the start, consider the picking sequence that the algorithm chooses for these predictions, and divide all true values in the construction by the same sum; this does not affect the argument. Finally, if the algorithm chooses the picking sequence at random, we fix a sequence that is chosen with positive probability and use the instance constructed for it. Conditioned on this sequence, the final allocation is not $\alpha$-EF$k$, so the algorithm does not guarantee $\alpha$-EF$k$.
\end{proof}

\section{Proof of \texorpdfstring{\cref{thm:two-value-efx}}{Theorem 4.2}}\label{app:two-value-efx}

\begin{proof}[Proof of \cref{thm:two-value-efx}]
Let $K>\max\{1,1/\alpha\}$ be an integer, $m=(n-2)(K+1)+2$, and $\ell=1/(2Km)$. Every agent's frequency multiset consists of one value $1$ and $m-1$ copies of $\ell$. Each of the first $m-2$ goods has value $\ell$ to every agent. Let $P$ be the set of agents who receive at least one of these goods. Every agent in $P$ values $g_{m-1}$ at $1$ and $g_m$ at $\ell$, while every agent outside $P$ values $g_{m-1}$ at $\ell$ and $g_m$ at $1$. These values are consistent with the predictions.

If some bundle is empty, then since $m\geq n$, some other bundle contains at least two goods; as all values are positive, the allocation is not $\alpha$-EFX. Hence, we may assume that every bundle is nonempty.

\paragraph{Case 1: $g_{m-1}$ is allocated to some agent $p\in P$.}
Only $g_m$ can be allocated to an agent outside $P$, so as every bundle is nonempty, $|P|\geq n-1\geq2$. Let $i\in P\setminus\{p\}$. Agent $i$ receives no good that she values at $1$, so $v_i(A_i)\leq m\ell=1/(2K)$. On the other hand, $A_p$ contains one of the first $m-2$ goods, and removing it from $A_p$ leaves a bundle containing $g_{m-1}$, which agent $i$ values at $1$. Since $1/(2K)<\alpha$, the allocation is not $\alpha$-EFX.

\paragraph{Case 2: $g_{m-1}$ is allocated to some agent $c\notin P$.}
If $g_m$ is also allocated to $c$, let $i$ be any agent in $P$. Then $v_i(A_i)\leq m\ell$, whereas removing $g_m$ from $A_c$ leaves $g_{m-1}$, which agent $i$ values at $1$. If $g_m$ is allocated to some agent $p\in P$, then $A_c=\{g_{m-1}\}$ and $v_c(A_c)=\ell$, whereas removing one of the first $m-2$ goods from $A_p$ leaves a bundle containing $g_m$, which agent $c$ values at $1$. In both cases, the allocation is not $\alpha$-EFX.

Otherwise, $g_m$ is allocated to an agent outside $P$ other than $c$, so $|P|\leq n-2$. Since the first $m-2$ goods are $(n-2)(K+1)$ goods allocated to the agents in $P$, some agent $j\in P$ receives at least $K+1$ of them. Agent $c$ has $v_c(A_c)=\ell$, while removing any good from $A_j$ leaves value at least $K\ell$ to her. Since $1/K<\alpha$, the allocation is not $\alpha$-EFX.

To obtain normalized valuations, we divide all values by $1+(m-1)\ell$ from the start, which does not affect the comparisons. Since the adversary depends only on the allocation of the first $m-2$ goods, the argument also applies to randomized algorithms.
\end{proof}

\section{Proof of \texorpdfstring{\cref{lem:four-selection}}{Lemma 5.1}}\label{app:four-selection}

We prove \cref{lem:four-selection}. Every good that the selection rule ignores is allocated upon arrival by the potential rule in \cref{sec:four}. For each agent $i$, her first four occurrences have ranks $3,3,2,1$, respectively, and all other goods have rank $0$ for her; we write $\rank_i(g)$ for the rank of good $g$ for agent $i$. Let $u_i=\max_{g\in P_i}\rank_i(g)$ denote the largest rank of a selected good that agent $i$ has received, where $u_i=0$ if $P_i$ is empty. We will ensure that each agent receives at most two selected goods and ends with $u_i>0$. We will also ensure that every good whose rank for some agent $i$ is larger than the final value of $u_i$ is selected, and that each agent other than $i$ receives at most one such good.

\subsection{A Rule for Targets and Marked Goods}

In this subsection, we consider two agents $p$ and $q$, each of whom has two designated goods called her \emph{targets}; an agent is \emph{satisfied} once she receives one of her targets. In addition, at most two goods are \emph{marked}; these goods must be selected and allocated to different agents. In the applications below, they are another agent's two most valuable occurrences, and giving them to different agents allows each to be removed in the corresponding EF1 comparison. The algorithm learns whether a good is a target or marked when the good arrives. Unless specified otherwise, ties are broken in favor of $p$ over $q$.

\begin{lemma}\label[lemma]{lem:three-marked-rule}\label[lemma]{lem:two-marked-rule}
Suppose that there may additionally be a third agent $c$, whose goods have labels in $\{0,1,2\}$, such that either $c$ has no selected good and her remaining goods of positive label have labels $1,2,2$, or $c$ already has one selected good of label $1$ and her remaining goods of positive label have labels $2,2$. Then there exists an online rule that satisfies both $p$ and $q$, allocates the marked goods to different agents, and gives each agent at most two selected goods, including any selected good that she already has. Moreover, if $c$ is present, each selected good that she receives has a larger label than all of her previous selected goods, and her largest label is positive at the end; if this largest label is $1$, then her two goods of label $2$ are allocated to different agents in $\{p,q\}$.
\end{lemma}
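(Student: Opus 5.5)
We will construct the rule explicitly as a small online case analysis. Its state consists of which of $p,q$ are satisfied, how many marked goods have arrived and who holds them, how many selected goods each of $p,q$ (and $c$, if present) holds, and $c$'s current largest label. The guiding invariant is that each of $p$ and $q$ keeps a \emph{free slot} until it is satisfied. To this end, we will never give an unsatisfied agent a selected good other than one of its own targets, with one exception: a marked good that must go to it because the other marked good already sits with the other agent. The third agent $c$ serves as a sink for marked goods when she is present. When $c$ is absent, there are only two agents, the marked goods must be split between $p$ and $q$, and each of them must also receive a target, so each gets at most two selected goods.

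We first treat the case without $c$, since the same ideas drive the general case. When a good arrives, we give it to $p$ or $q$ according to the following priorities. First, a marked good that is the second marked good is forced to the agent not holding the first one. A marked good that is also a target of some unsatisfied agent goes to that agent, which satisfies it and uses the marked good at the same time. Any other marked good goes to an agent that is already satisfied, if one exists. Otherwise it goes to $p$ (the tie-break), and $p$ is then committed to receiving the second marked good only if it is also her target. Next, a target of an unsatisfied agent that is not marked goes to that agent, unless this would give her a third selected good; we must check that this never happens. Finally, any other good is ignored, i.e., not selected.

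The key step is a counting argument showing that every agent ends with at most two selected goods and that both agents are satisfied. Each agent receives at most one marked good and at most one target-only good before becoming satisfied. The danger is that an agent receives a marked good, and later the second marked good is forced onto the other agent while that agent still needs a target. We will check the bad orderings directly. Since each agent has two targets, if one target is taken away, for example because it is marked and forced elsewhere, the other target is still available. This is exactly why two targets per agent are needed.

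With $c$ present, we will route marked goods and $c$'s label-2 goods to $c$ whenever that is consistent with the requirement that $c$'s selected labels strictly increase and she holds at most two selected goods. In the first configuration, $c$ can take a label-1 good and then a label-2 good. In the second configuration, she already holds label 1 and can take one label-2 good. If $c$ never obtains a label-2 good, her largest label stays 1. We then must ensure that her two label-2 goods go to different agents in $\{p,q\}$, which we will do by treating these goods like a second pair of marked goods. We will also show that $c$ ends with a positive label by giving her a label-1 or label-2 good whenever it arrives and no conflicting obligation prevents it.

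The main obstacle will be this interaction. Goods can be simultaneously marked, targets, and labelled for $c$, and the slot budget of two per agent is tight. We plan to handle it by an exhaustive case analysis on the arrival order of the at most two marked goods, the four targets, and $c$'s three labelled goods, maintaining a potential-style invariant that every unsatisfied agent has a free slot and at least one unarrived target not promised elsewhere. We will start by settling the orderings in which marked goods arrive before any target, since these are where forced assignments bite.
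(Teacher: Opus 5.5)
Your rule for the case without $c$ is correct; it coincides with the paper's rule in that case. The gap is the case with $c$ present, which is the substance of the lemma and the case used in the rank-two analysis. There you only announce an ``exhaustive case analysis'', and the one concrete choice you state fails. You propose to route marked goods and $c$'s label-$2$ goods to $c$ whenever her own constraints (increasing labels, at most two selected goods) allow it. Let $c$ start with no selected good, let the first marked good have label $1$ and be a target of $p$, let an unmarked good of label $2$ be $p$'s other target, and let these two goods arrive first. Both may go to $c$, whose labels rise from $0$ to $1$ to $2$, and $p$ can never be satisfied. So targets of unsatisfied agents must take precedence over feeding $c$, subject to exclusions. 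Besides your forcing of the second marked good, you need an analogous restriction for $c$'s two label-$2$ goods, active only while her largest label $z$ is below $2$. Imposed unconditionally, it can overload $p$ or $q$ with non-targets. You must then show that exclusions of both kinds, together with tie-breaking, never cost an unsatisfied agent both of her targets. Your stated invariant is also not right. An unsatisfied agent must sometimes take a non-target marked good that is not forced by the other marked good: a first marked good of label at most $z$ when both $p$ and $q$ are unsatisfied (your own $c$-absent rule does this), or a second marked good after the first went to $c$.

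The missing idea is the paper's eligibility mechanism and the counting behind it. The rule selects all marked goods, all targets of unsatisfied agents, and all goods of label larger than $z$. A selected good goes to an unsatisfied agent whose target it is, unless she is ineligible because (a) it is the second marked good and she holds the first, or (b) it has label $2$, $z<2$, and she already holds a label-$2$ good. Otherwise the good goes to $c$. The exception is a marked good that cannot go to $c$ (its label is at most $z$, or $c$ holds the other marked good). Such a good goes to $p$ or $q$, preferring a satisfied agent and, for the second marked good, avoiding a holder of a label-$2$ good when $z<2$; an unsatisfied recipient then gets tie-breaking priority. The proof checks that each of $p$ and $q$ loses at most one of her two targets, as follows.
\begin{enumerate}
\item Goods sent because their label is at most $z$ never trigger (b), and (a) concerns only the second marked good.
\item Hence an agent who receives a non-target while unsatisfied can be excluded at most once afterwards, and her priority protects her other target.
\item The other agent is never excluded while unsatisfied and loses at most one target.
\end{enumerate}
Every good given to $c$ raises $z$. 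Before her first selected good, only the targets given to $p$ and $q$ can take her positive-label goods, so she ends with a positive label. While $z<2$, exclusion (b) and the avoidance clause split her label-$2$ goods between $p$ and $q$. Your proposal identifies the tension but leaves exactly this construction and verification undone.
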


\begin{proof}
Let $z$ denote the largest label of a selected good that $c$ has received, where $z=0$ if she has none. If $c$ is absent, let $z=2$ and let all goods have label $0$; the exceptions below then prevent any good from being given to $c$. The rule selects every marked good, every target of an unsatisfied agent, and every good whose label is larger than $z$, and ignores all other goods. For a selected good $x$ that is a target of an unsatisfied agent $r\in\{p,q\}$, we say that $r$ is \emph{eligible} for $x$ unless (a)~$x$ is marked and $r$ has received the other marked good, or (b)~$x$ has label $2$, $z<2$, and $r$ has already received a good of label $2$. If some unsatisfied agent is eligible for $x$, the rule gives $x$ to such an agent. Otherwise, it gives $x$ to $c$, with the following three exceptions.
\begin{enumerate}[label=\textup{(\roman*)},leftmargin=2em]
\item If $x$ is the first marked good and its label is at most $z$, then $x$ is given to $p$ or $q$, preferring a satisfied agent.
\item If $x$ is the second marked good and the first one was given to $c$, then $x$ is given to $p$ or $q$, preferring a satisfied agent; if $x$ has label $2$ and $z<2$, the rule avoids an agent who has already received a good of label $2$.
\item If $x$ is the second marked good, the first one was given to $p$ or $q$, and the label of $x$ is at most $z$, then $x$ is given to the other agent in $\{p,q\}$.
\end{enumerate}
Whenever \textup{(i)} or \textup{(ii)} gives a good to an unsatisfied agent, this agent receives priority in all later tie-breaking. A permitted recipient in \textup{(ii)} exists, because at most one good of label $2$ can have arrived before $x$.

We first show that both $p$ and $q$ are satisfied, ignoring for now the goods given by \textup{(iii)}; their labels do not change $z$. At most one of \textup{(i)} and \textup{(ii)} occurs. If neither of them gives a good to an unsatisfied agent, then each unsatisfied agent is always eligible for her targets and loses at most one target to the other agent, so she receives a target. Suppose that an exception gives a good to an unsatisfied agent $r$, and let $s$ be the other agent. In case \textup{(i)}, both agents still have both of their targets, and only the second marked good can make $r$ ineligible. In case \textup{(ii)}, no marked good arrives later, and an exclusion of type~(b) is possible later only if $x$ is the first good of label $2$ and $z<2$; then $s$ is also unsatisfied, since otherwise she would have been a permitted and preferred recipient. Again, both targets of $r$ remain, and at most one of them can make $r$ ineligible. In every other case, $r$ has a remaining target and cannot become ineligible later. In all cases, her priority ensures that she receives a target. Moreover, $s$ never becomes ineligible while unsatisfied and loses at most one target to $r$, so she also receives a target.

Taking into account the goods given by \textup{(iii)} does not affect this argument. Such a good is the second marked good, so it causes no later exclusion of type~(a); its label is at most $z$, so it causes no exclusion of type~(b); and it is not a target of its recipient if she is unsatisfied, since she would otherwise have been eligible for it. Moreover, it changes neither the priorities nor $z$.

Finally, the rule always allocates the marked goods to different agents, and while $z<2$, it gives the goods of label $2$ to different agents. If $c$ initially has no selected good, she cannot lose all three of her goods of positive label, since before her first selected good, only the two targets given to $p$ and $q$ can take them. Every good given to $c$ increases $z$, so she receives at most two selected goods. Each of $p$ and $q$ receives at most one target and at most one other marked good, and hence at most two selected goods.
\end{proof}

\subsection{The Initial Selections and the Rank-Three Case}

The rule ignores every good that has rank $0$ for all agents. Let $g$ be the first good that has rank at least $2$ for some agent. If $g$ is the first good that has positive rank for some agent, the rule gives $g$ to an agent for whom $g$ has the largest rank. Otherwise, the rule gives the first good with positive rank for some agent to an agent $b$ for whom it has rank $1$, and ignores all subsequent goods before $g$, each of which has rank at most $1$ for every agent. In the latter case, when $g$ arrives, the rule gives $g$ to an agent other than $b$ for whom $g$ has the largest rank among the agents other than $b$, provided that this rank is at least $2$, and gives $g$ to $b$ otherwise. In both cases, let $f$ be the recipient of $g$. If $b$ exists and $b\neq f$, let $c=b$; otherwise, let $c$ be any agent other than $f$. We call the two remaining agents $p$ and $q$.

Thus $u_f\in\{2,3\}$, and $p$ and $q$ have no selected good. Since $g$ is the only good so far that has rank at least $2$ for some agent, each of $p$ and $q$ still has two remaining occurrences of rank at least $2$, at least one of which has rank $3$. If $c$ has no selected good, then at most one of her occurrences of positive rank has arrived, since $\rank_c(g)\leq1$ whenever $b$ exists; hence, she still has at least three remaining occurrences of positive rank. Every good ignored so far has rank at most $1$ for every agent.

Suppose that $u_f=3$. We designate the two largest remaining occurrences of $p$ as her targets. For $c$, we designate her three largest remaining occurrences of positive rank if she has no selected good, and all of her remaining occurrences of positive rank if $c=b$. For each arriving good $x$, the rule follows the first of the following instructions that applies, and ignores $x$ if none applies.
\begin{enumerate}[label=\textup{(\roman*)},leftmargin=2em]
\item If $x$ is the first occurrence of rank $3$ of $q$ to arrive after $g$, give $x$ to $q$.
\item If $x$ is a target of $p$ and $p$ has not yet received a target, give $x$ to $p$.
\item If $|P_c|<2$ and $x$ is a designated occurrence of $c$ with $\rank_c(x)>u_c$, give $x$ to $c$.
\item If $p$ has received a target and $\rank_p(x)>u_p$, give $x$ to $p$; however, if $|P_c|=2$, $\rank_c(x)>u_c$, and $P_p$ already contains a good whose rank for $c$ is larger than $u_c$, give $x$ to $q$ instead.
\item If $|P_c|=2$ and $\rank_c(x)>u_c$, give $x$ to $f$ if $c=b$, $\rank_c(g)=0$, and $|P_f|=1$; otherwise, give $x$ to $q$.
\end{enumerate}
Agent $q$ receives an occurrence of rank $3$ by \textup{(i)}. Agent $p$ loses at most one target to $q$, so she receives a target by \textup{(ii)}, and $u_p\geq2$. If $c$ initially has no selected good, she loses at most two designated occurrences before receiving one. The bounds of two selected goods for $f$, $p$, and $c$ follow directly from the instructions.

Once $c$ has two selected goods, at most one of her occurrences arriving after $g$ has rank larger than $u_c$, except when $c=b$ and $\rank_c(g)=0$, in which case there may be two. In the former case, $q$ receives at most this one additional good. In the latter case, by \textup{(iv)} and \textup{(v)}, $q$ receives an additional good only if the other occurrence of rank larger than $u_c$ is already in $P_p$ or in $P_f$. Hence, $q$ receives at most one additional good, and she cannot receive another occurrence of rank larger than $u_c$ in addition to her own occurrence of rank $3$.

If $p$ ends with $u_p=2$, then one of her occurrences of rank $3$ is $g$, and at most one arrives afterwards; by \textup{(ii)} and \textup{(iv)}, the latter is not given to $f$. For $c$, since targets have priority, at most one occurrence of rank larger than $u_c$ is added to each of $P_p$ and $P_q$ before $c$ receives her second selected good. Afterwards, the second part of \textup{(iv)} prevents $P_p$ from receiving a second such occurrence, the preceding paragraph shows the same for $P_q$, and \textup{(v)} gives such an occurrence to $f$ only if $\rank_c(g)=0$ and $|P_f|=1$.

Finally, every occurrence of rank larger than the final maximum rank of its agent is selected. This is immediate for $f$ and $q$, whose final maximum rank is $3$. For $p$, such an occurrence has rank $3$; if it arrives before she receives a target, it is one of her targets, so it is selected by \textup{(ii)} or \textup{(iv)} unless an earlier instruction applies. For $c$, such an occurrence that arrives after $g$ has rank at least $2$ and is therefore designated, so it is selected by \textup{(iii)} or \textup{(v)} unless an earlier instruction applies. This proves all required properties.

\subsection{The Rank-Two Case}

Suppose that $u_f=2$. Since $g$ has rank at most $2$ for each of $f$, $p$, and $q$, each of them still has both of her occurrences of rank $3$. We use the occurrences of rank $3$ of $p$ and $q$ as their targets.

\paragraph{Agent $c$ has no selected good, or $\rank_c(g)=2$.}
If $c$ has no selected good, her three largest remaining ranks are $1,3,3$ or $2,3,3$; we give the corresponding occurrences labels $1,2,2$, respectively, and give all of her other goods label $0$. If $c=b$ and $\rank_c(g)=2$, then $c$ already has a selected good of rank $1$, which we give label $1$, and her remaining occurrences of positive rank have rank $3$, which we give label $2$. We apply the rule of \cref{lem:three-marked-rule} with the two occurrences of rank $3$ of $f$ as marked goods, and give $f$ no further selected good. The lemma gives $p$ and $q$ an occurrence of rank $3$, and allocates the two occurrences of rank $3$ of $f$ to different agents. If $c$ does not end with $u_c=3$, then her two remaining occurrences of rank $3$ are given to different agents in $\{p,q\}$, and the only earlier good that may have larger rank for her is $g$, which is given to $f$. All required properties follow.

It remains to consider the case where $c=b$ and $\rank_c(g)\in\{0,3\}$. Then $f$ and $c$ each have one selected good.

\paragraph{Agent $c=b$ and $\rank_c(g)=3$.}
The remaining occurrences of positive rank of $c$ have ranks $2$ and $3$. When a good $x$ arrives that is a target of an unsatisfied agent in $\{p,q\}$, the rule gives $x$ to such an agent, with the restriction that while $u_f=2$, an agent who has received an occurrence of rank $3$ of $f$ cannot receive another one. Otherwise, the rule follows the first of the following instructions that applies, and ignores $x$ if none applies.
\begin{enumerate}[label=\textup{(\roman*)},leftmargin=2em]
\item If $|P_c|=1$ and $\rank_c(x)>u_c$, give $x$ to $c$.
\item If $u_f=2$, $\rank_f(x)=3$, and $\rank_c(x)\leq u_c$, give $x$ to $f$.
\item If $\rank_c(x)>u_c$, give $x$ to $p$ or $q$, respecting the same restriction for occurrences of rank $3$ of $f$ and preferring a satisfied agent. If the recipient is unsatisfied, she receives priority afterwards.
\end{enumerate}
Instruction \textup{(iii)} applies at most once, namely to the remaining occurrence of rank $3$ of $c$ after she has received her occurrence of rank $2$. If both $p$ and $q$ are unsatisfied at that time, they still have both of their targets; the recipient can be excluded by the restriction at most once, so her priority ensures that she receives a target. If the other agent is satisfied, she can have been passed over only because the current good is the second occurrence of rank $3$ of $f$, so no further exclusion can occur. The other agent in $\{p,q\}$ is never excluded while unsatisfied and loses at most one target. Hence, both $p$ and $q$ receive an occurrence of rank $3$ and at most two selected goods.

While $u_f=2$, the two occurrences of rank $3$ of $f$ are selected and given to different agents: the restriction and \textup{(iii)} ensure this for $p$ and $q$, and $c$ receives only one additional good. For $c$, every occurrence of rank larger than $u_c$ is selected, and by \textup{(ii)}, $f$ receives no such occurrence other than $g$. If $c$ ends with $u_c=1$, then her remaining occurrences of ranks $2$ and $3$ were given to different agents in $\{p,q\}$; if $c$ ends with $u_c=2$, then her remaining occurrence of rank $3$ is not given to $f$. The bounds on the number of selected goods and the required properties follow immediately.

\paragraph{Agent $c=b$ and $\rank_c(g)=0$.}
Let $y$ be the remaining occurrence of rank $2$ of $c$. Before $y$ arrives, the rule gives each arriving good that has rank $3$ for some agent whose current maximum rank is less than $3$ to such an agent, with priority order $p,q,c,f$, and ignores all other goods. When $y$ arrives, the rule proceeds as follows.
\begin{enumerate}[label=\textup{(\roman*)},leftmargin=2em]
\item If $u_c=3$ or $y$ is a target of an unsatisfied agent, the rule continues with the same priority rule. Both $p$ and $q$ then receive an occurrence of rank $3$. If necessary, so does $c$: one of the targets given to $p$ and $q$ is $y$, so only one occurrence of rank $3$ of $c$ can be given to $p$ or $q$. If $f$ ends with $u_f=2$, her two occurrences of rank $3$ are given to different agents, since every other agent receives at most one further selected good.
\item Otherwise, if $u_f=3$, the rule gives $y$ to $c$ and continues with the rule of \cref{lem:two-marked-rule} for $p$ and $q$, with the two occurrences of rank $3$ of $c$ as marked goods.
\item Otherwise, the rule gives $y$ to $f$ and continues with the rule of \cref{lem:three-marked-rule} for $p$, $q$, and $c$, where the selected good of $c$ has label $1$ and her occurrences of rank $3$ have label $2$. If $u_f$ is still $2$, the two occurrences of rank $3$ of $f$ are marked; otherwise, no good is marked.
\end{enumerate}
In \textup{(ii)} and \textup{(iii)}, we may regard the rule of \cref{lem:three-marked-rule} as having been applied since $g$ was allocated. Indeed, since $u_c<3$ when $y$ arrives, and in \textup{(iii)} also $u_f<3$, every good selected after $g$ and before $y$ was given to an unsatisfied agent in $\{p,q\}$ as one of her targets, except for the occurrence of rank $3$ selected for $f$ in \textup{(ii)}. Such an agent had no selected good and was therefore eligible, so the rule of \cref{lem:three-marked-rule} would have made the same decision. Every other good that arrived after $g$ and before $y$, as well as $y$ itself, is neither marked nor a target of an unsatisfied agent and has label $0$, so this rule would have ignored it; giving such a good to $f$ or $c$ does not affect the guarantees of \cref{lem:three-marked-rule}.

By \cref{lem:three-marked-rule}, every agent receives at most two selected goods, and the required occurrences of larger rank are given to different agents. In \textup{(iii)}, if $c$ ends with $u_c=1$, her occurrences of larger rank are $y$, which is given to $f$, and her two occurrences of rank $3$, which are given to different agents in $\{p,q\}$; otherwise, $c$ ends with $u_c=3$. Every occurrence of rank larger than the final maximum rank of its agent is selected.

\paragraph{Returning to values.}
In every case, each agent receives one or two selected goods and ends with $u_i>0$, so she receives one of her first four occurrences. Every good ignored before $g$ has rank at most $1$ for every agent, and every good ignored afterwards has rank at most the final value of $u_i$ for every agent $i$. Since the two largest values of each agent are equal after the first capping step, every good of value larger than $h_i$ to agent $i$ has rank larger than $u_i$ for her. These goods are selected and given to different agents other than $i$. Hence, for every agent $j\neq i$, at most one good in $A_j$ has value larger than $h_i$ to agent $i$, regardless of how the goods that are not selected are allocated. This proves \cref{lem:four-selection}.

\section{Proof of \texorpdfstring{\cref{lem:four-potentials}}{Lemma 5.2}}\label{app:four-potentials}

We prove \cref{lem:four-potentials}. Let $i$ be an agent with $a_{i4}>0$, and define the nonincreasing step function $p_i$ on $[0,\infty)$ as follows. For $0\le s<x_i+a_{i4}$, let
\[
p_i(s)=\frac1{(C-1)a_{i4}+Cx_i},
\]
and for every integer $k\ge1$ and every $s$ with $x_i+ka_{i4}\le s<x_i+(k+1)a_{i4}$, let
\[
p_i(s)=\frac1{(C-1)a_{i4}+C(x_i+ka_{i4})}.
\]
Let $\Phi_i(u)=\int_0^{\min\{u,\,T_i+a_{i4}\}}p_i(s)\,ds$ for $u\geq0$. This function is nondecreasing, concave, and piecewise linear, with $\Phi_i(0)=0$. If $u<x_i$, its increase over $[u,u+a_{i4}]$ is $a_{i4}/((C-1)a_{i4}+Cx_i)$. Otherwise, write $u=x_i+(k+\theta)a_{i4}$, where $k\ge0$ is an integer and $\theta\in[0,1)$. For $u\le T_i$, the convexity of the reciprocal function gives
\begin{align*}
 \Phi_i(u+a_{i4})-\Phi_i(u)
 &=\frac{(1-\theta)a_{i4}}{(C-1)a_{i4}+C(x_i+ka_{i4})} +\frac{\theta a_{i4}}{(C-1)a_{i4}+C(x_i+(k+1)a_{i4})}\\
 &\ge\frac{a_{i4}}{(C-1)a_{i4}+Cu}.
\end{align*}
This proves the required lower bound on the increase.

To show that $\Phi_i\le1$, we normalize $a_{i4}=1$, omit the subscript $i$, and write $q=3+1/C=6-C$ and $d=C-1+Cx$. Let
\begin{gather*}
 r=\lfloor2+(q-1)x\rfloor,\quad
 \theta=2+(q-1)x-r,\\
 E=d+C(r+1)=qd+4-C\theta.
\end{gather*}
Since $T=2+qx$, we have $T+1=(x+1)+r+\theta$, so the interval $[0,T+1)$ consists of $[0,x+1)$, the intervals $[x+k,x+k+1)$ for $k\in[r]$, and a final interval of length $\theta$ on which the slope is $1/E$. Hence, the total increase is
\[
 S(x):=\Phi(T+1)
 =\frac{x+1}{d}+\sum_{k=1}^{r}\frac1{d+Ck}+\frac\theta E.
\]
We show that $S$ is decreasing in $x$. Away from its breakpoints,
\[
 S'(x)=-\frac1{d^2}-C\sum_{k=1}^{r}\frac1{(d+Ck)^2}
          +\frac{q-1}{E}-\frac{C\theta}{E^2}.
\]
The trapezoid bound for the convex function $t\mapsto1/(d+Ct)^2$ yields
\[
 C\sum_{k=1}^{r}\frac1{(d+Ck)^2}
 \ge\frac1{d+C}-\frac1E
       +\frac C2\left(\frac1{(d+C)^2}-\frac1{E^2}\right).
\]
For fixed $d$, the function $\theta\mapsto q/E+C(1/2-\theta)/E^2$ is increasing on $[0,1]$: its derivative is $C((q-1)E+C(1-2\theta))/E^3>0$, since $E\ge3C-1>C$ and $q-1>1$. Its value at $\theta=1$ is less than $q/(qd+4-C)$. Hence, writing $R=(4C-3)/q$, we obtain
\begin{align*}
 S'(x)
 &\le-\frac1{d^2}-\frac1{d+C}-\frac C{2(d+C)^2}
                  +\frac q{qd+4-C}\\
 &\le-\frac1{d^2}-\frac C{2(d+C)^2}
                  +\frac R{d(d+C)}<0.
\end{align*}
The final inequality follows from the inequality of arithmetic and geometric means and $R^2<2C$; indeed, using $C^2=3C-1$, we have $R=(21C-13)/19$ and $R^2-2C=(55C-272)/361<0$. Since $S$ is continuous at its breakpoints, it follows that
\[
 S(x)\le S(0)=\frac1{C-1}+\frac1{2C-1}+\frac1{3C-1}=1.
\]
This proves \cref{lem:four-potentials}. The construction uses $O(m)$ intervals of constant slope, because $F_i\le\max\{m-7,0\}\,a_{i4}$. Computing each potential increase by summing over these intervals takes $O(m)$ time, giving an $O(m^2)$-time implementation of the four-agent algorithm.

\section{Proof of \texorpdfstring{\cref{thm:four}}{Theorem 5.3}}\label{app:four-proof}

\begin{proof}[Proof of \cref{thm:four}]
By the discussion before \cref{lem:four-potentials} in \cref{sec:four}, it suffices to show that $D_i^w(A_j)\leq Cw_i(A_i)$ for all distinct agents $i,j$. Fix such agents, and let $\ell_i$ denote its final value. If $a_{i4}=0$, then $w_i\equiv0$ and the claim is immediate, so assume that $a_{i4}>0$.

\paragraph{Unselected goods.}
Consider a good $g\in A_j\setminus P_j$. When $g$ arrives, the increase in agent $j$'s potential is at least the increase in agent $i$'s potential that $g$ would cause. Since $\ell_i$ never decreases and $\Phi_i$ is concave, the latter increase is at least $\Phi_i(\ell_i+w_i(g))-\Phi_i(\ell_i)$, evaluated at the final value of $\ell_i$. As agent $j$'s potential never decreases and is at most $1$, summing over these goods gives
\[
\sum_{g\in A_j\setminus P_j}[\Phi_i(\ell_i+w_i(g))-\Phi_i(\ell_i)]\leq1.
\]
Since $w_i(g)\leq a_{i4}$, concavity implies that each term is at least $(w_i(g)/a_{i4})[\Phi_i(\ell_i+a_{i4})-\Phi_i(\ell_i)]$. By \cref{lem:four-potentials}, if $\ell_i\leq T_i$, then
\[
w_i(A_j\setminus P_j)\leq(C-1)a_{i4}+C\max\{x_i,\ell_i\}.
\]
We refer to this as the bound on unselected goods; it holds for every agent $j\neq i$.

\paragraph{A lower bound on $\ell_i$.}
Remove from $G$ the selected goods of the other three agents and one good in $P_i$ among agent $i$'s first four occurrences. At most seven goods are removed, and agent $i$'s seven largest values under $w_i$ are those of her first seven occurrences, so the remaining goods have total value at least $F_i$ under $w_i$. These goods consist of agent $i$'s remaining goods, whose value under $w_i$ is $\ell_i$, and the unselected goods of the other three agents. If $\ell_i<x_i$, then $x_i>0$ and $\ell_i<T_i$, so applying the bound on unselected goods to each of the other three agents yields
\[
F_i<x_i+3[(C-1)a_{i4}+Cx_i]=3(C-1)a_{i4}+(3C+1)x_i=F_i,
\]
a contradiction. Hence $\ell_i\geq x_i$.

\paragraph{The case $\ell_i\leq T_i$.}
Since $|P_j|\leq2$, removing one selected good from $A_j$ leaves at most one selected good, whose value under $w_i$ is at most $a_{i4}$. By the bound on unselected goods and $\ell_i\geq x_i$, we obtain
\[
D_i^w(A_j)\leq a_{i4}+w_i(A_j\setminus P_j)\leq Ca_{i4}+C\ell_i=Cw_i(A_i).
\]

\paragraph{The case $\ell_i\geq T_i$.}
Let $W_i=\sum_{k=5}^m a_{ik}$ be agent $i$'s total value for the goods other than her first four occurrences. Since $a_{ik}\leq a_{i4}$ for every $k\geq5$, we have
\[
W_i\leq3a_{i4}+F_i\leq3Ca_{i4}+(3C+1)x_i=C(a_{i4}+T_i).
\]
If $A_j$ contains none of agent $i$'s first four occurrences, then $D_i^w(A_j)\leq W_i\leq C(a_{i4}+T_i)\leq C(a_{i4}+\ell_i)$. Otherwise, remove one such occurrence from $A_j$, and remove from $A_i$ a good in $P_i$ among her first four occurrences. The remaining goods of $A_i$ and $A_j$ contain at most two of agent $i$'s first four occurrences, so $D_i^w(A_j)+\ell_i\leq2a_{i4}+W_i$. Since $2a_{i4}\leq T_i\leq\ell_i$, this gives $D_i^w(A_j)\leq W_i\leq C(a_{i4}+\ell_i)$.

In both cases, $D_i^w(A_j)\leq C(a_{i4}+\ell_i)=Cw_i(A_i)$, as desired.
\end{proof}

\section{Omitted Proofs in \texorpdfstring{\cref{sec:inaccurate}}{Section 6}}\label{app:eefx-inaccurate}

We prove \cref{thm:eefx-relative,thm:eefx-over-impossible}. Both instantiation rules in \cref{sec:eefx-relative} and the algorithm of \cref{thm:eefx} are deterministic and run in polynomial time. Since every predicted value is used exactly once, the algorithm of \cref{thm:eefx} returns an allocation that is EEFX with respect to the predicted valuations, regardless of the accuracy of the predictions. When the predictions are exact, the true value of each arriving good is among the unused predicted values, so both rules assign to each good its true value; hence, both algorithms are consistent. Moreover, for every agent $i$, there exists a partition $(Z_j)_{j\neq i}$ of $G\setminus A_i$ with $|Z_j|=|A_j|$ for every $j\neq i$ such that, writing $V=\widetilde v_i(A_i)$,
\[
\widetilde v_i(Z_j\setminus\{g\})\leq V\quad\text{for all }j\neq i\text{ and }g\in Z_j.
\]
For agent $i$, we use this partition to prove the claimed guarantees for the true values. This also shows that the guarantee for agent $i$ does not depend on the predictions for the other agents.

\subsection{Proof of \texorpdfstring{\cref{thm:eefx-relative}}{Theorem 6.1}(i)}\label{app:eefx-under}

\begin{proof}[Proof of \cref{thm:eefx-relative}(i)]
Fix an agent $i$ whose predictions are underestimates with error factor $\rho=\rho_i$. We first show that $v_i(g)\geq\widetilde v_i(g)$ for every good $g$. Initially, matching agent $i$'s true values to her predicted values in nonincreasing order gives a matching in which every predicted value is at most the true value matched to it. Suppose that such a matching exists between the true values of the goods that have not been allocated and the unused predicted values when a good $g$ arrives, and let $p$ be the predicted value matched to $v_i(g)$. The rule assigns to $g$ the largest unused predicted value $p^*$ that is at most $v_i(g)$, so $p^*\geq p$. If $p^*$ and $p$ are different occurrences, let $y$ be the true value matched to $p^*$; replacing the pairs $(v_i(g),p)$ and $(y,p^*)$ by $(v_i(g),p^*)$ and $(y,p)$ preserves the property, since $y\geq p^*\geq p$. Removing the pair of $g$ yields the desired matching for the next arrival. Hence, $v_i(g)\geq\widetilde v_i(g)$ for every $g\in G$, and in particular $v_i(A_i)\geq V$.

Let $\psi$ be the bijection from $G\setminus A_i$ to itself that maps the goods, in nonincreasing order of their true values, to the goods in nonincreasing order of their predicted values. Writing $y_+=\max\{y,0\}$, we show that
\[
\sum_{g\in G\setminus A_i}(v_i(g)-\rho\,\widetilde v_i(\psi(g)))_+\leq(\rho-1)V.
\]
For a finite list $y$ of nonnegative numbers, let $N_y(t)$ be the number of entries of $y$ larger than $t$. For two lists $y$ and $z$ of the same length, both in nonincreasing order, we have
\[
\sum_r(y_r-z_r)_+=\int_0^\infty(N_y(t)-N_z(t))_+\,dt,
\]
since for every $t$, the number of indices $r$ with $y_r>t\geq z_r$ is equal to $(N_y(t)-N_z(t))_+$. Since the predictions are underestimates with error factor $\rho$, for every $t$, the number of goods with true value larger than $t$ is at most the number of goods $g$ with $\rho\,\widetilde v_i(g)>t$. Hence, for every $t$, the number of goods in $G\setminus A_i$ with true value larger than $t$ exceeds the number of goods $g\in G\setminus A_i$ with $\rho\,\widetilde v_i(g)>t$ by at most the amount by which the number of goods $g\in A_i$ with $\rho\,\widetilde v_i(g)>t$ exceeds the number of goods in $A_i$ with true value larger than $t$. Applying the identity to both sides, the left-hand side of the desired inequality is at most $\sum_r(y_r-z_r)_+$, where $y$ and $z$ list the values $\rho\,\widetilde v_i(g)$ and $v_i(g)$ for $g\in A_i$, respectively, in nonincreasing order. Since $v_i(g)\geq\widetilde v_i(g)$ for every $g$, each $z_r$ is at least the $r$th largest predicted value of a good in $A_i$, so this sum is at most $(\rho-1)\widetilde v_i(A_i)=(\rho-1)V$, as claimed.

For each $j\neq i$, let $Y_j=\psi^{-1}(Z_j)$, so that $|Y_j|=|Z_j|=|A_j|$. For every $g\in Y_j$,
\begin{align*}
v_i(Y_j\setminus\{g\})
&\le \rho\,\widetilde v_i(Z_j\setminus\{\psi(g)\})+(\rho-1)V\\
&\le (2\rho-1)V
\le (2\rho-1)\,v_i(A_i).
\end{align*}
Hence, $A_i$ is $\frac1{2\rho-1}$-EEFX for agent $i$.
\end{proof}

\subsection{Proof of \texorpdfstring{\cref{thm:eefx-relative}}{Theorem 6.1}(ii): General Bound}\label{app:eefx-over-general}

We prove the first bound in \cref{thm:eefx-relative}(ii).

\begin{proof}
Fix an agent $i$ whose predictions are overestimates with error factor $\rho=\rho_i$. The exchange argument in \cref{app:eefx-under}, with the inequalities reversed, shows that $v_i(g)\leq\widetilde v_i(g)$ for every good $g$: if the arriving value $v_i(g)$ is matched to $p$, the rule assigns the smallest unused predicted value $p^*$ that is at least $v_i(g)$, and if $y$ is matched to $p^*$, then $v_i(g)\leq p^*$ and $y\leq p^*\leq p$. Moreover, in nonincreasing order, each true value of agent $i$ is at least $1/\rho$ times the corresponding predicted value; we refer to these inequalities as the \emph{sorted lower bounds}.

Let $s=v_i(A_i)$. If $V=0$, then every part $Z_j$ with at least two goods consists of goods with predicted value zero, and hence true value zero, so all comparisons hold exactly. Assume therefore that $V>0$. We increase the true value of every good outside $A_i$ to its predicted value. This preserves the sorted lower bounds and can only make the desired comparisons harder. Every good outside $A_i$ with predicted value larger than $V$ forms a part by itself. We temporarily remove these goods from both sorted lists. Since the true and predicted values of the goods in $A_i$ are at most $V$, the removed goods are the largest entries of both lists, so the sorted lower bounds remain valid.

Let $q\leq n-1$ be the number of nonempty parts that remain. If $q=0$, every part contains at most one good, and the comparisons hold. Otherwise, let $W$ be the total value of the remaining goods outside $A_i$, and let $b$ be the smallest of these values. Each remaining part has value at most $V$ plus its smallest value; one of these smallest values is $b$, and each of the others is at most $V$. Hence $W\leq(2q-1)V+b$. Summing the sorted lower bounds over the remaining lists gives $\rho(s+W)\geq V+W$, and therefore
\[
\rho s\geq[1-(2q-1)(\rho-1)]V-(\rho-1)b.
\]
If every good in $A_i$ has predicted value less than $b$, then these goods occupy the last positions in both lists, and the sorted lower bounds give $s\geq V/\rho$, which implies the claimed bound. Otherwise, the number of predicted values at least $b$ exceeds the number of remaining goods outside $A_i$, so by the sorted lower bounds, some good in $A_i$ has true value at least $b/\rho$. Thus $\rho s\geq b$, and
\begin{align*}
\rho s &\geq\max\{b,\ [1-(2q-1)(\rho-1)]V-(\rho-1)b\} \geq\frac{1-(2q-1)(\rho-1)}{\rho}\,V,
\end{align*}
where the second inequality holds because if $b$ is smaller than the right-hand side, then the second term in the maximum is at least the right-hand side. Since $q\leq n-1$, we obtain
\[
s\geq\frac{1-(2q-1)(\rho-1)}{\rho^2}\,V
\geq\frac{2(n-1)-(2n-3)\rho}{\rho^2}\,V.
\]
If the last factor is nonpositive, the claim holds trivially. Otherwise, every remaining part has value at most $V$ after removing any good, which gives the required comparisons. Restoring the removed parts, each of which consists of a single good, and decreasing the values of the goods outside $A_i$ back to their true values preserve these comparisons.
\end{proof}

\subsection{Proof of \texorpdfstring{\cref{thm:eefx-relative}}{Theorem 6.1}(ii): Refined Bound for Three or More Agents}\label{app:eefx-over-refined}

We prove the second bound in \cref{thm:eefx-relative}(ii). Compared with the first bound, we additionally consider the largest predicted value $h$ of a good in $A_i$. If $h$ is small, the parts have smaller total value. If $h$ is large, this good keeps most of its value, and we treat separately the parts consisting of two goods of large value.

\begin{proof}
Let $n\geq3$, and consider the algorithm and the agent $i$ with error factor $\rho=\rho_i$ in \cref{app:eefx-over-general}. By the case $V=0$ handled there, we may assume that $V>0$, and we may also assume that the claimed factor is positive. We scale agent $i$'s values so that $V=1$, and let $s=v_i(A_i)$ and $\varepsilon=\rho-1$. As before, we increase the true value of each good outside $A_i$ to its predicted value and use the sorted lower bounds. Throughout the proof, we consider all $n-1$ parts $Z_j$, including empty parts and parts that become empty after goods are temporarily removed.

We first remove the goods outside $A_i$ with value larger than $1$. Each such good forms a part by itself and is among the largest entries of both lists. Each remaining part has value at most $1$ plus its smallest value, which is at most $1$, so the remaining goods outside $A_i$ have total value at most $2(n-1)$. Summing the sorted lower bounds gives
\[
s\geq\frac{1-2(n-1)\varepsilon}{1+\varepsilon}\geq1-(2n-1)\varepsilon.
\]
Let $h$ be the largest predicted value of a good in $A_i$, and let $d$ be the true value of this good. The other goods in $A_i$ have total predicted value $1-h$, and their true values are at most their predicted values. Hence, $d\geq s-(1-h)\geq h-(2n-1)\varepsilon$. In each of the following cases, we start again from the original lists with the increased values outside $A_i$.

Suppose first that $h\leq1/2+(2n-1)\varepsilon$. We remove all goods outside $A_i$ with value larger than $h$; since all values of goods in $A_i$ are at most $h$, these goods are the largest entries in both lists. Each remaining part has value at most $1+h$. Indeed, if a good was removed from the part, the remaining value is at most $1$; otherwise, the value of the part is at most $1$ plus its smallest value, which is at most $h$. Hence, the total value $W$ of the remaining goods outside $A_i$ is at most $(n-1)(1+h)\leq(n-1)(\tfrac32+(2n-1)\varepsilon)$, and summing the sorted lower bounds gives
\begin{align*}
s&\geq\frac{1-\varepsilon W}{1+\varepsilon}
\geq\frac{1-\tfrac32(n-1)\varepsilon-(n-1)(2n-1)\varepsilon^2}{1+\varepsilon} \geq1-2(n-1)\varepsilon-(n-1)(2n-1)\varepsilon^2,
\end{align*}
where the last inequality uses $n-1\geq2$. Since removing any good from any part leaves value at most $1$, the required comparisons hold.

Now suppose that $h>1/2+(2n-1)\varepsilon$. We remove the goods outside $A_i$ with value larger than $h$, as well as those with value strictly between $1-h$ and $h-(2n-1)\varepsilon$. The sorted lower bounds remain valid. Indeed, the goods in the first group are the largest entries of both lists. For the second group, every good in $A_i$ other than the one of largest predicted value has predicted and true values at most $1-h$, while the latter good has predicted value $h$ and true value $d\geq h-(2n-1)\varepsilon$. Hence, each removed good in the second group has the same position in both lists, after breaking ties consistently, and removing entries in the same positions from both lists preserves the sorted lower bounds for the remaining entries.

Every remaining good outside $A_i$ has value at most $1-h$ or at least $h-(2n-1)\varepsilon>1/2$. If a part contains two goods of the latter kind, then the part consists of exactly these two goods, since otherwise removing a third good, even one of value zero, would leave value larger than $1$; we call such a part a \emph{pair}. A pair has value at most $2h$. Every other part has remaining value at most $2-h$. Indeed, a part from which a good was removed has remaining value at most $1$; a part containing a good of value at most $1-h$ has value at most $1+(1-h)$; and a part with a single good has value at most $h\leq2-h$.

If at least one part is not a pair, then since the number of pairs is between $0$ and $n-2$ and the resulting bound is linear in this number, the total value of the remaining goods outside $A_i$ satisfies
\[
W\leq\max\{(n-1)(2-h),\ 2(n-2)h+2-h\}\leq2n-3,
\]
where the second inequality uses $1/2<h\leq1$ and $n\geq3$. Summing the sorted lower bounds now gives
\[
s\geq\frac{1-(2n-3)\varepsilon}{1+\varepsilon}\geq1-2(n-1)\varepsilon,
\]
which suffices for all parts.

It remains to consider the case in which every part is a pair. Then there are exactly $2n-2$ goods outside $A_i$, none of which was removed. Let $z_1\geq\cdots\geq z_{2n-2}$ be their values, each of which is larger than $1/2$, and let $H=z_1\leq h$. If $d\geq H$, then $s\geq H$, whereas removing either good from a pair leaves value at most $H$, so the comparisons hold exactly. Otherwise, let $r$ be the number of values $z_\ell$ larger than $d$, so that $1\leq r\leq2n-2$. The other goods in $A_i$ have predicted and true values at most $1-h<1/2$, while $d>1/2$. Hence, the list of predicted values begins with $h,z_1,\ldots,z_r$, and the list of true values begins with $z_1,\ldots,z_r,d$, with ties ordered arbitrarily. The sorted lower bounds give $z_\ell\geq z_{\ell-1}/\rho$ for $2\leq\ell\leq r$ and $d\geq z_r/\rho$, so
\[
s\geq d\geq H/\rho^r\geq H/\rho^{2n-2}.
\]
Since removing either good from a pair leaves value at most $H$, the factor for agent $i$ is at least $\rho^{-(2n-2)}\geq1-2(n-1)\varepsilon$, where the last inequality follows from Bernoulli's inequality.

In every case, the partition $(Z_j)_{j\neq i}$ gives the required comparisons for the increased values, and decreasing the values of the goods outside $A_i$ back to their true values preserves them.
\end{proof}

\subsection{Proof of \texorpdfstring{\cref{thm:eefx-relative}}{Theorem 6.1}: Lower Bound for Underestimates}\label{app:eefx-under-local}

We prove the impossibility result for underestimates in \cref{thm:eefx-relative}.

\begin{proof}
Let $\rho>1$ and $a=1/\rho$, and suppose for contradiction that some consistent algorithm guarantees $\beta$-EEFX for some $\beta>1/(\rho+1-a)$ whenever all agents' predictions are underestimates with error factor $\rho$. Choose a positive integer $r$ large enough that $b=(1-a)/r$ satisfies $b<a$ and $1/(\rho+1-a-b)<\beta$, and let $J=\rho+3$. Every agent's predicted multiset consists of $n-2$ copies of $J$, the values $1$ and $a$, and $r$ copies of $b$. The goods arrive with common values: first $n-2$ goods of value $J$, then one good of value $1$, and then $r$ goods of value $b$. If the predictions were exact, the last good would have common value $a$.

Consider an EEFX allocation of the instance with this exact continuation. We claim that the $n-1$ goods of value $J$ or $1$ are allocated to different agents, and that the remaining agent receives all other goods. Consider an agent who receives none of these $n-1$ goods. Her bundle has value at most $a+rb=1$. In any partition showing that her bundle is EEFX for her, each of the $n-2$ goods of value $J$ forms a part by itself. If she does not receive all of the remaining goods, then the last part contains the good of value $1$ and another good of positive value, while her bundle has value less than $1$; removing the latter good contradicts EEFX. Hence, she receives all goods of value $a$ or $b$. Since there can be only one such agent, the claim follows.

Let $i$ be the agent who receives the good of value $1$, and let $j$ be the agent who receives the goods of value $b$. By consistency, all goods of value $b$ are allocated to $j$ before the last good arrives. Now let the last good have common value $\rho$ instead of $a$. The true values other than $J$ are $\rho,1,b,\ldots,b$ in nonincreasing order, and each of them is between the corresponding predicted value among $1,a,b,\ldots,b$ and $\rho$ times this value, so the predictions are underestimates with error factor $\rho$.

Consider an agent in $\{i,j\}$ who does not receive the last good. Her bundle has value at most $1$, and $\beta J>J/(\rho+1-a)>1$, so in any partition showing that her bundle is $\beta$-EEFX for her, each of the $n-2$ goods of value $J$ forms a part by itself, and the last part contains all other goods outside her bundle. If this agent is $j$, her value is $rb=1-a$, while the last part contains the goods of values $1$ and $\rho$, so her factor is at most $(1-a)/\rho$. If this agent is $i$, her value is $1$, while the last part contains the good of value $\rho$ and all goods of value $b$, so removing a good of value $b$ leaves value $\rho+1-a-b$, and her factor is at most $1/(\rho+1-a-b)$. Since $(1-a)^2\leq1=\rho a$, we have $(1-a)(\rho+1-a)\leq\rho$, so $(1-a)/\rho\leq1/(\rho+1-a-b)$. In both cases, the factor is less than $\beta$, a contradiction.
\end{proof}

\subsection{Proof of \texorpdfstring{\cref{thm:eefx-relative}}{Theorem 6.1}: Lower Bound for Overestimates}\label{app:eefx-over-local-lower}

We prove the impossibility result for overestimates in \cref{thm:eefx-relative}.

\begin{proof}
Let $1<\rho<2^{1/(2n-2)}$. Every agent's predicted multiset is $\{1,\rho^{-1},\ldots,\rho^{-(2n-2)}\}$; since $\rho^{2n-2}<2$, all of these values are larger than $1/2$. The goods arrive with common values: first $2n-2$ goods of values $\rho^{-1},\ldots,\rho^{-(2n-2)}$, and then one last good. If the predictions were exact, the last good would have common value $1$.

Consider an EEFX allocation of the instance with this exact continuation. Every agent receives at least one good: otherwise, an agent with an empty bundle has $2n-1$ goods of positive value outside her bundle, so any partition of these goods into $n-1$ parts has a part with at least two goods, and removing one of them leaves positive value. Since there are $2n-1$ goods, some agent receives exactly one good. Suppose that this good has value less than $1$. In any partition showing that her bundle is EEFX for her, the good of value $1$ forms a part by itself, since otherwise removing another good from its part would leave value at least $1$. If $n=2$, no part remains for the other good outside her bundle. If $n\geq3$, the remaining $2n-3$ goods outside her bundle lie in at most $n-2$ parts, so some part contains at least three goods, and removing one of them leaves value larger than $1$. Both cases contradict EEFX. Hence, any agent who receives exactly one good receives the good of value $1$, so exactly one agent receives one good and every other agent receives two goods. In particular, the agent who receives the last good has no good before it arrives. By consistency, the algorithm must leave some agent without any good after the first $2n-2$ goods have been allocated.

Now let the last good have common value $u=\rho^{-(2n-1)}$ instead. The true values are then exactly the predicted values divided by $\rho$, so the predictions are overestimates with error factor $\rho$. By the argument above for empty bundles, the agent $i$ without any good must receive the last good for the allocation to be $\beta$-EEFX for any $\beta>0$. Her value is then $u$, and the goods outside her bundle have values $\rho^{-1},\ldots,\rho^{-(2n-2)}$. For any partition of these $2n-2$ goods into $n-1$ parts, some part has value at least $\rho^{-1}$ after removing some good from it. Indeed, if the part containing the largest good contains another good, we remove the latter good. Otherwise, $n\geq3$, the remaining $2n-3$ goods lie in $n-2$ parts, and some part contains at least three goods, so removing one of them leaves value larger than $1$. Hence, the bundle of agent $i$ is not $\beta$-EEFX for her for any $\beta>u/\rho^{-1}=\rho^{-2(n-1)}$.
\end{proof}

\subsection{Proof of \texorpdfstring{\cref{thm:eefx-over-impossible}}{Theorem 6.2}}\label{app:eefx-over-impossible}

We prove \cref{thm:eefx-over-impossible} using instances with identical predictions and identical true valuations in which every good has positive value. In each instance, consistency forces an agent to receive no good before the last arrival, and we then make the last good much less valuable than predicted.

\begin{proof}[Proof of \cref{thm:eefx-over-impossible}]
For two agents and $\rho<2$, \cref{thm:eefx-relative}(ii) gives the positive factor $(2-\rho)/\rho^2$. Since EEFX is equivalent to EFX for two agents, it remains to show that no consistent algorithm guarantees a positive approximation of EFX when the predictions are overestimates with error factor $2$; this also covers every larger error factor.

Suppose for contradiction that some consistent algorithm guarantees $\beta$-EFX for some $\beta>0$, and let $h\geq2$ be an integer such that $1/(2(2^h-2))<\beta$. Both agents' predicted multisets are $\{2^0,2^1,\ldots,2^h\}$, and the first $h$ goods arrive with common values $2^0,\ldots,2^{h-1}$, in this order. If the predictions were exact, the last good would have common value $2^h$, which is larger than the total value $2^h-1$ of the first $h$ goods. If both agents received a good before the last arrival, then the agent who receives the last good would also have an earlier good of positive value; removing this earlier good would leave her bundle with value at least $2^h$, which is larger than the value of the other agent's bundle. Hence, consistency forces the algorithm to allocate all of the first $h$ goods to one agent.

Now let the last good have common value $1/2$ instead. The true values are then exactly half of the predicted values, so the predictions are overestimates with error factor $2$. The agent without any good must receive the last good for the allocation to be $\beta$-EFX. Her bundle then has value $1/2$, whereas removing the good of value $1$ from the other agent's bundle leaves value $2^h-2$. Hence, her factor is at most $(1/2)/(2^h-2)<\beta$, a contradiction.

\paragraph{Three or more agents.}
Fix $n\geq3$, and suppose for contradiction that some consistent algorithm guarantees $\beta$-EEFX for some $\beta>0$ when all agents' predictions are overestimates with error factor $3/2$. Let $q=3/2$, and let $h\geq4$ be an integer such that $q^{2-h}<\beta$. Let $\delta=q^{-h}$ and $Q=\{q^{-1},\ldots,q^{-h}\}$, so that the values in $Q$ sum to $T=2-2\delta$. We identify these distinct values with goods. No two disjoint nonempty subsets of $Q$ have the same sum. Indeed, otherwise, multiplying the equality of their sums by $q^h$ would yield a nonzero polynomial with coefficients in $\{-1,0,1\}$ having $q=3/2$ as a root; since its leading coefficient is $1$ or $-1$, this contradicts the rational root theorem.

Let $J>3$. Every agent's predicted multiset consists of $n-3$ copies of $J$, the value $1$, and the values in $Q$. The goods arrive with common values: first the goods of value $J$, then the goods of $Q$, and finally one last good. If the predictions were exact, the last good would have common value $1$. We show that in every EEFX allocation of this exact continuation, the goods of value $J$ are allocated to different agents, and the good of value $1$ forms the bundle of another agent.

We first show that there do not exist three disjoint bundles contained in $Q$ such that, for each of them, the goods of $\{1\}\cup Q$ outside the bundle can be partitioned into two parts toward which the bundle satisfies EFX; some goods of $Q$ may belong to none of the three bundles. An empty bundle cannot satisfy this condition, since there are $h+1>2$ goods of positive value to be partitioned into two parts. Suppose that one of the bundles has value at least $1$. At most one of the other two bundles contains the good of value $\delta$; let $X$ be one that does not, and let $x$ be its value. Since $T<2$, we have $x<1$. Hence, the good of value $1$ must form a part by itself in the partition for $X$, so the other part is $Q\setminus X$, whose smallest value is $\delta$. Thus $x\geq T-x-\delta$, i.e., $x\geq(T-\delta)/2=1-3\delta/2$. Together with the bundle of value at least $1$, this exceeds the total value $T=2-2\delta$ of $Q$, a contradiction. Consequently, all three bundles have value less than $1$.

At least two of the three bundles do not contain the good of value $\delta$, and by the same argument, each of them has value at least $(T-\delta)/2$. Hence, the third bundle and the goods of $Q$ outside the three bundles have total value at most $\delta$. Since the third bundle is nonempty and $\delta$ is the smallest value in $Q$, it consists of the good of value $\delta$ alone. Its partition must then consist of the parts $\{1\}$ and $Q\setminus\{\delta\}$, and the smallest value in the latter part is $q\delta$. The EFX condition would therefore require $\delta\geq T-\delta-q\delta$. However, for $h\geq4$,
\[
T-(q+2)\delta
=2-(q+4)\delta
\geq 2-\frac{11}{2}\left(\frac23\right)^4
=\frac{74}{81}>0,
\]
a contradiction. Thus, three such bundles do not exist.

Now consider an EEFX allocation of the exact continuation. An agent with no good of value $J$ has a bundle of value at most $1+T<3<J$, so in any partition showing that her bundle is EEFX for her, each good of value $J$ forms a part by itself, and the goods of $\{1\}\cup Q$ outside her bundle lie in the two remaining parts. If the $n-3$ goods of value $J$ had fewer than $n-3$ recipients, then at least four agents would have no good of value $J$, and at least three of them would not have the good of value $1$ either; their bundles would be three disjoint bundles contained in $Q$ satisfying the condition ruled out above. Hence, the goods of value $J$ have $n-3$ different recipients, and exactly three agents have no good of value $J$. The good of value $1$ must be allocated to one of these three agents, since otherwise their bundles would again be three disjoint bundles contained in $Q$ satisfying this condition.

Let $X$ and $Y$ be the bundles of the other two agents with no good of value $J$, let $x$ and $y$ be their values, and let $L$ be the total value of the goods in $Q\setminus(X\cup Y)$, so that $x+y+L=T$. Suppose that $L>0$. If $x\geq1$, then $y<1$, so in any partition showing that $Y$ is EEFX for its recipient, all goods of value $J$ and the good of value $1$ form parts by themselves, and the remaining part is $Q\setminus Y$. Since $Q\setminus(X\cup Y)$ is nonempty with total value $L$, the smallest value in $Q\setminus Y$ is at most $L$, so
\[
y\geq x+L-\min(Q\setminus Y)\geq x\geq1,
\]
a contradiction. By symmetry, both $x$ and $y$ are less than $1$, and the same reasoning gives
\[
x\geq y+L-\min(Q\setminus X)\quad\text{and}\quad
y\geq x+L-\min(Q\setminus Y).
\]
Let $\ell$ be the smallest value in $Q\setminus(X\cup Y)$. Adding the two inequalities yields
\[
2L\leq\min(Q\setminus X)+\min(Q\setminus Y)\leq2\ell\leq2L,
\]
so equality holds throughout. Since all values are positive, $\ell=L$ implies that $Q\setminus(X\cup Y)$ consists of a single good, and the two inequalities then imply $x=y$. If $x=y=0$, then $X$ and $Y$ are empty and $Q$ consists of a single good, contradicting $h\geq4$. Otherwise, $X$ and $Y$ are disjoint nonempty subsets of $Q$ with the same sum, a contradiction. Hence $L=0$.

It follows that all goods of $Q$ are allocated to these two agents, so each good of value $J$ and the good of value $1$ forms the bundle of a different agent. In particular, consistency forces the algorithm to leave some agent without any good before the last good arrives. Now let the last good have common value $u=q^{-(h+1)}$ instead of $1$. The goods of value $J$ are unchanged, and the true values of the other goods are exactly the predicted values $\{1\}\cup Q$ divided by $q$. Hence, the predictions are overestimates with error factor $3/2$. (For $n=3$, there are no goods of value $J$, and every true value is $2/3$ of the corresponding predicted value.)

The agent without any good before the last arrival must receive the last good for the allocation to be $\beta$-EEFX, since otherwise there are more than $n-1$ goods of positive value outside her bundle, so some part of any partition contains at least two of them. Her bundle then has value $u$. Since $u<\beta q^{-3}<\beta J$, in any partition showing that her bundle is $\beta$-EEFX for her, each good of value $J$ forms a part by itself, and the goods of $Q$ lie in the two remaining parts. Two of the three largest goods of $Q$ lie in the same part, and removing the larger of them leaves value at least $q^{-3}$. Hence, her factor is at most $u/q^{-3}=q^{2-h}<\beta$, a contradiction.

In both constructions, the exact and inaccurate continuations have the same number of goods and the same predictions, and the restrictions on the allocation of the first goods follow from consistency for every sequence of allocation decisions. Consequently, the arguments also apply to randomized algorithms.
\end{proof}

\end{document}